\begin{document}

\newtheorem{theorem}{Theorem}[section]
\newtheorem{lemma}{Lemma}[section]
\newtheorem{corollary}{Corollary}[section]
\newtheorem{claim}{Claim}[section]
\newtheorem{proposition}{Proposition}[section]
\newtheorem{definition}{Definition}[section]
\newtheorem{fact}{Fact}[section]
\newtheorem{example}{Example}[section]

\newcommand{\cA}{{\cal A}}
\newcommand{\cP}{{\cal P}}
\newcommand{\cC}{{\cal C}}
\newcommand{\cG}{{\cal G}}
\newcommand{\cN}{{\cal N}}
\newcommand{\cU}{{\cal U}}
\newcommand{\cT}{{\cal T}}
\newcommand{\cS}{{\cal S}}
\newcommand{\cL}{{\cal L}}
\newcommand{\cV}{{\cal V}}
\newcommand{\loc}{{\cal LOCAL}}

\newcommand{\f}[4][0pt]{%
\begin{list}{#2}{%
 \setlength{\leftmargin}{#3 em}\addtolength{\leftmargin}{#3 em}\addtolength{\leftmargin}{1 em}%
 \setlength{\labelsep}{#3 em}\addtolength{\labelsep}{#3 em}
 \setlength{\labelwidth}{20pt}
 \if!#1!\else\addtolength{\leftmargin}{#1}\fi%
 \setlength{\topsep}{2pt}
 \setlength{\partopsep}{0pt}%
 \if!#1!\else\setlength{\itemindent}{-#1}\fi%
}%
   \item  #4%
\end{list}%
}

\newcommand{\qed}{\hfill $\square$ \smallbreak}
\newenvironment{proof}[1][Proof]
{\par\noindent{\bf #1:} }{\hspace*{\fill}\nolinebreak{$\Box$}\bigskip\par}

\newcommand{\view}{\mathcal{V}}
\newcommand{\agent}{\lambda}
\newcommand{\lab}{\alpha}
\newcommand{\labels}{\mathcal{L}}
\newcommand{\lista}{\mathcal{Q}}
\newcommand{\code}{\xi}
\newcommand{\home}{h}
\newcommand{\routebegin}{b}
\newcommand{\routeend}{d}
\newcommand{\cR}{\mathcal{R}}
\newcommand{\hist}{\mathcal{H}}
\newcommand{\cQ}{\mathcal{Q}}

\newcommand{\ints}{\mathbb{N}}
\newcommand{\algorithmCL}{{\tt Choose\textup{-}Leader}}
\newcommand{\algorithmUL}{{\tt Update\textup{-}Label}}
\newcommand{\algorithmLE}{{\tt Leader\textup{-}Election}}
\newcommand{\algorithmInit}{{\tt Initialization}}

\newcommand{\last}{\textup{last}}
\newcommand{\depth}{3(n-1)}
\newcommand{\len}{\ell}
\newcommand{\conditionEC}{\textup{EC}}

\title{{\bf  Leader Election for Anonymous Asynchronous Agents in Arbitrary Networks}}

\author{Dariusz Dereniowski\thanks{Gdansk University of Technology,
ETI Faculty,
Department of Algorithms and System Modeling,
ul. Narutowicza 11/12,
80-233 Gda\'{n}sk, Poland {\tt deren@eti.pg.gda.pl}.
This work was done during the visit of Dariusz Dereniowski at the
Research Chair in Distributed Computing of the Universit\'e du
Qu\'ebec en Outaouais.
This author was partially supported by the Polish Ministry of Science and Higher Education (MNiSW) grant N~N516~196437.}
 \and Andrzej Pelc\thanks{
 D\'epartement d'informatique, Universit\'e du Qu\'ebec en Outaouais, Gatineau,
Qu\'ebec J8X 3X7, Canada. {\tt pelc@uqo.ca}. Partially supported by NSERC discovery grant 
and by the Research Chair in Distributed Computing at the
Universit\'e du Qu\'ebec en Outaouais.}
}

\maketitle

\thispagestyle{empty}

\begin{abstract}

We study the problem of leader election among mobile agents operating in an arbitrary network modeled as an undirected graph.
Nodes of the network are unlabeled and all agents are identical. Hence the only way to elect a leader among agents is by exploiting asymmetries
in their initial positions in the graph. Agents do not know the graph or their positions in it, hence they must gain this knowledge
by navigating in the graph and share it with
other agents to accomplish leader election. This can be done using meetings of agents, which is difficult because of their asynchronous nature: 
an adversary has total control over the speed of agents. When can a leader be elected in this adversarial scenario and how to do it?
We give a complete answer to this question by characterizing all initial configurations for which leader election is possible and by constructing an algorithm
that accomplishes leader election for all configurations for which this can be done.

\vspace*{1cm}

{\bf keywords:} leader election,  anonymous network, asynchronous mobile agents

\vspace*{5cm}
\end{abstract}

\pagebreak
\section{Introduction} \label{sec:intro}

\subsection{The model and the problem} \label{subsec:model}

Leader election is one of the fundamental problems in distributed computing, first stated in \cite{LL}. Each entity in some set
 has a Boolean variable initialized to 0 and, after the election, exactly one
of these entities, called the {\em leader}, should change this value to 1. All other entities should know which one becomes the leader.
In this paper we consider the problem of leader election among mobile agents that operate in a network.
We assume that neither nodes of the network nor agents have labels that can be used for leader election. This assumption is motivated by scenarios
where nodes and/or agents may refrain from revealing their identities, e.g., for security or privacy reasons.
Hence it is desirable to have leader election algorithms that do not rely on identities but exploit
asymmetries in the initial configuration of agents due to its topology and to port labelings. 
With unlabeled nodes and agents, leader election is impossible for symmetric initial configurations,
e.g., in a ring  in which ports at each node are 0,1, in the clockwise direction and agents are initially situated at every node. 
Our goal is to answer the following question:
\begin{quotation}
For which initial configurations of agents is leader election possible and how to do it when it is possible?
\end{quotation}

A network is modeled as an undirected connected graph with unlabeled nodes.
It is important to note that the agents have to be able to {\em locally} distinguish ports at a
node: otherwise, the adversary could prevent an agent from choosing a particular edge, thus 
making navigation in the network impossible even in the simple case of trees. 
This justifies a common 
assumption made in the literature:
ports at a node of degree $d$ have arbitrary fixed labelings $0,\dots,$ $d-1$.
Throughout the paper, we will use the term ``graph'' to mean a graph with the above properties.  
We do not assume any coherence between port labelings at various nodes.
Agents can read the port numbers when entering and leaving nodes.

At the beginning, identical agents are situated in some nodes of the graph, at most one agent at each node.
The graph with bicolored nodes (black if the node is occupied, white if it is not) is called an initial configuration.
Agents  do not have labels and have unlimited memory: they are modeled as identical Turing machines.
They execute the same deterministic algorithm.

Agents navigate in the graph in an asynchronous way which is formalized by an adversarial model used in 
\cite{BCGIL,CCGL,CLP,DGKKPV,GP} and described below.
Two important notions used to specify movements of agents are the {\em route} of the agent and its {\em walk}.
Intuitively, the agent chooses the route {\em where} it moves and the adversary describes the walk on this 
route, deciding {\em how} the agent  moves. More precisely,  these notions are defined as follows.
The adversary initially places an agent at some node of the graph.
The route is chosen by the agent and is defined as follows. 
The agent chooses one of the available ports at the current node. 
After getting to the other end of the corresponding edge, the agent learns the port number by 
which it enters and the degree of the entered node. Then it chooses one of the available ports at this node
or decides to stay at this node.
The resulting route of the agent is the corresponding sequence of edges $(\{v_0,v_1\},\{v_1,v_2\},\dots)$,
which is a (not necessarily simple) path in the graph. 

We now describe the walk $f$ of an agent on its route. Let $R=(e_1,e_2,\dots)$ be the route of an agent. Let $e_i=\{v_{i-1},v_i\}$.
Let $(t_0,t_1,t_2,\dots)$, where $t_0=0$, be an increasing sequence of reals, chosen by the adversary, 
that represent points in time. Let $f_i:[t_i,t_{i+1}]\rightarrow [v_i,v_{i+1}]$ be any continuous function, chosen by the adversary, such that $f_i(t_i)=v_i$ and $f_i(t_{i+1})=v_{i+1}$. For any $t\in [t_i,t_{i+1}]$, we define $f(t)=f_i(t)$. 
The interpretation of the walk $f$ is as follows: at time $t$ the agent
is at the point $f(t)$ of its route.  This general definition of the walk and the fact that (as opposed to the route) it is designed by the adversary,
are a way to formalize the asynchronous characteristics of the process.  The movement of the agent can be
at arbitrary speed, the adversary may sometimes stop the agent or move it back and forth, as long as the walk 
in each edge of the route is continuous and covers all of it.
This definition makes the adversary very powerful,
and consequently agents have little control on how they move. This, for example, makes meetings between agents hard to achieve.
Note that agents can meet either at nodes or inside edges of the graph. 

Agents with routes $R_1$ and $R_2$ and with walks $f_1$ and $f_2$ meet at time $t$,
if points $f_1(t )$ and $f_2(t )$ are identical. A meeting is guaranteed for routes $R_1$ and $R_2$,
if the agents using these routes meet at some time $t$, regardless of the walks chosen by the adversary.
When agents meet, they notice this fact and can exchange all previously acquired information. However, if the meeting is inside an edge,
they continue the walk prescribed by the adversary until reaching the other end of the current edge. New knowledge acquired at the meeting
can then influence the choice of the subsequent part of the routes constructed by each of the agents.

Since agents do not know a priori the topology of the graph and have identical memories at the beginning,
the only way to elect a leader among agents is by learning the asymmetries
in their initial positions in the graph. Hence agents must gain this knowledge by navigating in the network and share it with
other agents to accomplish leader election. Sharing the knowledge can be done only as a result of meetings of agents, 
which is difficult because of the asynchronous way in which they move.

It is not hard to see (cf. Proposition \ref{no-bound})  that in the absence of a known upper bound on the size of the graph, leader election is impossible even for asymmetric configurations.  Hence we assume that all agents know a priori a common upper bound $n$ on the size of the graph. This is the only information
about the environment available to the agents when they start the task of leader election.

Having described our model, we can now make the initial problem more precise. Call an initial configuration {\em eligible} if, starting from this configuration,
leader election can be accomplished regardless of the actions of the adversary. Thus in order that a  configuration be eligible, it is
enough to have some leader election algorithm starting from it, even one dedicated to this specific configuration.  Now our problem can be reformulated as follows.

\begin{quotation}
Which initial configurations are eligible?
Find a universal leader election algorithm that  elects a leader regardless of the actions of the adversary, for all eligible configurations in graphs of size at most $n$, where $n$ is known to the agents.
\end{quotation}

\subsection{Our results} \label{subsec:our_results}
Assuming an upper bound $n$ on the size of the graph, known a priori to all agents,
we characterize all eligible initial configurations and construct an algorithm
that accomplishes leader election for all of them. More precisely, we formulate a combinatorial condition
on the initial configuration, which has the following properties. On the one hand,  if this condition does not hold, 
then the adversary can prevent leader election starting from the given initial configuration. 
On the other hand, we construct an algorithm that elects a leader, regardless of the adversary, for all initial
configurations satisfying the condition, in graphs of size at most equal to the given bound $n$.

Intuitively, leader election is possible when the initial configuration is asymmetric and when agents can learn this, regardless
of the actions of the adversary.
Both these requirements are contained in the necessary and sufficient condition on eligibility, which we formulate in Section~\ref{sec:feasibility}. In fact, the process of learning the 
asymmetries by the agents is the main conceptual and  technical challenge in the design and analysis of our algorithm.
Agents acquire and share this knowledge as a result of meetings. The difficulty is to design the algorithm in such a way that all asymmetries be finally learned
by all agents and that  all agents be aware of  this fact  and thus capable to correctly elect the leader. 
\subsection{Related work} \label{subsec:related_work}
Leader election in networks was mostly studied assuming that all nodes have distinct labels and election has to be performed among nodes.
This task was first studied for rings.
A synchronous algorithm, based on comparisons of labels, and using
$O(n \log n)$ messages was given in \cite{HS}. It was proved in \cite{FL} that
this complexity is optimal for comparison-based algorithms. On the other hand, the authors showed
an algorithm using a linear number of messages but requiring very large running time.
An asynchronous algorithm using $O(n \log n)$ messages was given, e.g., in \cite{P} and
the optimality of this message complexity was shown in \cite{B}. Deterministic leader election in radio networks has been studied, e.g., 
in \cite{JKZ,KP,NO} and randomized leader election, e.g., in \cite{Wil}. In \cite{HKMMJ} the leader election problem is
approached in a model based on mobile agents for networks with labeled nodes.

Many authors \cite{ASW,AtSn,BV,DKMP,Kr,KKV,Saka,YK,YK3} studied various computing
problems in anonymous networks. In particular, \cite{BSVCGS,YK3} characterize message passing networks in which
leader election can be achieved when nodes are anonymous. In \cite{YK2} the authors study
the problem of leader election in general networks, under the assumption that labels are
not unique. They characterize networks in which this can be done and give an algorithm
which performs election when it is feasible. They assume that the number of nodes of the
network is known to all nodes. In
 \cite{FKKLS}  the authors
study feasibility and message complexity of sorting and leader election in rings with
nonunique labels, while in \cite{DP} the authors provide algorithms for the
generalized leader election problem in rings with arbitrary labels,
unknown (and arbitrary) size of the ring and for both
synchronous and asynchronous communication. 
Characterizations of feasible instances for leader election and naming problems have been provided in~\cite{C,CMM,CM}.
Memory needed for leader election in unlabeled networks has been studied in \cite{FP}. 

The asynchronous model for mobile agents navigation in unlabeled networks has been previously used in \cite{BCGIL,CCGL,CLP,DGKKPV,GP} 
in the context of rendezvous between two  agents. In \cite{BCGIL,CCGL,CLP,DGKKPV} agents had different labels and in \cite{GP} agents were
anonymous, as in our present scenario. The synchronous model, in which agents traverse edges in lock-step, has been used, e.g., in \cite{CKP,DiPe,TSZ07},
also in the context of rendezvous.

\subsection{Roadmap} \label{subsec:roadmap}

In Section~\ref{sec:memory_states} we formalize the description of how agents decide. This concerns both navigation decisions (on what basis the agents construct their routes) and the final decision who is the leader. We define memory states of the agents that are the basis of all these decisions. In Section~\ref{sec:feasibility} we formulate the combinatorial condition EC
concerning initial configurations that is then proved to be equivalent to eligibility, and we formulate our main result. In Section~\ref{sec:negative} we prove two negative results concerning leader election:
one saying that condition EC is necessary for eligibility and the other saying that the assumption concerning knowledge of the upper bound cannot be removed.
In Section~\ref{sec:algorithm} we give our main contribution: we construct a universal algorithm electing a leader for all configurations satisfying condition EC, if agents know
an upper bound on the size of the graph. Section~\ref{sec:conclusions} contains conclusions.

\section{Memory states and decisions of agents} \label{sec:memory_states}

In this section we describe formally on what basis the agents make decisions concerning navigation in the graph (i.e., how they construct their routes)
and on what basis they make the decision concerning leader election. All these decisions depend on the {\em memory states} of the agents.
At every time $t$ the memory state of an agent
is a finite sequence of symbols defined as follows. Before an agent is woken up by the adversary, its memory state is blank: it is the empty sequence.
When an agent is woken up, it perceives the degree $d$ of its initial position, i.e., its memory state becomes the sequence $(d)$. Further on, the memory state of an
agent changes when it visits a node. It is caused by 
the following three types of events: entering a node by the agent, meeting other agents, and leaving a node by the agent.
A change of a memory state of an agent is done by appending to its current memory state a sequence of symbols defined as follows. The change due to
entering a node of degree
$d$ by port number $p$, consists of appending the sequence $(p,d)$ to the current memory state of the agent. The change due to leaving a node by port $q$  
consists of appending $q$
to the current memory state of the agent. The change due to meeting other agents is defined as follows.
When entering a node $v$ the agent considers all meetings with other agents that occurred since leaving the previous node.
Suppose that the current memory states of the agents met in this time interval by agent $\lambda$ were $\sigma_1,\dots, \sigma_k$, in lexicographic order, regardless of the order of meetings
in this time interval and disregarding repeated meetings corresponding to the same memory state (and thus to the same agent). 
Agent $\lambda$ appends the sequence of symbols $([\sigma_1]...[\sigma_k])$
 to its current memory state. When two or more of these events occur simultaneously, for example an agent meets another agent when it enters
a node, or an agent meets simultaneously several agents, then the appropriate sequences are appended to its current memory one after another, in lexicographic order.
When in the previous memory state the agent made a decision to stay idle at the current node, then its memory state can change only if and when some other agent enters this node.
This completes the description of how the memory states of agents evolve. 
Notice that after traversing an edge the action of agent $\lambda$ consisting of appending a sequence of symbols $[\sigma]$
due to a meeting with an agent with current memory state $\sigma$ since leaving the previous node, is performed by $\lambda$ at most once. 
Since the number of agents is finite, this implies that, by any given moment in time, the memory state of an agent has changed only a finite number of times, 
and each time a finite sequence of symbols has been  appended. Hence memory
states are indeed finite sequences of symbols.

The decisions of agents are made always when an agent is at a node and they are of three possible types: an agent can decide to stay idle, it can decide to exit
the current node by some port, or it can elect a leader and stop forever. All these decisions are based on the memory state of the agent after entering the current node and are 
prescribed by the algorithm. (Recall that agents execute the same deterministic algorithm.) If an agent decides to stay at a given node, then it remains idle at it until
another agent  enters this node. At this time the memory state of the idle agent changes, and in the new memory state the agent makes
a new decision. If an agent decides to leave the current node by a given port, it walks in the edge in the way prescribed by the adversary and makes a new decision
after arriving at the other end of the edge. Finally, if an agent decides to elect a leader, it either elects itself, or it decides that it is not a leader, in which case
it has to give a sequence of port numbers leading from its own initial position to the initial position of the leader: 
this is the meaning of the requirement that every non-leader has
to know which agent is the leader.

\section{Feasibility of leader election} \label{sec:feasibility}

In this section we express the necessary and sufficient condition on eligibility of an initial configuration and we formulate the main result of this paper.
We first introduce some basic terminology.

We will use the following notion from \cite{YK3}. Let $G$ be a graph and $v$ a node of $G$.  We first define, for any $l \geq 0$,  the {\em truncated view}
$\cV^l(v)$ at depth $l$, by induction on $l$. $\cV^0(v)$ is a tree consisting of a single node $x_0$. 
If $\cV^l(u)$ is defined for any node $u$ in the graph, then $\cV^{l+1}(v)$ is the port-labeled tree
rooted at $x_0$ and defined as follows.
For every node $v_i$, $i=1,\dots ,k$, adjacent to $v$, 
there is a child $x_i$ of $x_0$ in $\cV^{l+1}(v)$ such that the port number at $v$ corresponding to edge $\{v,v_i\} $ is the same as the port number 
at $x_0$ corresponding to edge $\{x_0,x_i\}$,
and the port number at $v_i$ corresponding to edge $\{v,v_i\} $ is the same as the port number at $x_i$ corresponding to edge $\{x_0,x_i\}$. 
Now node $x_i$, for $i=1,\dots ,k$ becomes 
the root of the truncated view $\cV^l(v_i)$.

The {\em view} from $v$ is the infinite rooted tree $\cV(v)$ with labeled ports, such that $\cV^l(v)$ is its truncation to level $l$, for each $l\geq 0$.
For an initial configuration in which node $v$ is the initial position of an agent, the view $\cV(v)$ is called the view of this agent.

We will also use a notion similar to that of the view but reflecting the positions of agents in an initial configuration.
Consider a graph $G$ and an initial configuration of agents in this graph. Let $v$ be a node occupied by an agent. 
A function $f$ that assigns either $0$ or $1$ to each node of
$\cV(v)$ is called a \emph{binary mapping} for $\cV(v)$.
A pair $(\cV(v),f)$, where $f$ is a binary mapping for $\cV(v)$, such
that $f(x)=1$ if and only if $x$ corresponds to an initial position of an
agent, is called the \emph{enhanced view} from $v$.
Thus, the enhanced view of an agent additionally marks in its view the nodes corresponding to initial positions of other agents in the initial configuration.

For any route $R=(e_1,e_2,\dots, e_k)$ such that $e_i=\{v_{i-1},v_i\}$, we denote $b(R)=v_0$ and $d(R)=v_k$,
and we say that $R$  \emph{leads} from $v_0$ to $v_k$ in $G$.
Since nodes of $G$ are unlabeled, agents traveling on a route are aware only of the port numbers of the edges they traverse.
Hence, it will be usually more convenient  to refer to these sequences of port numbers rather than to the edges of the route.
Any finite sequence of non-negative integers will be called a \emph{trail}.

We define an operator $\cT$, that provides the trail corresponding to a given route.
More formally, if $R=(\{v_1,v_2\},\{v_2,v_3\},\ldots,\{v_{j-1},v_j\})$ is a route in $G$, then define $\cT(R)=(p_1,\ldots,p_{2j-2})$ to be the trail such that $p_{2i-1}$ and $p_{2i}$ are the port numbers of $\{v_i,v_{i+1}\}$ at $v_i$ and $v_{i+1}$, respectively, for $i=1,\ldots,j-1$.
We say that a trail $T$ is \emph{feasible from} $v$ in $G$, if there exists a route $R$ in $G$ such that $\routebegin(R)=v$ and $\cT(R)=T$, and in such a case the route $R$ is denoted by $\cR(v,T)$. 

For a sequence $A=(a_1,\ldots,a_k)$ we denote by $\overline{A}$ the sequence $(a_k,a_{k-1},\dots,a_1)$.
For two sequences $A=(a_1,\ldots,a_k)$ and $B=(b_1,\ldots,b_r)$ we write $(A,B)$ to refer to the sequence $(a_1,\ldots,a_k,b_1,\ldots,b_r)$.

For any agent $\agent$, let $\home(\agent)$ denote its initial position.
Consider two agents $\agent$ and $\agent'$.
Consider any route $R$ leading from $\home(\agent)$ to $\home(\agent')$ and let $T=\cT(R)$. 
If $T=\overline{T}$, then we say that the route $R$ is a palindrome. For a given initial configuration, a palindrome $R$ is called {\em uniform},
if for any route $R'$ such that $\cT(R')=\cT(R)$, whenever $b(R')$ is occupied by an agent, then $d(R')$ is also occupied by an agent.

We are now ready to formulate our condition on an initial configuration, that will be called EC (for eligibility condition) in the sequel:

\begin{center}
Enhanced views of all  agents are different {\em and}\\ (There exist agents with different views {\em or} There exists a non-uniform palindrome)
\end{center}

We now formulate our main result whose proof is the objective of the rest of the paper.

\begin{theorem}\label{main}
Assume that all agents are provided with an upper bound $n$ on the size of the graph.
Then an initial configuration is eligible  if and only if condition {\em CE} holds for this configuration. Moreover, there exists
an algorithm electing a leader for all eligible configurations, regardless of the actions of the adversary.
\end{theorem}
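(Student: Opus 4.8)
The plan is to prove the two directions of Theorem~\ref{main} separately, with the hard work concentrated in the algorithmic (sufficiency) direction. For necessity, I would argue that condition EC is forced: if two agents have the same enhanced view, then there is a port-preserving automorphism of the configuration swapping their initial positions, and the adversary can run the two agents in perfect lock-step (same walk timing on corresponding edges, meetings happening symmetrically), so at every moment their memory states are identical; hence no deterministic algorithm can distinguish them, and neither can be designated leader while the other is not. If all agents have the same view \emph{and} every palindrome is uniform, a similar but more delicate symmetry argument shows the configuration admits a nontrivial automorphism that the adversary can exploit — this is exactly the content promised for Section~\ref{sec:negative}, so I would simply invoke it. (Proposition~\ref{no-bound} handles the necessity of knowing the bound $n$.)

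For sufficiency I would build the universal algorithm in phases. \emph{Phase 1 (local exploration):} each agent, knowing $n$, performs a BFS-like traversal of its $3(n-1)$-neighborhood, reconstructing its truncated view $\cV^{3(n-1)}(\home(\agent))$; by the standard fact from \cite{YK3} that views of depth $n-1$ already determine the whole view in an $n$-node graph, this truncated view is a faithful signature of the agent's position, and the encoding of the route taken so far lets the agent compute trails $\cT(R)$ for routes it has walked. \emph{Phase 2 (information spreading via guaranteed meetings):} the core difficulty is that the adversary controls speeds, so I would use the asynchronous-rendezvous machinery of \cite{GP, DGKKPV}: agents repeatedly traverse carefully chosen closed trails so that any two agents whose trails ``cross'' in the right combinatorial sense are guaranteed to meet regardless of walk timing, and at each meeting they merge their accumulated knowledge (views, port-sequences to other agents' homes, and the set of agent-signatures seen so far). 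The memory-state formalism of Section~\ref{sec:memory_states} records every meeting, so an agent can tell, from its memory state, which other agents' information it has (transitively) absorbed. \emph{Phase 3 (detecting completeness and electing):} once an agent's knowledge is \emph{closed} — it has seen a collection of enhanced views that is consistent with some graph of size $\le n$ and contains no agent it hasn't heard from, certified by the combinatorial structure of trails it has traversed — it can reconstruct (a quotient of) the true configuration, verify EC holds, pick the leader deterministically as the agent with the lexicographically smallest enhanced view if views differ, or, in the all-equal-views case, use the non-uniform palindrome to break symmetry by comparing the occupancy patterns along the two ends of the palindromic trail. Non-leaders output the port-sequence to the leader's home, which they now possess.

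The main obstacle — and I expect this to occupy most of Section~\ref{sec:algorithm} — is \emph{Phase 3's completeness test}: an agent must decide, using only its memory state and the bound $n$, whether it has already gathered \emph{all} the asymmetry information in the configuration, and crucially it must be impossible for the adversary to leave an agent forever uncertain. This requires (a) designing the meeting-generating trails so that in \emph{finite} time every agent's information reaches every other agent — i.e.\ the ``meeting graph'' over the course of the algorithm becomes connected no matter what the adversary does — and (b) giving a purely local certificate of global completeness, so that an agent never elects prematurely on partial information yet is guaranteed to elect eventually. Establishing that the bound $n$ on the graph size suffices to make this certificate checkable (one needs a bound on how far apart two agents' homes can be, and on the depth of views that must be compared), and that the adversary cannot desynchronize the phases across agents in a way that deadlocks the knowledge-merging, is the delicate heart of the construction; everything else is bookkeeping on top of the memory-state machinery already set up.
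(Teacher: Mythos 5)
There is a genuine gap on both sides. For necessity, your key justification is wrong: equality of the enhanced views of two agents does \emph{not} yield a port-preserving automorphism of the configuration swapping their positions (views coincide when the nodes have a common quotient/covering, which in general gives no automorphism at all), and likewise in the ``all views equal and all palindromes uniform'' case no nontrivial automorphism need exist. The paper's argument avoids automorphisms entirely: it fixes the two agents $\lambda,\lambda'$ with equal enhanced views, defines for \emph{every} agent $\mu$ a twin $\mu'$ by transporting a trail from $\home(\lambda)$ to $\home(\mu)$ over to $\home(\lambda')$, and proves by induction on the rounds of a perfectly synchronous adversary that twins always have identical memory states (with a separate palindrome-based invariant when $\beta\vee\gamma$ fails). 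Saying you would ``simply invoke'' the content of Section~\ref{sec:negative} concedes rather than supplies this argument.

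For sufficiency, the plan defers exactly the part that constitutes the proof, and the way you frame it would not work. Requiring that the ``meeting graph'' become connected regardless of the adversary is unattainable: two agents with identical views and identical accumulated knowledge follow identical route-construction rules, and the adversary can keep them perfectly synchronized so that they never meet; no choice of closed trails fixes this. The paper's resolution is structural: meetings are only \emph{guaranteed} between agents whose current labels differ (via the tunnel construction over the triples in $\cP^n$, Lemma~\ref{lem:guaranteed_T_confirmation}), identically-labeled agents can still confirm palindromic trails, and condition EC is used precisely to prove that after phase~1 two agents with \emph{different} labels exist (Lemma~\ref{lem:phase1}); those agents then serve as intermediaries through which, in phases~2 and~3, every agent marks every initial position in its view (Lemmas~\ref{lem:completeTOsemicomplete} and~\ref{lem:phase3}). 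Moreover, your ``local certificate of global completeness'' is both problematic (an agent cannot certify from its memory state that no unheard-from agent exists) and unnecessary: the paper runs a fixed finite schedule of $3$ phases of $|\cP^n|$ stages each and proves correctness of the final election by smallest complete identifier (truncations to depth $n-1$ suffice by Proposition~\ref{prop:No2}) without any adaptive termination detection. Finally, a meeting alone is not enough --- the agent must be able to locate the other agent's home in its own view, which is what the ``confirmation'' mechanism of Definition~\ref{def:confirms} and Lemma~\ref{lem:Tconfirmation} provides and which your proposal does not address.
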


\section{The negative results} \label{sec:negative}

In this section we prove two negative results concerning the feasibility of leader election. The first result shows that
condition CE is necessary to carry out leader election, even if the graph (and hence its size) is known to the agents.

\begin{proposition} \label{pro:negative1}
Suppose that the condition {\em CE} does not hold for the initial configuration. Then there exists an adversary, such that
leader election cannot be accomplished for this configuration, even if the graph is known to the agents.
\end{proposition}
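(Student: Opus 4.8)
The plan is to show that whenever condition CE fails, an adversary can force two agents to have, at every moment in time, "the same" memory state up to a renaming, so that no deterministic algorithm can break the tie between them. There are two ways CE can fail: (i) two agents have identical enhanced views, or (ii) all agents have identical views and every palindrome is uniform. I would handle these two cases essentially separately, after first setting up the common machinery of an adversary that synchronizes a pair (or a group) of agents.

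\medskip

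\textbf{Setting up the synchronizing adversary.} First I would make precise the notion of two agents being \emph{indistinguishable}: there is a port-preserving isomorphism between their enhanced views (in case (i)) or between their views together with the induced action on occupied nodes (in case (ii)). The key lemma to establish is that if agents $\agent$ and $\agent'$ are indistinguishable in this sense, then the adversary can wake them up simultaneously and schedule their walks so that at every time $t$ their memory states are identical. The construction is the standard "lock-step mirroring": the adversary advances $\agent$ and $\agent'$ through corresponding edges of their isomorphic views at exactly the same pace, so that whenever $\agent$ enters a node of degree $d$ by port $p$, so does $\agent'$; whenever $\agent$ leaves by port $q$, so does $\agent'$. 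Because the algorithm is deterministic and depends only on the memory state, as long as the memory states agree the \emph{routes} chosen by the two agents correspond under the view isomorphism, so the adversary can indeed keep them in correspondence. The delicate point — and I expect this to be the main obstacle — is the treatment of meetings: the memory-state update rule appends the sorted list of memory states of agents met since leaving the previous node, so to keep the memory states of $\agent$ and $\agent'$ equal, the adversary must arrange that the \emph{multiset} of memory states of agents met by $\agent$ in each inter-node interval equals that met by $\agent'$. This forces the adversary to synchronize not just the pair $\{\agent,\agent'\}$ but the whole orbit structure: one partitions all agents into indistinguishability classes, picks the isomorphisms coherently, and moves an entire class in lock-step, so that a meeting of $\agent$ with some $\mu$ is mirrored by a meeting of $\agent'$ with the corresponding $\mu'$ in the same class as $\mu$, at the same time. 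Verifying that this can be done consistently for all classes at once (and that meetings inside edges, which the definition allows, can likewise be mirrored) is the technical heart of the argument.

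\medskip

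\textbf{Case (i): two agents with equal enhanced views.} Here the isomorphism $\phi$ of enhanced views of $\agent$ and $\agent'$ maps occupied nodes to occupied nodes, so it induces a bijection on the set of agents' initial positions; extending $\phi$ to a coherent family of view-isomorphisms among all agents, the lock-step adversary above keeps $\agent$ and $\agent'$ in identical memory states forever. Since any leader-election algorithm must, at the moment of stopping, let each agent name the leader by a trail from its own home, agents $\agent$ and $\agent'$ in identical memory states would either both elect themselves (two leaders — contradiction) or both output the \emph{same} trail; but applying $\phi$, that trail leads from $\home(\agent)$ and from $\home(\agent')$ to two images that are swapped by $\phi$, so they cannot both be the (unique) leader's home unless the leader's home is a fixed point of $\phi$ — and one checks $\phi$ can be chosen with no relevant fixed point, or rather that $\agent\ne\agent'$ already forces a contradiction. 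Hence leader election is impossible.

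\medskip

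\textbf{Case (ii): all views equal, all palindromes uniform.} Now pick any two agents $\agent\ne\agent'$; their views are isomorphic via some $\phi$, but $\phi$ need not respect occupancy. The claim is that the adversary can still run $\agent$ and $\agent'$ in lock-step along $\phi$-corresponding routes with equal memory states. The only thing that could break equality is a meeting: suppose along corresponding intervals $\agent$ meets $\mu$ but $\agent'$ meets no corresponding agent, or meets one with a different state. A meeting of $\agent$ with $\mu$ at a point of an edge with trail $T$ from $\home(\agent)$ means $T$ is feasible from both $\home(\agent)$ and $\home(\mu)$ to a common vertex; tracking this through $\phi$ and using that \emph{all} palindromes are uniform, one shows the adversary can always pair such a meeting with a matching meeting for $\agent'$, possibly after re-choosing which agents populate which orbit — this is exactly where "every palindrome is uniform" is used to guarantee that occupied/unoccupied patterns along reversed trails line up. (Recall a palindrome route $R$ from $\home(\agent)$ to $\home(\agent')$ has $\cT(R)=\overline{\cT(R)}$, and uniformity says the occupancy pattern is symmetric along it; this is precisely the combinatorial statement that lets meetings be mirrored.) With memory states kept equal forever, the same contradiction as in case~(i) applies: $\agent$ and $\agent'$ cannot be distinguished, so no correct leader can be elected. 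Combining the two cases, failure of CE implies an adversary preventing leader election even when the graph is known, which is the statement. \qed
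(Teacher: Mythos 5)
Your overall strategy follows the paper's: the same case split (two agents with equal enhanced views, versus all views equal and all palindromes uniform) and an adversary that keeps indistinguishable agents in identical memory states forever, so that the would-be leader's ``partner'' would also elect itself. Your case (i) is essentially the paper's argument: equality of two enhanced views induces a twin pairing of all agents, and under a perfectly synchronous schedule twins retain identical memory states, mid-edge meetings being mirrored twin-to-twin. (Your hedge about fixed points of the pairing can be settled cleanly: following the same trail from two distinct nodes can never lead to the same node at any step, since a node cannot have two incident edges with the same port number; hence the twin of every agent is a distinct agent and the contradiction with uniqueness is immediate.)

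The genuine gap is exactly at the point you yourself call ``the technical heart'': you never verify that meetings preserve equality of memory states, and in case (ii) the plan you sketch would not work as stated. There you synchronize only the chosen pair $\lambda,\lambda'$ along a view isomorphism $\varphi$; but since $\varphi$ need not respect occupancy, when $\lambda$ meets some agent $\mu$ there may simply be no agent at the $\varphi$-corresponding position for $\lambda'$, and the adversary cannot repair this by ``re-choosing which agents populate which orbit'' --- the initial configuration is fixed and the adversary controls only the timing of walks, not where agents start or which routes they choose. The paper's proof avoids this by running \emph{all} agents in lock-step and proving a global invariant: at the start of each round every agent has the same memory state; then, in the subcase with no palindrome, no meeting can ever occur, because a mid-edge meeting under the synchronous schedule forces a one-edge palindrome between the agents' current positions, and a palindrome between positions after a round would entail a palindrome two edges longer or shorter before it; in the subcase where palindromes exist, uniformity pairs the agents at the ends of routes with the same palindromic trail, so meetings happen simultaneously in pairs of agents with identical memory states and the invariant survives. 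These two claims --- that a meeting forces a palindrome, and that uniformity yields the simultaneous pairwise structure of meetings --- are precisely what your sketch asserts (``one shows \dots'') but does not prove; without them the case (ii) argument, and hence the proposition, is not established.
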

\begin{proof}
Fix an initial configuration.
Condition CE can be abbreviated as $\alpha \wedge (\beta \vee \gamma)$, where
$\alpha$ is ``Enhanced views of all  agents are different'', $\beta$ is ``There exist agents with different views'', and $\gamma$ is ``There exists a non-uniform palindrome''. Suppose that EC does not hold. 

First consider the case when $\alpha$ is false.
This means that there exist agents $\lambda$ and $\lambda'$ with the same enhanced view. This in turn
implies that for every agent $\mu$ there exists an agent $\mu'$ that has the same enhanced view as $\mu$. 
Indeed, if $T$ is the trail corresponding to a route that leads from $\home(\lambda)$ to $\home(\mu)$, agent $\mu'$ is the agent whose initial position is at the end of the route corresponding to $T$ and starting at $\home(\lambda')$. For every agent $\mu$ we will call the agent $\mu'$ its twin.
Consider a hypothetical
leader election algorithm and the
``perfectly synchronous'' adversary that starts the execution of the algorithm simultaneously for all agents and 
moves all of them with the same constant speed. Such an adversary induces rounds which are units of time in which
all agents traverse an edge. The beginning of a round coincides with the end of the previous round. 
Hence at the beginning and at the end of each round every agent is at a node. If agents meet inside an edge, they must
meet exactly in the middle of a round in which they traverse an edge in opposite directions.
We will show that the memory state of twins is identical at the end of each round. This implies that leader election is impossible, as an agent elects a leader when it is at a node, and consequently if some agent elects itself as a leader, its twin would elect itself as well,
violating the uniqueness of the leader. 

The invariant that the memory state of twins is identical at the end of each round 
is proved by induction on the round number. It holds at the beginning, due  to the same degree of initial positions of twins.
Suppose that after some round $i$ the memory  states of twins are identical. Consider twins $\mu$ and $\mu'$.
In round $i+1$ they exit by the same port number and enter the next node by the same port number. If in round $i+1$
they don't meet any agent in the middle of the edge, at the end of the round $\mu$ must meet agents with the same
memory states as those met by $\mu'$ (if any), and hence memory states of $\mu$
and $\mu'$ at the end of round $i+1$ are identical. If in round $i+1$ agent $\mu$ meets some agents in the middle of the edge,
then agent $\mu'$ must meet exactly the twins of these agents in the middle of the edge.
By the inductive hypothesis, these twins have the same memory states as agents met by $\mu$ and hence again, at the end of the round the memory states of $\mu$ and $\mu'$ are identical.
This concludes the proof if $\alpha$ is false.

Next consider the case when $\beta \vee \gamma$ is false. This means that views of all agents are identical and 
every palindrome for the initial configuration (if any) is uniform. For any trail $\pi$ that yields a uniform palindrome, 
this gives a partition
of all agents into pairs $(\mu_{\pi}, \mu'_{\pi})$ of agents at the ends of routes that correspond to this trail.

Again we consider the ``perfectly synchronous'' adversary described above. There are two subcases. If there is no palindrome
in the initial configuration, then we prove the following invariant, holding at the beginning of each round,  
by induction on the round number: 
the memory state of all agents is the same and there is no palindrome between agents. The invariant holds at the beginning
by assumption. Suppose it holds after round $i$. In round $i+1$ all agents choose the same port number and enter the
next node by the same port number. There are no meetings in round $i+1$. 
Indeed, the only meeting could be in the middle of an edge but this would mean that agents were joined by a
one-edge palindrome at the beginning of round $i+1$. If a pair of agents were joined by a palindrome after round $i+1$, they
would have to be joined by a palindrome longer or shorter by two edges at the beginning of the round, contradicting the inductive assumption. Hence the invariant holds by induction.

The second subcase is when there is a palindrome in the initial configuration (and hence all such palindromes are uniform). 
Now we prove the following invariant
holding in the beginning of each round:
 the memory state of all agents is the same and every agent is at the end of a palindrome corresponding to the same trail.
The invariant holds at the beginning by the assumption.
 Suppose the invariant holds after round $i$. In round $i+1$ all agents choose the same port number and enter the
next node by the same port number. If after round $i$ no pair of agents were at the ends of an edge with both ports equal,
or they were but agents did not choose this port in round $i+1$, then no meeting occurred and the invariant carries on after round $i+1$.
If after round $i$ every pair of agents were at the ends of an edge with both ports $p$ and the agents chose this port
in round $i+1$, then meetings of agents with identical memory states occurred in pairs in the middle of each joining edge.
Since meeting agents had identical memory state during the meeting, this holds also after round $i+1$ and agents are again
in pairs at the ends of edges with both ports $p$. Thus the invariant holds at the end of round $i+1$.

Hence at the beginning of each round the memory state of all agents is the same, both when $\alpha$ and when
$\beta \vee \gamma$ is violated. This implies that with the ``perfectly synchronous'' adversary leader election is impossible
whenever condition EC is violated for the initial configuration. Notice that the argument holds even when agents know the graph in which they operate.
\end{proof}

Our second negative result shows that the assumption about the knowledge of an upper bound on the size of the graph
cannot be removed from Theorem \ref{main}.

\begin{proposition}\label{no-bound}
There is no algorithm that accomplishes leader election regardless of the adversary for all initial configurations satisfying condition {\em CE}. 
\end{proposition}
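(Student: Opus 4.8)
The plan is to argue by contradiction, combining a ``two copies'' blow-up with the perfectly synchronous adversary used in the proof of Proposition~\ref{pro:negative1}. Suppose there were an algorithm $\cA$ that elects a leader against every adversary, for every initial configuration satisfying condition {\em CE}, when the agents are given no upper bound on the number of nodes. Fix one eligible configuration $C$ in a graph $G$ and run $\cA$ on $C$ against the perfectly synchronous adversary (all agents woken simultaneously, all moving at unit speed). Since $C$ satisfies {\em CE} it is eligible, so $\cA$ elects a unique leader at some finite time $t^{*}$; exactly one agent, say $\lambda^{*}$, sets its output to ``leader''. Up to time $t^{*}$ every agent has traversed at most $t^{*}$ edges, so the entire execution, and in particular $\lambda^{*}$'s decision, depends only on a ball of bounded radius around the agents together with the meetings taking place inside it.

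The core of the argument is to build a larger eligible configuration $C'$ on which $\cA$, under a legal adversary, is driven to elect two leaders. I would let $G'$ consist of two disjoint copies $G_{1},G_{2}$ of $G$, linked by a very long path, and decorated, far from everything, by a small symmetry-breaking gadget (for instance a pendant vertex attached near the linking path at an off-centre position); the agents of $C'$ are the two copies of the agents of $C$. If the linking path is long enough and the gadget is placed far enough, then neither the gadget nor the other copy lies within distance $t^{*}$ of any agent of a given copy, while the off-centre gadget destroys every automorphism of $G'$, so $C'$ is asymmetric. Against the analogous perfectly synchronous adversary, the agents of $G_{1}$ see exactly the same ball and undergo exactly the same meetings, up to time $t^{*}$, as in the run on $C$, and therefore reproduce that run verbatim; in particular the copy of $\lambda^{*}$ in $G_{1}$ elects itself leader at time $t^{*}$, and, by the identical reasoning, so does the copy of $\lambda^{*}$ in $G_{2}$. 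Hence $\cA$ elects two leaders on the eligible configuration $C'$, contradicting its correctness.

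I expect the main obstacle to be making ``out of reach of one another'' precise while simultaneously certifying that $C'$ satisfies {\em CE}. Attaching the linking path and the gadget changes the local picture seen from the vertices near the attachment points, so one must start from a configuration $C$ that already leaves enough room to hide these gadgets beyond the exploration horizon of $\cA$ — or argue that $C$ may be enlarged for free, e.g.\ by first appending long dangling paths far from its agents, without disturbing its eligibility or the portion of it that governs $\cA$'s decisions. This in turn rests on the key point, which also has to be nailed down, that a bound-free algorithm cannot postpone its decision indefinitely on an eligible configuration: since no finite amount of exploration ever certifies that the whole graph has been seen (a larger graph consistent with everything observed so far is always possible), $\cA$ must commit to $\lambda^{*}$ after having inspected a region whose size is not inflated by far-away structure, which is precisely what lets that region be duplicated and hidden inside $C'$. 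Finally one verifies {\em CE} for $C'$: all enhanced views are distinct — this is where the asymmetric gadget is used, as at a large but finite depth it distinguishes the two copies' agents from each other, while within a single copy the enhanced views were already distinct because $C$ was eligible — and the second clause of {\em CE} holds because the gadget supplies a vertex whose view occurs nowhere else, yielding two agents with different views. That the adversary used on $C'$ is legal is immediate, being again perfectly synchronous.
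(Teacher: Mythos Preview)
Your high-level plan---run the hypothetical algorithm on a small eligible instance under a synchronous adversary, then embed two indistinguishable copies in a larger eligible instance so that two agents elect themselves---is exactly the paper's strategy. However, your concrete construction has a genuine gap that you yourself flag but do not close.

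The problem is the circularity between $t^{*}$ and the construction of $C'$. Attaching a long path to $G$ changes the degree of the attachment vertex, so any agent that reaches it behaves differently in $G'$ than in $G$; you therefore need the attachment point to lie outside the $t^{*}$-ball of every agent. But $t^{*}$ is determined by $C$, and there is no reason the graph $G$ you start from has any vertex that far from all agents. Your proposed remedy, to ``enlarge $C$ for free'' by appending dangling paths, does not break the circle: appending a path produces a \emph{new} configuration on which the algorithm may take longer to halt, and nothing prevents $t^{*}$ from growing with the appended length. Concretely, an algorithm may legitimately explore until it discovers a degree-one vertex (or any other global feature) before deciding; on a path of length $L$ this takes time $\Theta(L)$, so the ``exploration horizon'' is never strictly smaller than the distance to the boundary. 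Your compactness sentence (``no finite amount of exploration certifies the whole graph has been seen, hence $\cA$ must commit based on a region not inflated by far-away structure'') conflates termination on each fixed instance with a uniform bound across a family, and is not valid as stated.

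The paper sidesteps all of this by choosing $C$ on a \emph{ring}: a ring of size $m$ with a single reversed-port ``special'' node and two agents. After extracting $t$, the paper builds a ring of size $4tm$ whose port labelling and agent placement repeat with period $m$, plus one extra agent to make the configuration eligible. Because the large ring is \emph{locally identical} to the small one (same port pattern, same agent pattern with period $m$), there are no attachment points at all; an anonymous agent cannot tell, from its memory state, whether it has wrapped around a ring of size $m$ or merely advanced $m$ steps in the larger ring. Two far-apart ``copies'' of the small-ring leader therefore reproduce the small-ring execution verbatim and both elect themselves. If you want to rescue your approach, you need a base instance with this self-similarity property---which essentially leads you to the paper's ring construction.
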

\begin{proof}
Suppose for a contradiction that such a universal algorithm ${\tt A}$ exists. Consider an  ``almost'' oriented ring of size $m$: ports 0,1 are in the clockwise direction at each node except one, where they are counterclockwise. This node is called {\em special}.
The initial configuration on this ring consists of two agents: one at the special node, and one at the neighbor clockwise from it.
Call this configuration $C_1$ (cf. Fig. \ref {fig:negative} (a)).
This configuration satisfies condition CE: agents have different views. Hence algorithm ${\tt A}$ must elect a leader for this configuration, regardless of the adversary. Consider a ``perfectly synchronous'' adversary that starts the execution of the algorithm simultaneously for all agents and 
moves all of them with the same constant speed. It induces rounds corresponding to edge traversals by all agents.
Suppose that a leader is elected for this adversary after $t$ rounds.

Now consider a ring of size $4tm$ in which there are $4t$ special nodes at distances $m$: at these nodes ports 0,1
are in the counterclockwise direction, and in all other nodes they are in the clockwise direction. The initial configuration
consists of $8t+1$ agents. There is an agent at every special node and at every clockwise neighbor of a special node.
Additionally there is an agent at the counterclockwise neighbor of one special node. Call this configuration $C_2$ (cf. Fig. \ref {fig:negative} (b)). 
This configuration satisfies condition
CE. Indeed, due to the single group of three consecutive agents, all agents have distinct enhanced views. On the other hand,
agents at special nodes have a different view from agents at clockwise neighbors of special nodes. Hence algorithm ${\tt A}$ must elect a leader for this configuration as well, regardless of the adversary. Consider the same  ``perfectly synchronous'' adversary
as before.

\begin{figure}[hbt]
	\begin{center}
	\input{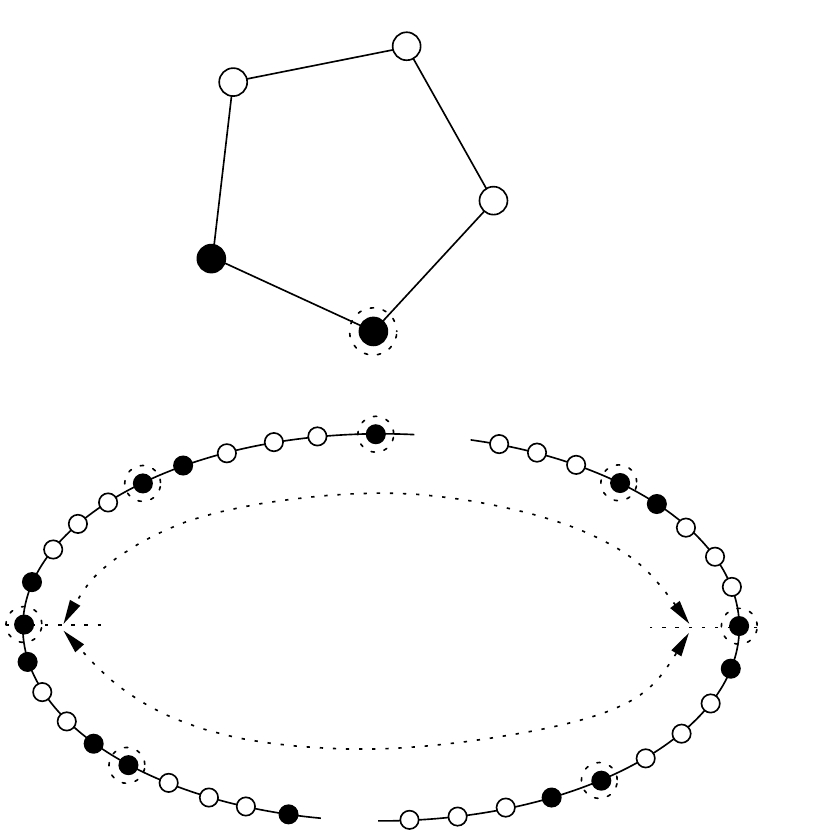_t}
	\caption{(a) configuration $C_1$ in a ring of size $m=5$;
                 (b) configuration $C_2$ in the corresponding ring of size $4tm$. Special nodes are encircled.}
	\label{fig:negative}
	\end{center}
\end{figure}

 Consider the
agent $\lambda$ in the configuration $C_2$ that is initially situated at the special node $v$ antipodal 
to the special node with both neighbors hosting agents. Consider the agent $\lambda'$ initially situated at the
special node $v'$ that is clockwise from $v$ and closest to $v$.
In the first $t$ rounds of the execution of ${\tt A}$ starting from configuration $C_2$, memory states of agents $\lambda$ and $\lambda'$ are identical to memory states of the agent $\mu$ initially situated at the special node of configuration $C_1$. This is due to the large size of the ring in configuration $C_2$.
Hence if in configuration $C_1$ agent $\mu$ elects itself, then in configuration $C_2$ agents $\lambda$ and $\lambda'$ elect 
each of them itself as the leader after $t$ rounds. If in configuration $C_1$ agent $\mu$ elects its neighbor,  then in configuration 
$C_2$ agents $\lambda$ and $\lambda'$ elect each of them their neighbor as the leader after $t$ rounds. In both cases two different agents are elected, which is a contradiction.
\end{proof}

\section{The algorithm and its correctness} \label{sec:algorithm}

In this section we present an algorithm that elects a leader for all initial configurations satisfying condition EC, assuming that an upper bound on the size of the graph
is known to all agents. In view of Proposition \ref{no-bound}, this assumption cannot be removed. This upper bound, denoted by $n$, is an input of our algorithm.

The section is divided into three subsections. In the first subsection we provide additional terminology and notation, as well as some auxiliary results used in the algorithm and in its analysis.
In the second  subsection we give the intuitive overview of the algorithm and its formal description, and we provide some illustrative examples of its functioning.
Finally, the third subsection is devoted to the proof that the algorithm is correct.

\subsection{Additional notions and auxiliary results} \label{subsec:alg:defs}

Let $G$ be a graph and let $v$ be any node of $G$. For any integer $l>0$, we define the \emph{code} of $\view^l(v)$ as the sequence $\code(\view^l(v))$ of integers obtained as follows. Perform the complete depth first search traversal of $\view^l(v)$, starting at its root,  in such a way that at each node an edge with a smaller port number is selected prior to an edge with a larger port number. Then, if $\{x,y\}$ is the $i$-th traversed edge, when going from $x$ to $y$ during the traversal of $\{x,y\}$, then the $(2i-1)$-th and $(2i)$-th elements of $\code(\view^l(v))$ are the port numbers of $\{x,y\}$ at $x$ and $y$, respectively. The following is a direct consequence of this definition.
\begin{proposition} \label{prop:codes_distinguish}
Let $u$ and $v$ be any nodes of $G$ and let $l>0$ be an integer. $\view^l(u)\neq\view^l(v)$ if and only if $\code(\view^l(u))\neq\code(\view^l(v))$.
\qed
\end{proposition}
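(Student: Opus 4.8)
The plan is to show that the code is a \emph{complete invariant} of the truncated view up to port-preserving isomorphism, i.e.\ that $\view^l(v)\mapsto\code(\view^l(v))$ is injective on isomorphism classes; since ``$\view^l(u)=\view^l(v)$'' means exactly that the two rooted port-labeled trees are isomorphic, the proposition follows by taking contrapositives in both directions. (The case where $G$ is a single node is trivial: every view is a one-node tree and every code is empty.)

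First I would record the recursive structure of the code that is implicit in its definition. If $v$ has degree $k$ with ports $p_1<\dots<p_k$ to its neighbours $v_1,\dots,v_k$, and $q_i$ is the port at $v_i$ of the edge $\{v,v_i\}$, then the ``smaller port first'' DFS of $\view^l(v)$ traverses the subtree hanging from the child reached by $p_1$ entirely, then the one reached by $p_2$, and so on; hence
\[
\code(\view^l(v))=(p_1,q_1)\cdot\code(\view^{l-1}(v_1))\cdot(p_2,q_2)\cdot\code(\view^{l-1}(v_2))\cdots(p_k,q_k)\cdot\code(\view^{l-1}(v_k)),
\]
where $\cdot$ is concatenation and $\code(\view^0(\cdot))$ is the empty sequence. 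The forward implication $\view^l(u)=\view^l(v)\Rightarrow\code(\view^l(u))=\code(\view^l(v))$ is then immediate by induction on $l$: a port-preserving isomorphism of the views matches the root ports, the edge ports, and the ordered list of children, and restricts to isomorphisms of the depth-$(l-1)$ subviews, so the inductive hypothesis together with the displayed identity gives equal codes. (Equivalently: the DFS procedure is deterministic and inspects only the isomorphism type of the tree.)

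For the converse I would argue, again by induction on $l$, that the code determines the view, the real content being that the displayed decomposition is the \emph{only} way to parse the code. The three facts I would use are: (i) the DFS of $\view^l(v)$ first descends along a path of length exactly $l$ to a leaf before backtracking, because every node at depth $<l$ has at least one child and DFS exhausts a node's subtree before leaving it; (ii) the root's ports to its children appear in \emph{strictly increasing} order inside the code, and likewise for the children of every internal node; and (iii) for every internal (non-leaf) node of $\view^l(v)$ its \emph{entire} set of incident ports is represented in the code — its parent edge once and all of its child edges — so one can tell when a node has been fully processed. Scanning the code left to right while maintaining the current DFS node and its known incident ports, facts (i)--(iii) force at each step whether the next port pair continues with a further child of the current node or results from backtracking; thus the parse is uniquely recoverable, and applying the inductive hypothesis to the depth-$(l-1)$ blocks $\code(\view^{l-1}(v_i))$ reconstructs $\view^l(v)$.

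The main obstacle is precisely this unique-parsing step: because the code records only the ``downward'' traversal of each edge and carries no explicit backtracking markers, one must verify carefully that the combination of uniform leaf depth (fact (i)) and the strict port ordering among siblings (fact (ii)) leaves no ambiguity — this is where a naive ``greedy'' decoding could otherwise fail. Everything else is routine bookkeeping, which is why the statement is phrased as a direct consequence of the definition.
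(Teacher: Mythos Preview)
The paper offers no proof at all --- the \qed immediately after the statement signals that the authors regard it as an immediate consequence of the definition of the code.

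Your forward direction is fine, but the backward direction has a gap exactly where you flag it. First, you appear to have misread the definition: the paper's ``complete depth first search traversal'' is the full Euler tour, so every edge of $\view^l(v)$ contributes \emph{two} port pairs to the code, one for the downward traversal and one for the upward traversal. The correct recursive decomposition therefore carries explicit backtrack markers,
\[
\code(\view^l(v))=(0,q_1)\,\code(\view^{l-1}(v_1))\,(q_1,0)\,(1,q_2)\,\code(\view^{l-1}(v_2))\,(q_2,1)\cdots(k-1,q_k)\,\code(\view^{l-1}(v_k))\,(q_k,k-1),
\]
and with those present (together with the fixed value of $l$ from the statement) the inductive reconstruction of the tree is straightforward --- there is no ``main obstacle''. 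Second, under your downward-only reading, fact~(iii) does not deliver what you claim. Knowing the parent port of an internal node $x$ (from the inport when $x$ was first entered) together with the child ports $0,\ldots,c$ seen so far does \emph{not} reveal the degree of $x$; so after returning from child $c$ and reading a next outport equal to $c+1$, nothing in (i)--(iii) decides locally whether this is a further child of $x$ or, instead, $x$ is exhausted and $c+1$ is the next child of an ancestor whose last-visited child also happened to carry port $c$. You identify unique parsing as the crux and then assert it; the sketch does not actually clear it.
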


Let $l>0$ be an integer.
We extend the notion of binary mappings to the truncated views.
We say that $f$ is a \emph{binary mapping} for $\view^l(v)$, if $f$ assigns either $0$ or $1$ to each node of $\view^l(v)$.
If $f$ is a binary mapping for $\view(v)$ (or for $\view^{l'}(v)$ for some $l'>0$), then $(\view^l(v),f)$ (where $l\leq l'$, respectively) refers to $f$ restricted to the nodes of $\view^l(v)$.
Given two binary mappings $f_1,f_2$ for $\view^l(v)$, we write $f_1\trianglelefteq f_2$ if $f_1(x)\leq f_2(x)$ for each node $x$ of $\view^l(v)$.
If $(\view(v),f)$ is the enhanced view from $v$ and $f'$ is a binary mapping for $\view^l(v)$ such that $f'\trianglelefteq f$, then the pair $(\view^l(v),f')$ is called a \emph{partially enhanced view} from $v$. Intuitively, in a partially enhanced view only some nodes corresponding to initial positions of agents are marked.

Let $(\view(v),f)$ be the enhanced view from $v$, where $v$ is selected so that there exists an agent $\agent$ with $\home(\agent)=v$.
Then, $(\code(\view^{n-1}(v)),f)$ is called the \emph{complete identifier} of $\agent$.
The significance of the notion of a complete identifier is the following.
An agent can never get the entire view or the entire enhanced view, as these are infinite objects.
However, the following propositions from \cite{Norris} show that to differentiate two views or two enhanced views, it is enough to consider their truncations to depth $n-1$.
Thus, as stated in Corollary~$\ref{cor:No2}$, complete identifiers identify agents with different enhanced views.

\begin{proposition} \textup{(\cite{Norris})} \label{prop:No}
For a $n$-node graph $G$ and for all nodes $u$ and $v$ of $G$,
$\view(u)=\view(v)$ if and only if $\view^{n-1}(u)=\view^{n-1}(v)$.
\qed
\end{proposition}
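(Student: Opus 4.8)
The plan is to prove the nontrivial implication — that $\view^{n-1}(u)=\view^{n-1}(v)$ forces $\view(u)=\view(v)$ — by a stabilization argument on a refining chain of equivalence relations; the converse is immediate, since any common truncation of two equal infinite trees is equal. For each $l\geq 0$, define a relation $\equiv_l$ on the node set of $G$ by $u\equiv_l v$ iff $\view^l(u)=\view^l(v)$. Each $\equiv_l$ is an equivalence relation, and from the inductive definition of truncated views one sees that $\equiv_{l+1}$ refines $\equiv_l$: if $\view^{l+1}(u)=\view^{l+1}(v)$, then in particular the subtrees hanging at the children of the root agree, so $\view^l(u)=\view^l(v)$. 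Thus the partitions induced by $\equiv_0,\equiv_1,\equiv_2,\dots$ form a refining chain. Since $\view^0$ is a single node, $\equiv_0$ has exactly one class, and no partition of the $n$ nodes can have more than $n$ classes.

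The key step is a propagation lemma: if $\equiv_l\,=\,\equiv_{l+1}$, then $\equiv_{l+1}\,=\,\equiv_{l+2}$. To prove it, unwind the definition of $\view^{l+2}$. Because a child of the root is uniquely determined by the port number at the root, any isomorphism of truncated views must match the port-$p$ child to the port-$p$ child; hence $\view^{l+2}(u)=\view^{l+2}(v)$ holds iff $u$ and $v$ have the same degree, iff for each port $p$ the edge to the $p$-neighbor carries the same port number at its far endpoint in both cases, and iff the $p$-neighbors $u_p$ and $v_p$ satisfy $\view^{l+1}(u_p)=\view^{l+1}(v_p)$. The identical description, with $l+1$ lowered to $l$, characterizes $\view^{l+1}(u)=\view^{l+1}(v)$. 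So, assuming $\equiv_l\,=\,\equiv_{l+1}$ and $u\equiv_{l+1}v$, the degree and port-label conditions hold and $u_p\equiv_l v_p$ for every $p$; by $\equiv_l\,=\,\equiv_{l+1}$ this upgrades to $u_p\equiv_{l+1}v_p$ for every $p$, and reassembling yields $u\equiv_{l+2}v$. Since $\equiv_{l+2}$ always refines $\equiv_{l+1}$, equality follows. An obvious induction then gives: if $\equiv_l\,=\,\equiv_{l+1}$, then $\equiv_l\,=\,\equiv_{l'}$ for all $l'\geq l$.

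Now count: along the chain the number of classes is nondecreasing, starts at $1$, and is bounded by $n$, so it can strictly increase at most $n-1$ times. Hence there is some $l\leq n-1$ with $\equiv_l\,=\,\equiv_{l+1}$ — for if $\equiv_0,\equiv_1,\dots,\equiv_{n-1}$ were pairwise distinct, then $\equiv_{n-1}$ would already consist of singleton classes, and then trivially $\equiv_{n-1}\,=\,\equiv_n$. By the propagation lemma, $\equiv_{n-1}\,=\,\equiv_{l'}$ for all $l'\geq n-1$. Finally, $\view(u)=\view(v)$ means $\view^{l'}(u)=\view^{l'}(v)$ for every $l'$, i.e. $u\equiv_{l'}v$ for all $l'$; since the chain is stationary from level $n-1$ onward, this is equivalent to $u\equiv_{n-1}v$, i.e. to $\view^{n-1}(u)=\view^{n-1}(v)$, which is exactly the asserted equivalence.

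The main obstacle is getting the propagation lemma and the exact depth bound right: one must be careful that an isomorphism of truncated views necessarily maps the root's port-$p$ child to the root's port-$p$ child, so that the recursion onto neighbors is well-defined and symmetric in $u$ and $v$; and one must argue that the class-count bound is $n-1$ rather than $n$, which hinges precisely on $\equiv_0$ having a single class. Everything else is routine bookkeeping with the inductive definition of $\view^l$.
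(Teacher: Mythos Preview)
Your argument is correct: the partition-refinement/stabilization approach you give is precisely the standard proof of this fact (essentially Norris's original argument), and all the details---the propagation lemma, the port-by-port matching forced by port labels at the root, and the $n-1$ bound coming from $\equiv_0$ having a single class---are handled properly. Note, however, that the paper does not supply its own proof of this proposition: it is quoted from~\cite{Norris} with a bare \textsf{qed}, so there is no in-paper argument to compare against.
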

\begin{proposition} \textup{(\cite{Norris})} \label{prop:No2}
For a $n$-node graph $G$, for all nodes $u$ and $v$ of $G$, if $(\view(u),f)$ and $(\view(v),f')$ are the enhanced views from $u$ and $v$, respectively, then 
$(\view(u),f)=(\view(v),f')$ if and only if $(\view^{n-1}(u),f)=(\view^{n-1}(v),f')$.
\qed 
\end{proposition}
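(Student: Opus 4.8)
The \emph{only if} direction is immediate: truncating equal enhanced views to any common depth yields equal objects. So the whole content is the \emph{if} direction, and the plan is to reprove, with the binary mapping carried along, the stabilization argument underlying Proposition~\ref{prop:No}. Since the binary mapping $f$ of an enhanced view simply records, for each node of $\view(v)$, whether the corresponding node of $G$ is occupied, an enhanced view is nothing but the view of $G$ in which, besides the port labels, every node also carries a label in $\{0,1\}$; I will use this viewpoint throughout.

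For each $l\ge 0$ define a relation $\equiv_l$ on the nodes of $G$ by letting $u\equiv_l v$ precisely when the enhanced truncated views $(\view^l(u),f)$ and $(\view^l(v),f')$ are equal as port- and $\{0,1\}$-labeled rooted trees. Clearly $u\equiv_0 v$ iff $u$ and $v$ have the same occupancy label, and $\equiv_{l+1}$ refines $\equiv_l$ for every $l$. The key local step is the \emph{propagation lemma}: if $\equiv_l$ and $\equiv_{l+1}$ coincide, then so do $\equiv_{l+1}$ and $\equiv_{l+2}$. Indeed, take $u\equiv_{l+1} v$; then $u$ and $v$ have the same occupancy label, and for each port $p$ the neighbour $u_p$ of $u$ reached by $p$ and the neighbour $v_p$ of $v$ reached by $p$ are joined to their roots by edges carrying the same pair of port numbers and satisfy $(\view^l(u_p),f)=(\view^l(v_p),f')$, i.e. $u_p\equiv_l v_p$; by hypothesis $u_p\equiv_{l+1} v_p$, and reassembling these depth-$(l+1)$ enhanced subviews around the common root gives $(\view^{l+2}(u),f)=(\view^{l+2}(v),f')$, i.e. $u\equiv_{l+2} v$. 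Iterating, once $\equiv_k=\equiv_{k+1}$ all the relations $\equiv_k,\equiv_{k+1},\equiv_{k+2},\dots$ coincide.

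Now the counting step. Let $c_l$ be the number of equivalence classes of $\equiv_l$; then $1\le c_0\le c_1\le\cdots\le n$, the last bound holding because $G$ has at most $n$ nodes. If $\equiv_i\ne\equiv_{i+1}$ for every $i\in\{0,\dots,n-2\}$, then $c_0<c_1<\cdots<c_{n-1}$, forcing $c_{n-1}=n$, so $\equiv_{n-1}$ is the identity relation and $\equiv_{n-1}=\equiv_n$ trivially. Hence in every case there is some $k\le n-1$ with $\equiv_k=\equiv_{k+1}$, and by the propagation lemma all $\equiv_l$ with $l\ge k$ coincide; since $k\le n-1$, in particular $\equiv_{n-1}$ equals all of them. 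Therefore, assuming $(\view^{n-1}(u),f)=(\view^{n-1}(v),f')$, i.e. $u\equiv_{n-1} v$, we get $u\equiv_l v$ for every $l\ge 0$, that is $(\view^l(u),f)=(\view^l(v),f')$ for all $l$; as $\view(u)$ and $\view(v)$ are by definition determined by their truncations to all finite depths, and likewise for the binary mappings, we conclude $(\view(u),f)=(\view(v),f')$.

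I expect the only delicate point to be the bookkeeping in the propagation lemma: verifying that a depth-$(l+2)$ enhanced view is faithfully recovered from the root's occupancy label, the level-one port data, and the depth-$(l+1)$ enhanced views of the neighbours. This is the same routine structural fact about views that underlies Proposition~\ref{prop:No}, here augmented only by the observation that the binary labelling restricts and reassembles in exactly the same way as the port structure.
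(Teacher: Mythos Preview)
Your argument is correct. The paper does not actually prove this proposition: it is quoted from Norris's result with a terminating \qed\ and no argument, so there is no ``paper's own proof'' to compare against. What you have written is precisely the standard partition-refinement proof behind Proposition~\ref{prop:No}, with the single addition that the occupancy bit is treated as an extra local node label; since that bit restricts to subtrees and reassembles from children exactly like the port data, your propagation lemma and counting step go through unchanged. One small cosmetic point: in the counting step you implicitly use $c_0\ge 1$, which is trivially true, and that the graph has exactly $n$ nodes (as stated in the hypothesis), so $c_{n-1}\le n$; both are fine, but worth stating explicitly in a polished write-up.
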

\begin{corollary} \label{cor:No2}
For a $n$-node graph $G$ and for any agents $\agent$ and $\agent'$,
the enhanced views from $\home(\agent)$ and $\home(\agent')$ are equal if and only if the complete identifiers of $\agent$ and $\agent'$ are equal.
\qed
\end{corollary}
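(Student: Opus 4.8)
The plan is to derive this from the two propositions just cited, using the single extra observation that the code $\code(\view^{n-1}(v))$ does not merely determine the port-labeled rooted tree $\view^{n-1}(v)$ up to isomorphism, but also pins down a canonical enumeration of its nodes, namely the order in which they are first visited by the canonical depth-first search (smaller port numbers explored first). With such an enumeration in hand, the binary-mapping component of the complete identifier can be read as a plain binary string indexed by this order, and two complete identifiers are equal precisely when their codes are equal and these strings coincide.

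First I would invoke Proposition~\ref{prop:No2}: the enhanced views from $\home(\agent)$ and $\home(\agent')$ are equal if and only if their truncations to depth $n-1$ are equal, i.e. there is an isomorphism $\phi$ of port-labeled rooted trees from $\view^{n-1}(\home(\agent))$ onto $\view^{n-1}(\home(\agent'))$ satisfying $f'(\phi(x))=f(x)$ for every node $x$, where $(\view(\home(\agent)),f)$ and $(\view(\home(\agent')),f')$ are the respective enhanced views. It therefore suffices to show that the existence of such a $\phi$ is equivalent to equality of the complete identifiers. Next I would note that a port-labeled rooted tree admits no nontrivial port-preserving self-map (distinct port numbers at each node force every node to be fixed), so if an isomorphism between the two truncated views exists it is unique; moreover, tracing through the definition of the code, this unique $\phi$ must send the $i$-th node visited by the canonical DFS in $\view^{n-1}(\home(\agent))$ to the $i$-th node visited by the canonical DFS in $\view^{n-1}(\home(\agent'))$. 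By Proposition~\ref{prop:codes_distinguish}, such an isomorphism of port-labeled rooted trees exists if and only if $\code(\view^{n-1}(\home(\agent)))=\code(\view^{n-1}(\home(\agent')))$.

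Finally, assuming the codes are equal and $\phi$ is this DFS-order bijection, the remaining condition $f'(\phi(x))=f(x)$ for all $x$ says exactly that the sequence of $f$-values listed in DFS order for $\view^{n-1}(\home(\agent))$ agrees with the sequence of $f'$-values listed in DFS order for $\view^{n-1}(\home(\agent'))$, which is what it means for the binary-mapping parts of the two complete identifiers to be identical. Putting the three steps together yields the equivalence. I expect the only point requiring care to be the middle step — verifying that the canonical DFS underlying the code indeed realizes the (necessarily unique) tree isomorphism node by node, so that ``equal complete identifiers'' genuinely captures ``same port-labeled tree together with matching binary labels''; once that is pinned down, the rest is an immediate combination of Propositions~\ref{prop:No2} and~\ref{prop:codes_distinguish}.
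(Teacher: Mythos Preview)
Your argument is correct and is exactly the intended derivation: the paper states this corollary with no proof beyond the \qed, treating it as immediate from Propositions~\ref{prop:No2} and~\ref{prop:codes_distinguish} together with the definition of the complete identifier. You have simply made explicit the one point the paper leaves implicit, namely that the canonical DFS underlying $\code(\cdot)$ fixes a node enumeration so that equality of the $f$-components is well-defined and equivalent to the binary mappings agreeing under the unique port-preserving isomorphism.
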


A sequence of the form $\lab=(\code(\view^l(v)),f_1,f_2,\ldots,f_j)$ is called a \emph{label}, if the following conditions hold:
\begin{enumerate}[label={\normalfont(\roman*)}]
 \item $v$ is a node of $G$, and $l>0$ and $j>0$ are integers,
 \item $f_i$ is a binary mapping for $\view^l(v)$ for each $i=1,\ldots,j$.
 \item $f_1$ is the binary mapping for $\view^l(v)$ that assigns $1$ only to the root, and $f_i\trianglelefteq f_{i+1}$ for every index $i=2,\ldots,j-1$.
\end{enumerate}
Moreover, we say that $j$ is the \emph{length} of the label $\lab$, denoted by $\len(\lab)$. Let $\labels_j$ be the set of all labels of length at most $j$.

\subsection{The algorithm} \label{subsec:alg:formulation}

In this section we give a high-level description of the algorithm and its pseudo-code formulation.

An important ingredient of the algorithm are meetings between agents during which information is exchanged.
The method that guarantees that some meetings between pairs of agents will occur uses the idea of tunnels introduced in \cite{CLP}.
The routes $R=(e_1,\ldots,e_j)$ and $R'$ form a \emph{tunnel} if $R'=(e_i,e_{i-1},\ldots,e_1,e_1',\ldots,e_{j'}')$ for some $i\in\{1,\ldots,j\}$ and for some $j'\geq 0$.
Moreover, we say that the route $(e_1,\ldots,e_i)$ is the \emph{tunnel core with respect to} $R$.
Note that if $C$ is the tunnel core with respect to $R$, then $\overline{C}$ is the tunnel core with respect to $R'$.
\begin{proposition} \label{prop:tunnel} \textup{(\cite{CLP})}
Let $\agent_i$ be an agent with route $R_i$, $i=1,2$. If $R_1$ and $R_2$ form a tunnel with the tunnel core $C=(e_1,\ldots,e_c)$, then $\agent_1$ and $\agent_2$ are guaranteed to have a meeting such that $(e_1,e_2,\ldots,e_i)$ and $(e_c,e_{c-1},\ldots,e_i)$ are the routes of the agents traversed till the meeting, where $i\in\{1,\ldots,c\}$.
\qed
\end{proposition}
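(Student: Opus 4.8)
The plan is to prove this by an intermediate-value (continuity) argument, exploiting the fact that $\agent_1$ and $\agent_2$ traverse the tunnel core $C=(e_1,\ldots,e_c)$ in opposite directions. Write $e_k=\{v_{k-1},v_k\}$, so that $R_1$ begins by going $v_0\to v_1\to\cdots\to v_c$ (and then continues) while $R_2$ begins by going $v_c\to v_{c-1}\to\cdots\to v_0$; in particular $b(R_1)=v_0$ and $b(R_2)=v_c$. Fix the walks chosen by the adversary. Since the walk of an agent along a route-edge is, by definition, a continuous function taking values in that edge, an agent stays inside $e_k$ throughout the time it traverses $e_k$. Let $\tau_0\le\tau_1\le\cdots\le\tau_c$ be chosen so that $\agent_1$ traverses its $k$-th route-edge $e_k$ during $[\tau_{k-1},\tau_k]$ (hence $\agent_1$ sits at $v_0$ until $\tau_0$ and is at $v_k$ at time $\tau_k$), and symmetrically let $\sigma_0\le\sigma_1\le\cdots\le\sigma_c$ be such that $\agent_2$ traverses its $k$-th route-edge during $[\sigma_{k-1},\sigma_k]$; since the $k$-th route-edge of $R_2$ is $e_{c-k+1}$, this means $\agent_2$ sits at $v_c$ until $\sigma_0$, is at $v_{c-k}$ at time $\sigma_k$, and lies inside $e_i$ throughout $[\sigma_{c-i},\sigma_{c-i+1}]$, traversing it from $v_i$ toward $v_{i-1}$.

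The argument then proceeds in two stages. First, compare the arrival times of the two agents at the endpoints of $C$. If $\tau_0>\sigma_c$, then $\agent_2$ reaches $v_0$ at time $\sigma_c$ while $\agent_1$ is still at $v_0$, so they meet at $v_0$; symmetrically, if $\tau_c<\sigma_0$, they meet at $v_c$; and since the $\tau$'s and $\sigma$'s are nondecreasing and $c\ge1$, these two situations are mutually exclusive, so in the remaining case $\tau_0\le\sigma_c$ and $\sigma_0\le\tau_c$ (and these degenerate cases disappear entirely under the convention that every walk begins at time $0$). Second, set $g(k)=\tau_k-\sigma_{c-k}$ for $k=0,\ldots,c$, the difference between the times at which the two agents reach $v_k$. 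By the previous sentence $g(0)=\tau_0-\sigma_c\le0$ and $g(c)=\tau_c-\sigma_0\ge0$, so by the discrete intermediate value theorem there is an index $i\in\{1,\ldots,c\}$ with $g(i-1)\le0\le g(i)$. Unwinding the definition of $g$, this says $\tau_{i-1}\le\sigma_{c-i+1}$ and $\sigma_{c-i}\le\tau_i$, i.e.\ the interval $[\tau_{i-1},\tau_i]$ during which $\agent_1$ traverses $e_i$ and the interval $[\sigma_{c-i},\sigma_{c-i+1}]$ during which $\agent_2$ traverses $e_i$ in the opposite direction have a nonempty intersection $[\alpha,\beta]$. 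Parametrise $e_i$ by $[0,1]$ with $v_{i-1}\leftrightarrow0$ and $v_i\leftrightarrow1$, and let $p_1,p_2:[\alpha,\beta]\to[0,1]$ be the (continuous) position functions of the two agents on this interval. A short check of which of $\tau_{i-1},\sigma_{c-i}$ realises $\alpha$ and which of $\tau_i,\sigma_{c-i+1}$ realises $\beta$ gives $p_1(\alpha)-p_2(\alpha)\le0$ (at $\alpha$ either $p_1=0$ or $p_2=1$) and $p_1(\beta)-p_2(\beta)\ge0$ (at $\beta$ either $p_1=1$ or $p_2=0$), so by the intermediate value theorem $p_1(t)=p_2(t)$ for some $t\in[\alpha,\beta]$: at that instant the two agents occupy the same point of $e_i$, i.e.\ they meet, and since $t\in[\tau_{i-1},\tau_i]\cap[\sigma_{c-i},\sigma_{c-i+1}]$ the portions of the routes covered by then are exactly $(e_1,\ldots,e_i)$ and $(e_c,e_{c-1},\ldots,e_i)$.

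The step I expect to be the main obstacle is controlling the adversary's asynchrony inside the continuity argument: one must use that each walk along an edge is a genuine continuous map into that (closed) edge, which is what confines each agent to $C$ over $[\tau_0,\tau_c]$ (resp.\ $[\sigma_0,\sigma_c]$) and makes $p_1,p_2$ well defined and continuous, and one must see that the ``crossing'' of the two opposite traversals forces a common time, which is exactly the discrete monotonicity encoded in the sign change of $g$ and ultimately stems from $R_1$ and $R_2$ running along $C$ with opposite orientations. The remaining points are routine: the degenerate endpoint cases above, the bookkeeping of ``route traversed till the meeting'' at a boundary instant $t=\alpha$ or $t=\beta$ where an agent is exactly at a node of $e_i$, and the possibility that $C$ passes through the same physical edge more than once (which causes no difficulty once $e_i$ is read as the $i$-th edge of the route rather than as a physical edge).
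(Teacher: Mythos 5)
Your argument is correct: the standard crossing argument (overlap of the traversal intervals of some core edge $e_i$ via the sign change of $g$, then the intermediate value theorem applied to the two continuous position functions inside $e_i$), and the boundary/start-time issues you flag are indeed only matters of convention in this model, where both walks begin at time $0$ and each walk is confined to the closed edge currently being traversed. Note that the paper itself gives no proof of this proposition --- it is quoted from \cite{CLP} --- and your proof is essentially the standard tunnel argument from that source, so there is nothing substantive to contrast.
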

Informally speaking, if the routes of two agents form a tunnel, then they are guaranteed to have a meeting with the property that the routes traversed to date by the agents give (by taking one of the routes and the reversal of the other) the tunnel core.

Let $\cS_n$ be the set of all integer sequences with terms in $\{0,\ldots,n-2\}$, whose length is even and equals at most $6(n-1)$. Then, we define
\begin{multline} \nonumber
\cP^n=\big((\lab,\lab',T)\colon \lab,\lab'\in\labels_{3}\textup{ and }\len(\lab)=\len(\lab')\textup{ and } 
T\in\cS_n\textup{ and }(\lab\neq\lab'\lor(\lab=\lab'\land T=\overline{T}))\big),
\end{multline}
and let $\cP_i^n$ be the $i$-th triple in $\cP^n$, $i=1,\ldots,|\cP^n|$.

In our leader election algorithm we will proceed in phases, and in each phase the label of each agent is fixed.
(Due to the fact that the model is asynchronous, the adversary may force the agents to be in different phases in a particular point of time.)
The total number of phases for each agent is $3$.
After the first phase each agent computes its label used in phase $2$.
These labels are defined in such a way that there exist two agents $\agent$ and $\agent'$ with different labels.
The aim of phase $2$ is that agents $\agent$ and $\agent'$ correctly identify each other's initial positions in their respective views.
After phase $3$ every agent can identify the initial positions of all agents in its view and hence is able to perform leader election.

The label of an agent $\agent$ used in phase $p$ is denoted by $\lab_p(\agent)$, $p=1,2,3$.
Label $\lab_1(\agent)$ is computed before the start of phase $1$, and $\lab_{p+1}(\agent)$ is computed at the end of phase $p$, for $p=1,2,3$.
Label $\lab_4(\agent)$ is used to elect the leader at the end of the algorithm.
Each phase is divided into $|\cP^n|$ stages.
By $R_{p,s}(\agent)$ we denote the route traversed by agent $\agent$ till the end of stage $s$ in phase $p$, $p=1,2,3$, $s=1,\ldots,|\cP^n|$.
As we prove later, each agent $\agent$ starts and ends each stage at its initial position $\home(\agent)$.
Let $R_{p,0}(\agent)$ be the route of an agent $\agent$ traversed till the beginning of phase $p$, and hence till the end of phase $p-1$, whenever $p>0$.
Hence, $R_{1,0}(\agent)$ is the route traversed by $\agent$ prior to the beginning of the first phase.

Now we give an informal description of Algorithm $\algorithmLE$.
This algorithm is executed by each agent, and $\agent$ in the pseudo-code is used to refer to the executing agent.
Note that the upper bound $n$ on the number of nodes of $G$ is given as an input.
The pseudo-code of the algorithm and pseudo-codes of its subroutines are in frames.
In the informal description we refer to lines of these pseudo-codes.

 \newcommand{\lstInitcP}{1}
 \newcommand{\lstInitDFS}{2}
 \newcommand{\lstInitSetf}{3}
 \newcommand{\lstInitLab}{4}

First, we discuss Procedure~$\algorithmInit$ that is called at the beginning of Algorithm $\algorithmLE$.
The agent starts by computing $\cP^n$.
This can be done knowing $n$, without any exploration of the graph.
Then, the agent computes $\view^{\depth}(\home(\agent))$ by performing a DFS traversal of $G$ to the depth $\depth$ (line~$\lstInitDFS$).
The function $f^{\agent}$ is set (line~$\lstInitSetf$) to be the binary mapping for $\view^{\depth}(\home(\agent))$ that assigns $0$ to all nodes of $\view^{\depth}(\home(\agent))$ except for the root.
Hence, $(\view^{\depth}(\home(\agent)),f^{\agent})$ is a partially enhanced view from $\home(\agent)$.
The value of $\lab_1(\agent)$, that will be the label of $\agent$ in the first phase, is set to $(\code(\view^{\depth}(\home(\agent))),f^{\agent})$ (line~$\lstInitLab$).

\begin{center}
\fbox{
\begin{minipage}{0.9\textwidth}
\f{}{0}{\textbf{Procedure} $\algorithmInit(n)$}
\f[35pt]{}{1}{\textbf{Input:} An upper bound $n$ on the size of $G$.}
\f{}{0}{\textbf{begin}}
\f{1:}{1}{Compute $\cP^n$.}
\f{2:}{1}{Compute $\view^{\depth}(\home(\agent))$ by performing a DFS traversal of graph $G$ to depth $\depth$ that ends at $\home(\agent)$.}
\f{3:}{1}{Let $f^{\agent}$ be the binary mapping for $\view^{\depth}(\home(\agent))$ that assigns $1$ only to the root of $\view^{\depth}(\home(\agent))$.}
\f{4:}{1}{$\lab_1(\agent)\leftarrow(\code(\view^{\depth}(\home(\agent))),f^{\agent})$}
\f{}{0}{\textbf{end} $\algorithmInit$}
\end{minipage}
}
\end{center}

 \newcommand{\lstLECallInit}{1}
 \newcommand{\lstLEMainLoopStarts}{2}
 \newcommand{\lstLEInternalLoopStarts}{3}
 \newcommand{\lstLETakeTriple}{4}
 \newcommand{\lstLEFirstCaseFeasible}{5}
 \newcommand{\lstLEFirstCaseRoute}{6}
 \newcommand{\lstLESecondCaseFeasible}{7}
 \newcommand{\lstLESecondCaseRoute}{8}
 \newcommand{\lstLECasesEnd}{9}
 \newcommand{\lstLEInternalLoopEnds}{10}
 \newcommand{\lstLECallUL}{11}
 \newcommand{\lstLENewLabel}{12}
 \newcommand{\lstLEMainLoopEnds}{13}
 \newcommand{\lstLEElectLeader}{14}

Now we informally describe the main part of Algorithm $\algorithmLE$, refering to the lines of the pseudo-code given below.
The $p$-th iteration of the main `for' loop in lines $\lstLEMainLoopStarts$-$\lstLEMainLoopEnds$ is responsible for the traversal performed by the agent in phase $p$, $p\in\{1,2,3\}$.
An internal `for' loop in lines $\lstLEInternalLoopStarts$-$\lstLEInternalLoopEnds$ is executed and its $s$-th iteration determines the behavior of $\agent$ in stage $s$.
The stage $s$ of each phase `processes' the $s$-th element $(\lab',\lab'',T)$ of $\cP^n$.
If $\lab_p(\agent)\notin\{\lab',\lab''\}$, then the agent $\agent$ does not move in this stage and proceeds to the next one.
Otherwise let $\lab_p(\agent)=\lab'$ (we describe only this case as the other one is symmetric).
The agent checks in line~$\lstLEFirstCaseFeasible$ whether a certain trail $(T,H)$ is feasible from $\home(\agent)$, and if it is not, then the stage ends.
As we prove later, the verification of the feasibility of $(T,H)$ can be done by inspecting $\view^{\depth}(\home(\agent))$.
If $(T,H)$ is feasible from $\home(\agent)$, then $\agent$ follows the route $\cR(\home(\agent),(T,H))$ (line~$\lstLEFirstCaseRoute$), which guarantees that:
\begin{itemize}
 \item if the agent $\agent$ is located at $\home(\agent)$ at the beginning of stage $s$, then $\agent$ is located at $\home(\agent)$ at the end of stage $s$ (see Lemma~\ref{lem:closed}), and
 \item if the route $\cR(\home(\agent),T)$ leads from $\home(\agent)$ to the initial position of another agent $\agent'$ and $\lab_p(\agent')=\lab''$, then the routes $R_{p,s}(\agent)$ and $R_{p,s}(\agent')$ form a tunnel.
\end{itemize}
The agent ends phase $p$ by updating its label.
This is done by calling Function $\algorithmUL$ in line~$\lstLECallUL$, which produces a binary mapping $f^{\agent}$ that is used to update the label in line~$\lstLENewLabel$.
After completing the tree phases (for $p=1,2,3$) the agent calls Procedure $\algorithmCL$ which completes the task of leader election.

In order to formally describe the trail $H$ mentioned above we need the following notation.
Let $\lab'\in\labels_3$ be a label of length $p\in \{1,2,3\}$, let $s\in\{1,\ldots,|\cP^n|\}$ and let $v$ be a node of $G$.
We define $\hist(\lab',s)$ to be the trail that corresponds to the route performed till the end of stage $s$ of phase $p$ by an agent $\agent'$ whose label equals $\lab'$ in phase $p$, $\lab_p(\agent')=\lab'$, and whose initial position corresponds to the root of the truncated view $\view^{\depth}(\home(\agent'))$ in $\lab'$.
We prove (see Lemma~\ref{lem:history_simulation}) that, for any $\alpha$ and $s$ the trail $\hist(\lab,s)$ can be computed on the basis of $\lab$ and $s$.
The trail $H$ mentioned above is $\overline{\hist(\lab'',s-1)}$.

\begin{center}
\fbox{
\begin{minipage}{0.9\textwidth}
\f{}{0}{\textbf{Algorithm} $\algorithmLE(n)$}
\f[35pt]{}{1}{\textbf{Input:} An upper bound $n$ on the size of $G$.}
\f{}{0}{\textbf{begin}}
\f{1:}{1}{Call $\algorithmInit(n)$.}
\f{2:}{1}{\textbf{for} $p\leftarrow 1$ \textbf{to} $3$ \textbf{do}} 
\f{3:}{2}{   \textbf{for} $s\leftarrow 1$ \textbf{to} $|\cP^n|$ \textbf{do}}
\f{4:}{3}{      $(\lab',\lab'',T)\leftarrow\cP_s^n$}
\f{5:}{3}{      \textbf{if} $\lab_p(\agent)=\lab'$ and $(T,\overline{\hist(\lab'',s-1)})$ is feasible from $\home(\agent)$ \textbf{then}}
\f{6:}{4}{             Follow the route\\ $\cR(\home(\agent),  (T, \overline{\hist(\lab'',s-1)}, \overline{T}))$.}
\f{7:}{3}{      \textbf{else if} $\lab_p(\agent)=\lab''$ and $(\overline{T}, \overline{\hist(\lab',s-1)})$ is feasible from $\home(\agent)$ \textbf{then}}
\f{8:}{4}{             Follow the route\\ $\cR(\home(\agent),  (\overline{T}, \overline{\hist(\lab',s-1)}, T))$.}
\f{9:}{3}{      \textbf{end if}}
\f{10:}{2}{   \textbf{end for}}
\f{11:}{2}{   $f^{\agent}\leftarrow\algorithmUL(M)$, where $M$ is the memory state of $\agent$.}
\f{12:}{2}{   $\lab_{p+1}(\agent)\leftarrow(\lab_{p},f^{\agent})$}
\f{13:}{1}{\textbf{end for}}
\f{14:}{1}{Call $\algorithmCL$.}
\f{}{0}{\textbf{end} $\algorithmLE$}
\end{minipage}
}
\end{center}

Figure~\ref{fig:tunnel} illustrates the routes of a pair of agents $\agent$ and $\agent'$ that execute one iteration of the internal `for' loop in lines $\lstLEInternalLoopStarts$-$\lstLEInternalLoopEnds$ of $\algorithmLE$, for the same values of $p$ and $s$, such that $(\lab_p(\agent),\lab_p(\agent'),T)=\cP_s^n$.
We assume in this example that $\cR(\home(\agent),T)$ leads from $\home(\agent)$ to $\home(\agent')$ in $G$.
Figure~\ref{fig:tunnel} gives the prefixes of two routes $R_{p,s}(\agent)$ and $R_{p,s}(\agent')$ traversed by the two agents.
The fact that the routes form a tunnel (as shown in Figure~\ref{fig:tunnel}) follows from Lemma~\ref{lem:guaranteed_T_confirmation} proven later.
\begin{figure}[hbt]
	\begin{center}
	\input{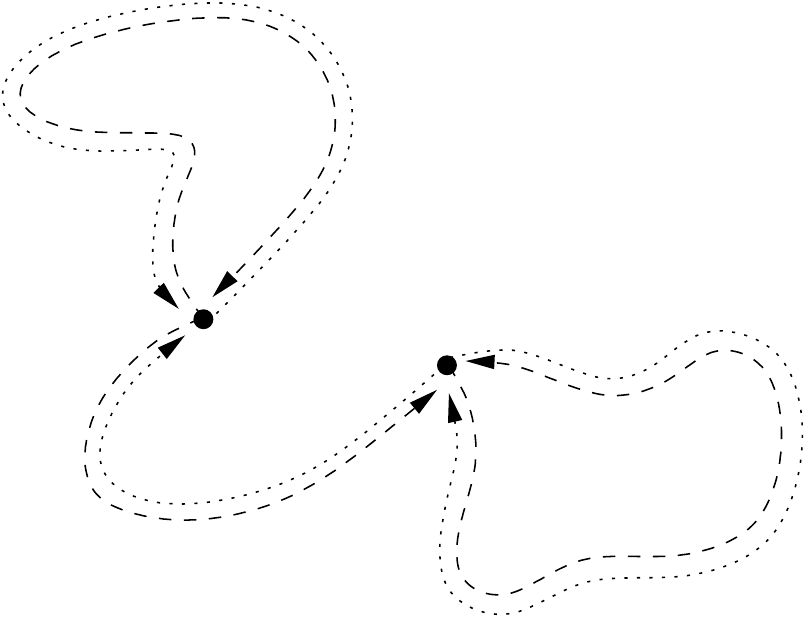_t}
	\caption{The routes $(R_{p,s-1}(\agent),\cR(\home(\agent),  (T, \overline{\hist(\lab'',s-1)})))$ (dashed route) and $(R_{p,s-1}(\agent'),\cR(\home(\agent'),  (\overline{T}, \overline{\hist(\lab',s-1)})))$ (dotted route)}
	\label{fig:tunnel}
	\end{center}
\end{figure}
The routes in Figure~\ref{fig:tunnel} are extended to $R_{p,s}(\agent)$ and $R_{p,s}(\agent')$ once $\agent$ goes from $\home(\agent')$ to its initial position $\home(\agent)$ by following the route $\cR(\home(\agent'),\overline{T})$ in $G$, and $\agent'$ follows $\cR(\home(\agent),T)$ in order to return to $\home(\agent')$, respectively.
Note that the routes $R_{p,s-1}(\agent)$, $R_{p,s-1}(\agent')$ and $\cR(\home(\agent),T)$ are not edge disjoint.

Although we use tunnels, we use them differently from \cite{CLP}.
First, we are able to construct much simpler routes that form tunnels.
This is due to the fact that \cite{CLP} deals with the rendez-vous problem in finite graphs of unknown size and in infinite graphs.
As argued before, for leader election we have to assume that an upper bound $n$ on the size of the graph is known, and we take advantage of knowing $n$ to construct `shorter' tunnels which simplifies our analysis.
Second and more importantly, it is not sufficient for our purposes to just generate a meeting for a particular pair of agents --- the meetings are generated to perform the exchange of information.
In particular, as a result of a meeting that occurs in a tunnel an agent should be able to determine the node (in its own view) corresponding to the initial position of the other agent.
This leads us to the following concept of meetings with `confirmation' of a trail.
\begin{definition} \label{def:confirms}
Let $T\in\cS_n$.
Suppose that agents $\agent$ and $\agent'$ meet.
We say that $\agent$ \emph{confirms} $T$ as a result of this meeting if
\begin{enumerate}[label={\normalfont(\roman*)}]
 \item \label{it:confirms:stage} $\agent$ is in stage $s$ of phase $p$ and $\agent'$ is in stage $s'$ of phase $p'$, where $p'<p$, or $p'=p$ and $s'\leq s$,
 \item \label{it:confirms:label} $\cP_s^n=(\lab',\lab'',T)$, where $\lab_p(\agent)\in\{\lab',\lab''\}$,
 \item \label{it:confirms:trail} if $T_{\agent}$ and $T_{\agent'}$ are the trails traversed by $\agent$ and $\agent'$, respectively, till the meeting, then
  \[(\cT(R_{p,s-1}(\agent)),T', \overline{\hist(\lab,s-1)})=(T_{\agent},\overline{T_{\agent'}}),\]
  where $T'=T$ and $\lab=\lab''$ when $\lab_p(\agent)=\lab'$, and $T'=\overline{T}$ and $\lab=\lab'$ otherwise.
\end{enumerate}
\end{definition}
As we prove in Section~\ref{subsec:alg:analysis}, if an agent $\agent$ confirms $T$ as a result of a meeting with $\agent'$, then $\routeend(\cR(\home(\agent),T'))=\home(\agent')$.

Figure~\ref{fig:verification} depicts the equation in part \ref{it:confirms:trail} of Definition~\ref{def:confirms}. Figure~\ref{fig:verification}(a) presents the trail 
\[(\cT(R_{p,s-1}(\agent)),T', \overline{\hist(\lab,s-1)})\]
that is a perfix of the trail corresponding to the route 
$R_{p,s}(\agent)$  followed by the agent $\agent$ till the end of stage $s$ of phase $p$. Figure~\ref{fig:verification}(b) presents the trails $T_{\agent}$ (dashed line) and $T_{\agent'}$ (dotted line).
\begin{figure}[hbt]
	\begin{center}
	\input{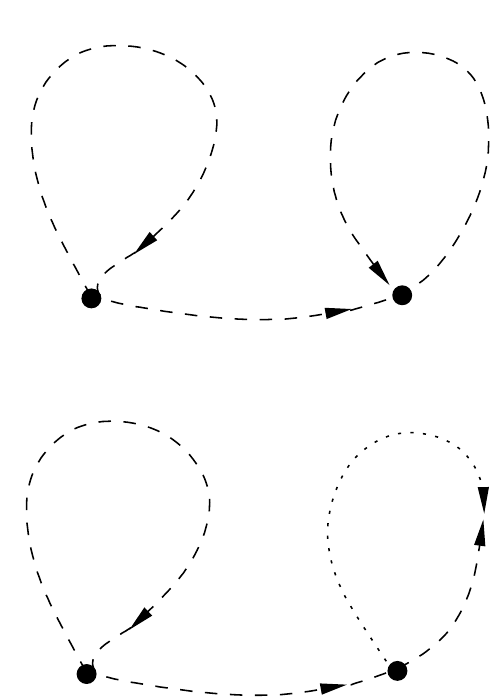_t}
	\caption{(a) the trail $(\cT(R_{p,s-1}(\agent)),T',\overline{\hist(\lab,s-1)})$;
                 (b) a meeting of $\agent$ and $\agent'$ in case when condition \ref{it:confirms:trail} in Definition~\ref{def:confirms} is satisfied}
	\label{fig:verification}
	\end{center}
\end{figure}

It remains to describe Function $\algorithmUL$ and Procedure $\algorithmCL$ that are called in the Algorithm $\algorithmLE$.

 \newcommand{\lstULInitF}{1}
 \newcommand{\lstULMemoryStates}{2}
 \newcommand{\lstULMainLoopStarts}{3}
 \newcommand{\lstULDetermineTrail}{4}
 \newcommand{\lstULIfEssential}{5}
 \newcommand{\lstULNoteOnT}{6}

 \newcommand{\lstULIfShortTrail}{7}
 \newcommand{\lstULFindF}{8}

 \newcommand{\lstULTransition}{9}
 \newcommand{\lstULInternalLoopStarts}{10}
 \newcommand{\lstULUpdateF}{11}
 \newcommand{\lstULIfShortTrailEnds}{12}
 \newcommand{\lstULIfEssentialEnds}{13}
 \newcommand{\lstULMainLoopEnds}{14}
 \newcommand{\lstULReturnF}{15}

\begin{center}
\fbox{
\begin{minipage}{0.95\textwidth}
{\normalsize
\f{}{0}{\textbf{Function} $\algorithmUL(M)$}
\f[35pt]{}{1}{\textbf{Input:} A memory state $M$ of an agent $\agent$.}
\f[35pt]{}{1}{\textbf{Output:} A binary mapping for $\view^{\depth}(\home(\agent))$.}
\f{}{0}{\textbf{begin}}
\f{1:}{1}{Let $f^{\agent}$ be the binary mapping for $\view^{\depth}(\home(\agent))$ that assigns $1$ only to the root of $\view^{\depth}(\home(\agent))$.}
\f{2:}{1}{Let $M_1,\ldots,M_j$ be the memory states of all agents previously met by $\agent$, at the times of the respective meetings.}
\f{3:}{1}{\textbf{for} $i\leftarrow 1$ \textbf{to} $j$ \textbf{do}}
\f{4:}{2}{   Let $T'=T$ if $\lab_p(\agent)=\lab'$, and let $T'=\overline{T}$ if $\lab_p(\agent)=\lab''$, where $(\lab',\lab'',T)=\cP_s^n$.}
\f{5:}{2}{   \textbf{if} $\agent$ or the agent with the memory state $M_i$ confirms $T$ \textbf{then}}
\f{6:}{3}{      $f^{\agent}(x)\leftarrow 1$, where $x$ is the node of $\view^{\depth}(\home(\agent))$ at the end of $T'$ from the root.}
\f{7:}{3}{      \textbf{if} the length of $\cR(\home(\agent),T)$ is at most $n-1$ \textbf{then}}
\f{8:}{4}{      $f'\leftarrow\algorithmUL(M_i)$}
\f{9:}{4}{      Compute transition $\varphi$ from $\view^{2(n-1)}(\home(\agent))$ to $\view^{\depth}(v)$ such that $\varphi(x)$ is the root of $\view^{\depth}(v)$.}
\f{10:}{4}{      \textbf{for each} $y\in\view^{2(n-1)}(v)$ such that $f'(\varphi(y))=1$ \textbf{do}}
\f{11:}{5}{          $f^{\agent}(y)\leftarrow 1$}
\f{12:}{3}{      \textbf{end if}}
\f{13:}{2}{     \textbf{end if}}
\f{14:}{1}{\textbf{end for}}
\f{15:}{1}{\textbf{return} $f^{\agent}$}
\f{}{0}{\textbf{end} $\algorithmUL$}
}
\end{minipage}
}
\end{center}

We start by giving intuition of the first of them.
This procedure is crucial for the entire algorithm, as it takes advantage of memory state exchanges between agents that meet and permits every agent to insert initial positions of all agents in its view.
This in turn allows the agents to learn asymmetries in the initial configuration and thus correctly perform leader election.
Function $\algorithmUL$ takes as an input the current memory state of an agent $\agent$ and returns a binary mapping $f^{\agent}$ for its view $\view^{\depth}(\home(\agent))$, such that $(\view^{\depth}(\home(\agent)),f^{\agent})$ is a partially enhanced view for agent $\agent$.
Agent $\agent$ considers memory states $M_1,\ldots,M_j$ of all previously met agents at the times of the meetings.
The memory state $M_i$, $i\in\{1,\ldots,j\}$, of an agent $\agent'$ and the memory state of $\agent$ at the time of their meeting permit the agent $\agent$ to verify whether $\agent$ or $\agent'$ confirmed $T$ as a result of their meeting.
If one of the agents confirms $T$, then $\agent$ takes the advantage of this fact to determine the nodes of its view corresponding to initial positions of agents.
In particular, $\agent$ is able to locate a node in its own view that corresponds to the initial position of $\agent'$, because there exists a route corresponding to $T$ and connecting the initial positions of the two agents.
Afterwards, if this route is of length at most $n-1$, then $\agent$ recursively calls Function $\algorithmUL$ for the memory state $M_i$ (which is shorter than the current memory state of $\agent$ and thus recursion is correct).
Hence, $\agent$ can compute the binary mappings corresponding to memory states of all previously met agents at the times of the meetings.
Using trails between initial positions of these agents and $\home(\agent)$, as well as the obtained binary mappings, agent $\agent$ can correctly position all partially enhanced views of these agents in its own view.
A call to Function $\algorithmUL$ at the end of phase $p$ executed by agent $\agent$ permits to compute $\lab_{p+1}(\agent)$.

In the formulation of Function $\algorithmUL$ we use the following notions.
Let $u$ and $v$ be two nodes of $G$.
We say that a function $\varphi$ assigning to each node of $\view^{2(n-1)}(u)$ a node of $\view^{3(n-1)}(v)$ is a \emph{transition} from $\view^{2(n-1)}(u)$ to $\view^{3(n-1)}(v)$, if $\varphi(x)$ and $x$ correspond to the same node of $G$ for each node $x$ of  $\view^{2(n-1)}(u)$.
For any trail $T$ and any node $v$, we say that a node $x$ at depth $i$ in $\cV(v)$ is at the end of $T$ from the root, if
the length of $T$ is $2i$ and the sequence of ports corresponding to the simple path from the root of $\cV(v)$ to $x$ is~$T$.

In the following example we illustrate one iteration of the `for' loop in lines~$\lstULMainLoopStarts$-$\lstULMainLoopEnds$ of Function~$\algorithmUL$.
The graph $G$ is given in Figure~\ref{fig:meeting}(a), and let $n=4$ be an upper bound that was initially provided to each agent.
\begin{figure}[htb]
\begin{center}
\includegraphics[scale=0.9]{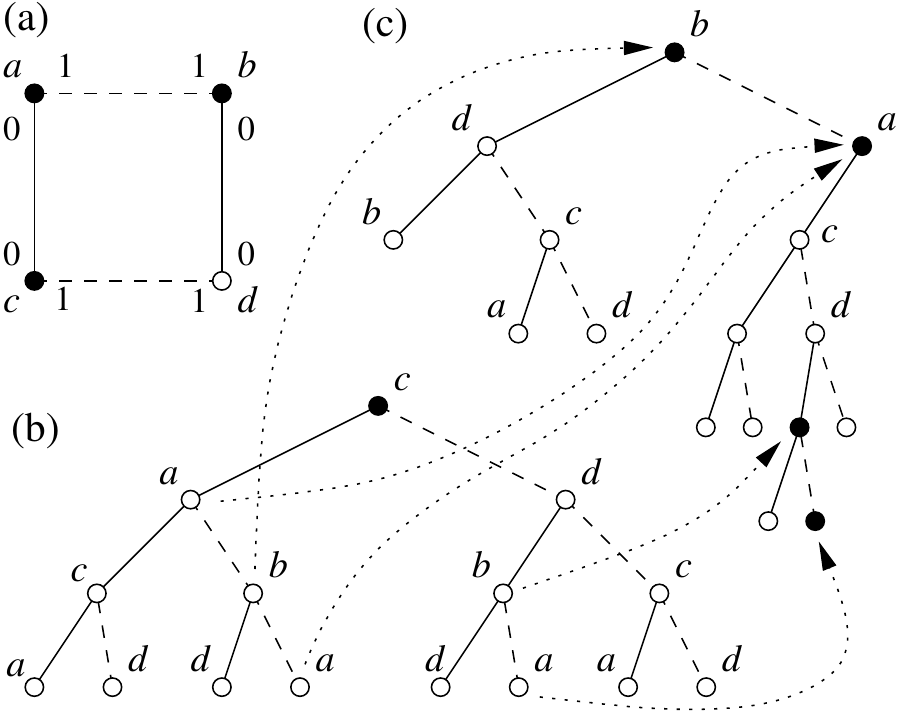}
\caption{(a) a graph $G$; (b) $\view^{n-1}(a)$; a subtree of $\view^{\depth}(b)$}
\label{fig:meeting}
\end{center}
\end{figure}
The black nodes of $G$ are the initial positions of some agents.
Denote by $\agent_a,\agent_b$ and $\agent_c$ the agents whose initial positions are $a,b$ and $c$, respectively.
Note that the views from any two nodes of $G$ are identical in this case.
However, the enhanced view from each node of $G$ is unique.
We focus on the instance of Function~$\algorithmUL$ executed by $\agent_c$ during its meeting with $\agent_b$.
For simplicity, we show only some subtrees of $\view^{\depth}(c)$ and $\view^{\depth}(b)$ in Figures~\ref{fig:meeting}(b) and~\ref{fig:meeting}(c), respectively. Note that the nodes of $G$, and therefore the nodes of any view, are unlabeled and we provide the labels only for the illustrative purpose.
In this example the trail $T'$, computed in line~$\lstULDetermineTrail$ of Function $\algorithmUL$ equals $(0,0,1,1)$, which determines the node of the truncated view $\view^{\depth}(c)$ that corresponds to the root of $\view^{\depth}(b)$.
The black nodes of both views correspond to the initial positions of agents that $\agent_c$ and $\agent_b$ determined prior to this meeting.
The dotted arrows give the part of the transition $\varphi$ that maps the nodes of $\view^{2(n-1)}(c)$ to the black nodes of $\view^{\depth}(b)$.
It follows from the definition of the view that, in general, more than one node of $\view^{n-1}(\home(\agent))$ can be mapped by $\varphi$ to a node of $\view^{\depth}(v)$.

We finally present Procedure $\algorithmCL$ that is called by Algorithm $\algorithmLE$ after the third phase.
The leader is selected by an agent $\agent$ on the basis of the label $\lab_4(\agent)=(\code(\view^{\depth}(\home(\agent))),f_1^{\agent},\ldots,f_4^{\agent})$.
Let $x$ be a node at depth at most $n-1$ in $\view^{\depth}(\home(\agent))$ and satisfying $f_4^{\agent}(x)=1$. Let $S$ be a subtree
of depth $n-1$ of $\view^{\depth}(\home(\agent))$ rooted at $x$. We prove later that  the pair $(S, f_4^{\agent})$ is the complete identifier of some agent.
Since the initial positions of all agents have been detected till the end of phase $3$, the agent can determine all complete identifiers and hence elect the leader.

 \newcommand{\lstCLInitA}{1}
 \newcommand{\lstCLForStarts}{2}
 \newcommand{\lstCLRestrictF}{3}
 \newcommand{\lstCLComputeT}{4}
 \newcommand{\lstCLShouldAdd}{5}
 \newcommand{\lstCLAddToA}{6}
 \newcommand{\lstCLForEnds}{7}
 \newcommand{\lstCLFindMin}{8}
 \newcommand{\lstCLElectLeader}{9}

\begin{center}
\fbox{
\begin{minipage}{0.9\textwidth}
\f{}{0}{\textbf{Procedure} $\algorithmCL$}
\f{}{0}{\textbf{begin}}
\f{1:}{1}{$\cA\leftarrow\emptyset$}
\f{2:}{1}{\textbf{for each} node $x$ of $\view^{n-1}(\home(\agent))$ such that $f_4^{\agent}(x)=1$ \textbf{do}}
\f{3:}{2}{   Let $f'$ be $f_4^{\agent}$ restricted to the nodes of $\view^{n-1}(v)$, where $v$ corresponds to $x$.}
\f{4:}{2}{   Compute trail $T$ such that $v=\routeend(\cR(\home(\agent),T))$.}
\f{5:}{2}{   \textbf{if} $(\code(\view^{n-1}(v)),f')\neq I$ for each $I$ such that $(I,T)\in\cA$ \textbf{then}}
\f{6:}{3}{      $\cA\leftarrow\cA\cup\{((\code(\view^{n-1}(v)),f'),T)\}$}
\f{7:}{1}{\textbf{end for}}
\f{8:}{1}{Find $(I,T)\in\cA$ such that $I=\min\{I'\colon (I',T')\in\cA\}$, where $\min$ is in lexicographic order.}
\f{9:}{1}{Elect the agent whose initial position is $\routeend(\cR(\home(\agent),T))$ to be the leader.}
\f{}{0}{\textbf{end} $\algorithmCL$}
\end{minipage}
}
\end{center}

\subsection{Correctness of the algorithm} \label{subsec:alg:analysis}

This section is devoted to the proof that Algorithm $\algorithmLE$ correctly elects a leader whenever an initial configuration satisfies condition $\conditionEC$, regardless of the actions of the adversary.
The proof is split into a series of lemmas.
The role of the first lemma is to show that knowing $\view^{2n-1}(v)$ is enough to check if a trail of any length is feasible from $v$.

\begin{lemma} \label{lem:view_extension}
Let $v$ be any node of $G$.
Using $\view^{2n-1}(v)$, the truncated view $\view^l(v)$ can be computed for any positive integer $l$.
\end{lemma}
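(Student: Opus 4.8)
The plan is to show that $\view^{2n-1}(v)$ determines $\view(v)$ entirely, and hence each finite truncation $\view^l(v)$. The key fact is that the graph $G$ has at most $n$ nodes, so there are at most $n$ distinct views among its nodes; by Proposition~\ref{prop:No}, two nodes $u,w$ have $\view(u)=\view(w)$ if and only if $\view^{n-1}(u)=\view^{n-1}(w)$. Thus the truncated view at depth $n-1$ is a faithful ``name'' for the view of a node, and the truncated view at depth $2n-1$ records, for every node $x$ appearing at depth at most $n-1$ in $\view(v)$, the full subtree $\view^{n-1}(x)$ hanging below it. This is exactly the information needed to identify which ``view-type'' sits at each such node $x$.

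The construction proceeds as follows. First I would extract from $\view^{2n-1}(v)$ the collection of all truncated subtrees $\view^{n-1}(x)$ for nodes $x$ at depth $\le n-1$ in $\view(v)$ (these are available because $(n-1)+(n-1)=2n-2\le 2n-1$). Using Proposition~\ref{prop:codes_distinguish} (equivalently, just comparing these truncated trees directly), group them into equivalence classes: two nodes get the same class iff their depth-$(n-1)$ truncated views coincide, which by Proposition~\ref{prop:No} means they have the same infinite view. Pick one representative depth-$(n-1)$ truncated view $\tau_1,\dots,\tau_r$ for each class, $r\le n$. Now observe that the local structure is determined by these classes: if $x$ is a node whose view-class has representative $\tau_i$, then the number of children of $x$, the port labels on the edges to those children, and the view-class of each child are all readable from $\tau_i$ itself (a child sits at depth $1$ in $\tau_i$, so its depth-$(n-2)$ truncated view is recorded, and — since all views that occur have a representative among the $\tau_j$ — its depth-$(n-2)$ truncated view already pins down which $\tau_j$ it is, because distinct $\tau_j$ differ already at depth $n-1$; one must check they in fact differ at depth $\le n-2$, or more cleanly, use that the child's subtree in $\view^{2n-1}(v)$ has depth $\ge n-1$ when the child is at depth $\le n$). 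Then define, by induction on $l$, a procedure that builds $\view^l(v)$: start with the root labeled by its view-class; having built the tree down to depth $m<l$ with each leaf tagged by its view-class $\tau_i$, extend each leaf by attaching the children prescribed by $\tau_i$ (with the correct port numbers and child view-classes). After $l$ steps this yields $\view^l(v)$, since at every stage the attached local neighborhood agrees with the true view by the faithfulness of the view-class tags.

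The main obstacle I expect is the bookkeeping needed to argue that a node's view-class is recoverable from a truncation of bounded depth that is actually available within $\view^{2n-1}(v)$ — i.e., making sure that everywhere the induction needs to read off ``which $\tau_i$ am I,'' the relevant node genuinely sits at depth $\le n-1$ so that its depth-$(n-1)$ truncated view is present in $\view^{2n-1}(v)$, and that depth $n-1$ truncations suffice to separate the classes (this is precisely Proposition~\ref{prop:No}, applied to $G$ which has $\le n$ nodes). A clean way to organize it: prove the slightly stronger statement that from $\view^{2n-1}(v)$ one can compute, together with $\view^l(v)$, the view-class tag of every node of $\view^l(v)$ at depth $\le l-(n-1)$, and push the induction through on $l$; the tags at shallow depth feed the extension step, and no tag is ever needed for a node deeper than allowed. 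The rest is routine: the extension step only ever consults the already-computed $\tau_i$'s, so the algorithm terminates and its output matches $\view^l(v)$ by a straightforward induction comparing it level by level with the true view.
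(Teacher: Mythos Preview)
Your proposal is correct and rests on the same key fact as the paper's proof (Proposition~\ref{prop:No}: depth-$(n-1)$ truncated views already distinguish views in an $n$-node graph), but the organization differs. The paper argues by a direct level-by-level extension: to go from $\view^{i}(v)$ to $\view^{i+1}(v)$, take each node $x$ at depth $i-(n-1)$, read off its depth-$(n-1)$ subtree, find some node $y$ at depth $\le n-1$ in $\view^{2n-1}(v)$ with the same depth-$(n-1)$ subtree, and copy $y$'s depth-$n$ subtree onto $x$. Your version instead precomputes the finite set of view-classes $\tau_1,\dots,\tau_r$ together with an ``expansion rule'' for each (ports and child-classes), and then unfolds---essentially building the quotient automaton first and then unrolling it. The obstacle you flag (that $\tau_i$ alone only shows a child's depth-$(n-2)$ view) is real, and your fix is exactly the one the paper uses implicitly: read the child's class from the depth-$n$ subtree of a representative sitting at depth $\le n-1$ in $\view^{2n-1}(v)$, which is available since $(n-1)+n=2n-1$. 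With that fix the cleanest invariant is simply ``$\view^{l}(v)$ together with a class tag at \emph{every} node'', rather than your weaker ``tags only at depth $\le l-(n-1)$'', which would not by itself feed the extension of the leaves; but once you have the expansion rules this stronger invariant is immediate. Both routes are short; yours makes the finite-state structure more explicit, the paper's avoids the preprocessing step.
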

\begin{proof}
If $l\leq 2n-1$, then $\view^l(v)$ is a subtree of $\view^{2n-1}(v)$, so we may assume that $l>2n-1$.
We extend the view $\view^{i}(v)$ to $\view^{i+1}(v)$ for each $i=2n-1,\ldots,l-1$.
To this end we perform the following computation for each node $x$ at depth $i-(n-1)$ of $\view^i(v)$.
Let $u$ be the node of $G$ that corresponds to $x$.
Note that the subtree of $\view^i(v)$ rooted at $x$ and containing all descendants of $x$ in $\view^i(v)$ is equal to $\view^{n-1}(u)$.
Hence, there exists a node $x'$ in $\view^{n-1}(v)$ such that $x'$ corresponds to $u$, in view of the connectedness of $G$.
This implies that there exists a node $y$ in $\view^{n-1}(v)$ such that the subtree of $\view^{2n-1}(v)$ consisting of $y$ and all its descendants to depth $n-1$ from $y$ is equal to $\view^{n-1}(u)$.
Hence, one can find any such node $y$ of $\view^{n-1}(v)$.
Let $u'$ be the node of $G$ that corresponds to $y$.
Due to Proposition~\ref{prop:No}, $\view^n(u')$ is equal to $\view^n(u)$.
Since $y$ belongs to $\view^{n-1}(v)$ we obtain that $\view^n(u')$ is a subtree of $\view^i(v)$ rooted at $y$, and therefore we can extend the subtree $\view^{n-1}(u)$ by replacing in $\view^i(v)$ the subtree rooted at $x$ with $\view^n(u')$.
\end{proof}

\begin{corollary} \label{cor:other_views_computable}
Let $v$ be any node of $G$ and let $x$ be any node of $\view^{2n-1}(v)$.
Then, the subtree of $\view(v)$ to depth $l$ and rooted at $x$ can be computed for any $l>0$, using $\view^{2n-1}(v)$.
\qed
\end{corollary}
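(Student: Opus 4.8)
The plan is to derive this directly from Lemma~\ref{lem:view_extension}. Since $x$ is a node of $\view^{2n-1}(v)$, its depth in $\view(v)$ is some integer $d(x)\le 2n-1$. The subtree of $\view(v)$ rooted at $x$ and truncated to depth $l$ is, by the very definition of the view, determined completely by the portion of $\view(v)$ lying at depth at most $d(x)+l\le (2n-1)+l$; in other words it is a subtree of $\view^{(2n-1)+l}(v)$, namely the one consisting of $x$ together with all of its descendants down to depth $l$.

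First I would invoke Lemma~\ref{lem:view_extension} with the integer $L=(2n-1)+l$ to compute $\view^{L}(v)$ from $\view^{2n-1}(v)$. Then I would locate in $\view^{L}(v)$ the node $x$ — it is identified by the sequence of port numbers on the path from the root to $x$, which is already available in $\view^{2n-1}(v)$ — and extract the subtree rooted at $x$ consisting of $x$ and all its descendants to depth $l$. Because $d(x)\le 2n-1$, this subtree is entirely contained in $\view^{L}(v)$, so the extraction is well-defined, and it coincides with the requested subtree of $\view(v)$ rooted at $x$ to depth $l$.

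There is essentially no real obstacle here; the only point that needs a brief justification is the bookkeeping observation that every descendant of $x$ within depth $l$ of it sits at depth at most $(2n-1)+l$ in $\view(v)$, so that the single application of Lemma~\ref{lem:view_extension} with bound $(2n-1)+l$ suffices. One could equally well argue by first computing $\view^{(2n-1)+(2n-1)}(v)$, reading off $\view^{2n-1}(u)$ as the subtree rooted at $x$ (where $u$ is the node of $G$ corresponding to $x$), and applying Lemma~\ref{lem:view_extension} a second time to $u$; but the one-step argument above is cleaner.
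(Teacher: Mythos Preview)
Your proposal is correct and matches the paper's intent: the corollary is stated with an immediate \qed\ and no proof, i.e., it is meant to follow directly from Lemma~\ref{lem:view_extension} exactly as you describe, by extending the view to depth $(2n-1)+l$ and reading off the subtree rooted at $x$.
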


\begin{corollary} \label{cor:trail_verification}
Let $T$ be any trail and let $v$ be any node of $G$.
Using $\view^{2n-1}(v)$ it can be verified if $T$ is feasible from $v$ in $G$.
\qed
\end{corollary}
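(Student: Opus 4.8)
\textbf{Proof proposal for Corollary~\ref{cor:trail_verification}.}
The plan is to reduce feasibility of a trail from $v$ to a finite simulation inside a sufficiently deep truncated view of $v$, which in turn is computable from $\view^{2n-1}(v)$ by Lemma~\ref{lem:view_extension}. Write $T=(p_1,\ldots,p_{2j})$; if $T$ has odd length it is trivially infeasible, so assume its length is $2j$. By definition, $T$ is feasible from $v$ iff there is a route $R=(\{v_0,v_1\},\ldots,\{v_{j-1},v_j\})$ in $G$ with $v_0=v$ and $\cT(R)=T$, i.e.\ a walk of exactly $j$ edges starting at $v$ whose consecutive port pairs are $(p_1,p_2),(p_3,p_4),\ldots$. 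Such a walk never leaves the ball of radius $j$ around $v$, so all the information needed to test its existence is contained in $\view^j(v)$.

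The key step is the standard correspondence between routes in $G$ starting at $v$ and downward walks in $\view(v)$ starting at the root: by the inductive definition of the view, mapping the root of $\view(v)$ to $v$ and lifting edges according to port numbers gives a bijection, for each $i$, between length-$i$ routes from $v$ in $G$ and length-$i$ walks from the root in $\view(v)$, and this bijection preserves the associated trail (i.e.\ the sequence of port numbers seen at both ends of each traversed edge). Restricting to depth, length-$i$ routes from $v$ correspond exactly to length-$i$ walks from the root inside $\view^i(v)$ for every $i\le j$. Hence $T$ is feasible from $v$ iff, processing the pairs $(p_{2m-1},p_{2m})$ for $m=1,\ldots,j$, one can maintain a current node $x$ of $\view^j(v)$ (initialised at the root) such that at each step $x$ is incident in $\view^j(v)$ to an edge whose port number at $x$ equals $p_{2m-1}$, and, following that edge to its other endpoint $x'$, the port number of that edge at $x'$ equals $p_{2m}$ (then $x$ is updated to $x'$). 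This is a finite, effective check on the finite tree $\view^j(v)$.

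To complete the argument, first use Lemma~\ref{lem:view_extension} to compute $\view^j(v)$ from $\view^{2n-1}(v)$ (if $j\le 2n-1$ it is already a subtree of $\view^{2n-1}(v)$; otherwise the lemma applies with $l=j$). Then run the simulation above; output ``feasible'' iff it succeeds through all $j$ steps. Correctness is immediate from the route/walk correspondence, and the procedure clearly uses only $\view^{2n-1}(v)$.

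I expect the only point requiring care — and the main thing to make precise — is the route/walk correspondence and the claim that a length-$j$ route stays within $\view^j(v)$; but this is a direct structural induction on $j$ using the recursive definition of $\cV^{l+1}(v)$ from the $\cV^{l}(v_i)$, so it is routine rather than a genuine obstacle. Everything else follows from Lemma~\ref{lem:view_extension} and Corollary~\ref{cor:other_views_computable}.
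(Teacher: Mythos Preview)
Your proposal is correct and is exactly the natural elaboration of what the paper leaves implicit: the paper states the corollary with an immediate \qed, treating it as a direct consequence of Lemma~\ref{lem:view_extension}, and your argument (compute $\view^j(v)$ from $\view^{2n-1}(v)$ via that lemma, then simulate the trail step by step from the root) is precisely how one unpacks that. The only cosmetic point is that in the paper's definition of $\view(v)$ each non-root node has all $\deg$ neighbours as children (so the edge to the parent and one child share a port label); your simulation should therefore be read as always stepping to a child, which is what your phrase ``downward walks'' already indicates.
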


The next lemma shows that given any label $\lab$ of the agent and the stage number $s$, it is possible to compute the trail $\hist(\lab,s)$ which, informally speaking, is the history of the moves of the agent with label $\lab$ till this stage.

\begin{lemma} \label{lem:history_simulation}
Using a label $\lab\in\labels_3$ and an integer $s\in\{1,\ldots,|\cP^n|\}$, the trail $\hist(\lab,s)$ can be computed.
\end{lemma}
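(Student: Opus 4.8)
We want to show that $\hist(\lab,s)$, the trail of the route traversed through the end of stage $s$ of phase $p = \len(\lab)$ by an idealized agent $\agent'$ with $\lab_p(\agent') = \lab$ whose home corresponds to the root of the truncated view encoded in $\lab$, can be computed from $\lab$ and $s$ alone.

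The plan is to proceed by induction on $s$, following exactly the structure of the inner \textbf{for} loop of Algorithm~$\algorithmLE$ (lines $\lstLEInternalLoopStarts$--$\lstLEInternalLoopEnds$), and to mutually induct with the claim that $\hist(\lab',s-1)$ is computable whenever $\lab'\in\labels_3$ with $\len(\lab')\le 3$ — note that the stage index $s$ is shared across all agents, so the histories of the ``other'' label appearing in $\cP_s^n$ are always at a strictly smaller stage $s-1$, which is what keeps the recursion well-founded. For the base case $s=0$, $\hist(\lab,0)$ is the empty trail (the idealized agent has not moved before phase $p$; recall that in this idealized simulation we place the agent's home at the root, so the history of previous phases does not enter — or more precisely, one sets up $\hist$ so that stage $0$ of phase $p$ starts at home with empty trail, consistent with Lemma~\ref{lem:closed} guaranteeing each stage returns to home).

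For the inductive step, assume $\hist(\lab'',s-1)$ is computable for every $\lab''\in\labels_3$; we compute $\hist(\lab,s)$ from $\hist(\lab,s-1)$. First extract $\cP_s^n = (\lab',\lab'',T)$; this depends only on $n$, which is encoded implicitly (the depth of the view in $\lab$ is $3(n-1)$, so $n$ is recoverable from $\lab$). If $\lab\notin\{\lab',\lab''\}$, the agent does not move in stage $s$, so $\hist(\lab,s)=\hist(\lab,s-1)$ and we are done. Otherwise, say $\lab=\lab'$ (the other case is symmetric, swapping $T$ with $\overline{T}$). We must decide whether the condition in line~$\lstLEFirstCaseFeasible$ holds, i.e.\ whether $(T,\overline{\hist(\lab'',s-1)})$ is feasible from the root $v$ of the truncated view in $\lab$. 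By the inductive hypothesis $\hist(\lab'',s-1)$ is a known trail, so $(T,\overline{\hist(\lab'',s-1)})$ is a known trail; since $\len(\lab)\le 3$ the truncated view stored in $\lab$ is $\view^{3(n-1)}(v)$ with $3(n-1)\ge 2n-1$, and by Corollary~\ref{cor:trail_verification} feasibility of this trail from $v$ can be checked using $\view^{2n-1}(v)$, hence using the data in $\lab$. If it is feasible, then by line~$\lstLEFirstCaseRoute$ the agent follows $\cR(v,(T,\overline{\hist(\lab'',s-1)},\overline{T}))$, so $\hist(\lab,s) = (\hist(\lab,s-1),T,\overline{\hist(\lab'',s-1)},\overline{T})$; if not feasible, $\hist(\lab,s)=\hist(\lab,s-1)$. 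In all cases $\hist(\lab,s)$ is obtained by a finite computation from $\lab$, $s$, and the previously computed histories.

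The main obstacle is making the mutual recursion genuinely well-founded and verifying that the ``idealized'' agent described in the definition of $\hist$ really does behave as the pseudocode dictates when fed only the contents of its label. The first point is handled by observing that every recursive call to a history is at stage $s-1 < s$, so induction on $s$ closes it (the labels $\lab',\lab''$ range over the finite set $\labels_3$, so for fixed $s-1$ there are only finitely many). The second point requires checking that every test and branch in lines $\lstLEInternalLoopStarts$--$\lstLEInternalLoopEnds$ that the idealized agent would perform depends only on data derivable from $\lab$: the label $\lab_p(\agent')$ is $\lab$ by hypothesis; the trail $\hist(\lab'',s-1)$ is computable by induction; and the feasibility tests reduce to inspecting $\view^{2n-1}(\home(\agent'))$, which is a subtree of the depth-$3(n-1)$ view recorded in $\lab$. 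Since these are the only inputs consulted by the loop body, the trail it produces — which is by definition $\hist(\lab,s)$ — is a function of $\lab$ and $s$ alone, completing the induction.
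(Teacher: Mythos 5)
Your within-phase recursion is in the right spirit, but your base case misreads what $\hist(\lab,s)$ is, and the error is substantive. By the paper's definition, $\hist(\lab,s)$ is the trail of the route performed by the idealized agent \emph{till the end of stage $s$ of phase $p$}, i.e.\ the trail of $R_{p,s}$, and by the conventions of the algorithm this route includes everything the agent has done since waking up: the DFS traversal to depth $\depth$ performed in Procedure $\algorithmInit$ (this is $R_{1,0}$) and all stages of all earlier phases ($R_{p,0}$ is the route up to the end of phase $p-1$). So $\hist(\lab,0)$ is never the empty trail; for a length-one label it is the trail of the depth-$\depth$ DFS (obtainable by a DFS traversal of $\view^{\depth}(v)$ starting and ending at the root, which is how the paper's base case goes), and for longer labels it is the full trail through the end of phase $p-1$. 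Your ``idealized simulation in which the history of previous phases does not enter'' is not an available convention: $\overline{\hist(\lab'',s-1)}$ is inserted into the route precisely so that the agent retraces the \emph{entire} route $R_{p,s-1}$ of the other agent backwards, which is what makes the two routes form a tunnel (the tunnel core in Lemma~\ref{lem:guaranteed_T_confirmation} is $\cR(\home(\agent),(T,\overline{\hist(\lab_p(\agent'),s-1)}))$, and condition (iii) of Definition~\ref{def:confirms} compares against $\cT(R_{p,s-1}(\agent))$). With your truncated histories the object you compute is simply not $\hist(\lab,s)$, and the downstream tunnel and confirmation arguments would fail.

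The repair is the paper's argument: simulate the whole execution of Algorithm $\algorithmLE$ for the idealized agent, inducting on the total number of stages processed across \emph{all} phases, not only on the stage index inside phase $p$. This needs one ingredient you omit: to simulate stage $j$ of an earlier phase $i<p$ one must know $\lab_i$ of the simulated agent, and this is available because it is exactly the prefix $(\code(\view^{\depth}(v)),f_1,\ldots,f_i)$ of $\lab$. With that in place, your treatment of a single stage (extract $\cP_j^n$, check feasibility via Corollary~\ref{cor:trail_verification} inside the stored view, and either append $(T',\overline{\hist(\lab,j-1)},\overline{T'})$ or append nothing) is correct, and the cross-label recursion is well-founded for the reason you give, since the labels in a triple of $\cP^n$ have equal length and the recursive history call is always at stage $j-1$.
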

\begin{proof}
Suppose that $\lab$ is of length $p$, $p\in\{1,2,3\}$.
By definition, $\lab=(\code(\view^{\depth}(v)),f_1,f_2,\ldots,f_p)$, where $f_i$ is a binary mapping for $\view^{\depth}(v)$ and $v$ is a node of $G$.
First note that $\view^{\depth}(v)$ can be reconstructed from its code.
Suppose that $\agent$ is an agent whose initial position is $v$ and whose label in phase $p$ is $\lab$, $\lab_p(\agent)=\lab$.

The trail  $\hist(\lab,s)$ can be computed by simulating the execution of Algorithm $\algorithmLE$ for the agent $\agent$.
By its formulation, the agent $\agent$ executed Procedure $\algorithmInit$, $p-1$ iterations of the main `for' loop in lines $\lstLEMainLoopStarts$-$\lstLEMainLoopEnds$ of Algorithm $\algorithmLE$,
and exactly $s$ iterations of the nested `for' loop in lines $\lstLEInternalLoopStarts$-$\lstLEInternalLoopEnds$ of Algorithm $\algorithmLE$ in the $p$-th iteration of the main `for' loop. Thus results in traversing the route $\cR(\home(\agent),\hist(\lab,s))$.
Note that $G$ is unknown to $\agent$, but we will reconstruct the route by simulating edge traversals in $\view(v)$.
(While  $\view(v)$ is infinite, it can be reconstructed from $\view^{\depth}(v)$ to any finite depth, using Corollary \ref{cor:trail_verification}.)

We prove the lemma by induction on the total number of stages `processed' by an agent.
Note that if $\lab\in\labels_1$ and $s=0$, then the trail $\hist(\lab,s)$ corresponds to the route that is the DFS traversal of $G$ to depth $3(n-1)$ and starting at $v$.
This trail can be obtained by performing the DFS traversal of $\view^{\depth}(v)$ that starts and ends at the root.

Now assume that $s>0$.
In order to simulate the behavior of $\agent$ in any stage $j$, $j\in\{1,\ldots,|\cP^n|\}$, of phase $i$, $i\in\{1,\ldots,p\}$, one needs to know $\lab_i(\agent)$.
By construction, $\lab_i(\agent)=(\code(\view^{\depth}(v)),f_1,\ldots,f_i)$.
The induction hypothesis and Corollary~\ref{cor:trail_verification} imply that the second part of the condition in line~$\lstLEFirstCaseFeasible$ of Algorithm $\algorithmLE$ can be checked.
If $\lab_i(\agent)\notin\{\lab',\lab''\}$, where $(\lab',\lab'',T)=\cP_j^n$, then $\agent$ does not move in stage $j$ of phase $i$.
Hence, assume without loss of generality that $\lab_i(\agent)=\lab'$.
This implies that $\agent$ executes the instruction in line~$\lstLEFirstCaseRoute$ of Algorithm $\algorithmLE$.
By the induction hypothesis, the trail $\hist(\lab'',j-1)$ in lines $\lstLEFirstCaseFeasible$ and $\lstLEFirstCaseRoute$ can be computed on the basis of $\lab''$ and $j$.
\end{proof}

We say that a route $R$ is \emph{closed} if $\routebegin(R)=\routeend(R)$.

The following lemma implies that at the end of each stage each agent comes back to its initial position.
\begin{lemma} \label{lem:closed}
Let $\agent$ be any agent. For every $p\in\{1,2,3\}$ and for every $s\in\{0,1,\ldots,|\cP^n|\}$ the route $R_{p,s}(\agent)$ is closed.
\end{lemma}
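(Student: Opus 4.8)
The plan is to prove this by induction on $s$, for each fixed $p$, showing simultaneously that $R_{p,s}(\agent)$ is closed and (as an auxiliary invariant needed to make the induction go through) that the agent is physically located at $\home(\agent)$ at the end of stage $s$ of phase $p$. The base case is $s=0$: here $R_{p,0}(\agent)=R_{p-1,|\cP^n|}(\agent)$ for $p>1$, which is closed by the inductive hypothesis applied to phase $p-1$ with $s=|\cP^n|$; and for $p=1$, $R_{1,0}(\agent)$ is the route of the DFS traversal performed in line~$\lstInitDFS$ of Procedure $\algorithmInit$, which by construction ends at $\home(\agent)$, hence is closed.

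For the inductive step, assume $R_{p,s-1}(\agent)$ is closed, so the agent is at $\home(\agent)$ at the start of stage $s$. I examine what stage $s$ does. Let $\cP_s^n=(\lab',\lab'',T)$. If $\lab_p(\agent)\notin\{\lab',\lab''\}$, the agent does not move, so $R_{p,s}(\agent)=R_{p,s-1}(\agent)$ and we are done. Otherwise, say $\lab_p(\agent)=\lab'$ (the other case is symmetric). If the trail $(T,\overline{\hist(\lab'',s-1)})$ is not feasible from $\home(\agent)$, again the agent does not move and $R_{p,s}(\agent)=R_{p,s-1}(\agent)$. If it is feasible, the agent follows the route $\cR(\home(\agent),(T,\overline{\hist(\lab'',s-1)},\overline{T}))$ appended to $R_{p,s-1}(\agent)$. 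The key point is that the appended trail $(T,\overline{\hist(\lab'',s-1)},\overline{T})$ is itself feasible from $\home(\agent)$ and the corresponding route is closed: the trail is of the form $(U,\overline{U})$ with $U=(T,\overline{\hist(\lab'',s-1)})$, and one checks directly from the operator $\cT$ and the definition of $\cR$ that whenever a trail $(U,\overline{U})$ is feasible from a node $v$, the route $\cR(v,(U,\overline{U}))$ returns to $v$ — it simply traverses the route $\cR(v,U)$ forward and then retraces exactly the same edges backward, because reversing the trail of a route reverses the route. Hence $\routeend(\cR(\home(\agent),(U,\overline{U})))=\home(\agent)$, so concatenating with $R_{p,s-1}(\agent)$ (which already ends at $\home(\agent)$) yields a closed route $R_{p,s}(\agent)$, and the agent ends stage $s$ at $\home(\agent)$.

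The only genuinely substantive point — and the one I expect to require the most care — is the observation that the trail $(U,\overline{U})$ being feasible from $v$ forces $\cR(v,(U,\overline{U}))$ to be closed. This is where one must be careful that $\overline{U}$ really does ``undo'' $U$ as a walk in the graph: because port numbers are recorded at both endpoints of each traversed edge (that is the role of the pairs $p_{2i-1},p_{2i}$ in the definition of $\cT$), the reversed trail at each step re-enters the previous node by the port it just left and leaves by the port it just arrived on, so $\cR(v,\overline{U})$ starting from $\routeend(\cR(v,U))$ is precisely the edge-reversal of $\cR(v,U)$ and lands back at $v$. I would state this as a short self-contained observation (it is purely about the combinatorics of trails and routes, independent of agents and the adversary) and then the two inductions above close immediately. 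Note that the adversary plays no role here: closedness of the route is a property of the \emph{route} (what the agent decides), not of the \emph{walk} (how the adversary moves it), so asynchrony is irrelevant to this lemma.
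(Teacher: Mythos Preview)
Your argument contains a genuine error at the step you yourself flagged as ``the only genuinely substantive point''. You claim that the trail followed in stage $s$, namely $(T,\overline{\hist(\lab'',s-1)},\overline{T})$, is of the form $(U,\overline{U})$ with $U=(T,\overline{\hist(\lab'',s-1)})$. But reversing a concatenation reverses the order of the pieces as well as each piece, so
\[
\overline{U}=\overline{(T,\overline{\hist(\lab'',s-1)})}=(\hist(\lab'',s-1),\overline{T}),
\]
and therefore $(U,\overline{U})=(T,\overline{\hist(\lab'',s-1)},\hist(\lab'',s-1),\overline{T})$, which has an extra middle block $\hist(\lab'',s-1)$ that is not present in the actual trail. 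Since $\hist(\lab'',s-1)$ is never empty (already $\hist(\cdot,0)$ contains the DFS traversal), your ``retrace $U$ backwards'' picture simply does not match what the agent does.

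What actually needs to be shown is that, after following $T$ from $\home(\agent)$ to some node $u$, the middle piece $\cR(u,\overline{\hist(\lab'',s-1)})$ is itself closed, so that the final $\overline{T}$ brings the agent back to $\home(\agent)$. This is exactly the induction hypothesis applied at stage $s-1$ --- but for a (possibly hypothetical) agent whose initial position is $u$ and whose label in phase $p$ is $\lab''$, not for $\agent$ itself. Your induction hypothesis is stated only for the single agent $\agent$, which is too weak: you need it quantified over all agents (equivalently, over all nodes whose truncated view matches the view encoded in $\lab''$). The paper's proof does precisely this: it inducts over the total number of processed stages for all agents simultaneously, and in the inductive step reduces closedness of the stage-$s$ route of $\agent$ to closedness of $R_{p,s-1}(\agent')$ for the agent $\agent'$ at $u$ with label $\lab''$. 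Once you strengthen the hypothesis in this way and drop the incorrect $(U,\overline{U})$ decomposition, the argument goes through.
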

\begin{proof}
Denote by $R_{p,s}'(\agent)$ the route that the agent $\agent$ follows in stage $s$, $s\in\{1,\ldots,|\cP^n|\}$, of phase $p$, $p\in\{1,2,3\}$.
We prove the lemma by induction on the total number of stages processed in all phases by an agent.

First note that $R_{1,0}(\agent)$, i.e., the route of $\agent$ performed as a result of the execution of line~$\lstInitDFS$ of Procedure $\algorithmInit$ is closed.
Hence, it remains to prove that if $R_{p,s}(\agent')$ is closed for each agent $\agent'$, for some $p\in\{1,2,3\}$ and for some $s\in\{0,\ldots,|\cP^n|-1\}$, then $R_{p,s+1}(\agent)$ is closed as well.
Note that $R_{p,s+1}(\agent)=(R_{p,s}(\agent),R_{p,s+1}'(\agent))$.
Hence, $\routebegin(R_{p,s+1}'(\agent))=\routeend(R_{p,s}(\agent))=\home(\agent)$ and therefore it is enough to argue that $R_{p,s+1}'(\agent)$ is closed.

Let $\cP_{s+1}^n=(\lab',\lab'',T)$.
If $\lab_p(\agent)\notin\{\lab',\lab''\}$, then according to lines~$\lstLEFirstCaseFeasible$ and~$\lstLESecondCaseFeasible$ of Algorithm $\algorithmLE$, $R_{p,s+1}'(\agent')$ is empty, i.e. $\agent$ does not move in stage $s+1$ of phase $p$.
In this case the proof is completed.
Otherwise, we obtain that
\[R_{p,s+1}'(\agent)=\cR(\home(\agent),  (T', \overline{\hist(\lab,s)}, \overline{T'})),\]
where $T'\in\{T,\overline{T}\}$ and $\lab\in\{\lab',\lab''\}$ (see lines~$\lstLEFirstCaseRoute$ and~$\lstLESecondCaseRoute$ of Algorithm $\algorithmLE$).
Let $u=\routeend(\cR(\home(\agent),T'))$.
Hence, $R_{p,s+1}'(\agent)$ is closed if and only if $\cR(u,\hist(\lab,s))$ is closed.
However, by definition, the latter route equals $R_{p,s}(\agent')$ for an agent $\agent'$ such that its initial position is $u$ and $\lab_p(\agent')=\lab$ (if such an agent exists).
It follows from the induction hypothesis that $R_{p,s}(\agent')$ is closed, which completes the proof of the lemma.
\end{proof}


The next lemma shows the importance of confirmation of a trail.
It implies that if an agent $\agent$ confirms $T$ as a result of a meeting with $\agent'$, then it can correctly situate the initial position of $\agent'$ in its view.

\begin{lemma} \label{lem:Tconfirmation}
Let $p\in\{1,2,3\}$, let $s\in\{1,\ldots,|\cP^n|\}$ and let $\cP_s^n=(\lab',\lab'',T)$.
Let $\agent$ be an agent such that $\lab_p(\agent)\in\{\lab',\lab''\}$.
Suppose that the agent $\agent$ meets an agent $\agent'$, when $\agent$ is in stage $s$ of phase $p$.
If $\agent$ confirms $T$ as a result of this meeting, then $\home(\agent')=\routeend(\cR(\home(\agent)),T')$, where $T'=T$ if $\lab_p(\agent)=\lab'$ and $T'=\overline{T}$ if $\lab_p(\agent)=\lab''$.
\end{lemma}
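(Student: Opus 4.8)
The plan is to unwind Definition~\ref{def:confirms} and track, via the tunnel structure, which node of $G$ the two agents occupy at the moment of the confirming meeting. Suppose without loss of generality that $\lab_p(\agent)=\lab'$, so that $T'=T$ and, in the notation of Definition~\ref{def:confirms}\ref{it:confirms:trail}, we have $\lab=\lab''$. Let $T_{\agent}$ and $T_{\agent'}$ be the trails traversed by $\agent$ and $\agent'$ respectively until the meeting. Condition \ref{it:confirms:trail} gives the key identity
\[(\cT(R_{p,s-1}(\agent)),T,\overline{\hist(\lab'',s-1)})=(T_{\agent},\overline{T_{\agent'}}).\]
First I would argue that, because $\agent$ is in stage $s$ of phase $p$ with label $\lab'$ and $(T,\overline{\hist(\lab'',s-1)})$ must be feasible from $\home(\agent)$ for the meeting in this stage to take place (line~\lstLEFirstCaseRoute), the left-hand side is precisely the trail of the route $\agent$ has committed to follow in stage $s$, up to the turn-around point; and by Lemma~\ref{lem:closed} the route traversed before stage $s$ is closed, so $\agent$ begins stage $s$ at $\home(\agent)$. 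Hence the prefix $\cT(R_{p,s-1}(\agent))$ of $T_{\agent}$ corresponds to a closed route at $\home(\agent)$, and the continuation $T$ then leads $\agent$ from $\home(\agent)$ to the node $u:=\routeend(\cR(\home(\agent),T))$, after which $\agent$ follows $\overline{\hist(\lab'',s-1)}$ from $u$.

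Next I would do the symmetric analysis for $\agent'$. By condition \ref{it:confirms:stage}, $\agent'$ is in a stage/phase no later than that of $\agent$; combined with \ref{it:confirms:label} and the structure of $\cP^n$, one shows that in its current stage $\agent'$ is the agent processing the same triple with label $\lab''$, so its route in that stage starts with $\overline{T}$ followed by $\overline{\hist(\lab',s'-1)}$. Reading the identity above from the right, $\overline{T_{\agent'}}=(T,\overline{\hist(\lab'',s-1)})$ truncated appropriately means $T_{\agent'}$ is a suffix of $(\hist(\lab'',s-1),\overline{T})$: that is, $\agent'$ has traversed exactly the portion of $\hist(\lab'',s-1)$ from its own initial position that brings it (along the reversal) to meet $\agent$. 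The point where the two trails ``meet in the middle'' in the displayed equation is, on $\agent$'s side, reached after $\cT(R_{p,s-1}(\agent))$ followed by some prefix of $T$, and on $\agent'$'s side after some suffix of $\overline{\hist(\lab'',s-1)}$ read backwards. Matching the two descriptions of the meeting point and using that $\hist(\lab'',s-1)$ is, by its very definition (and Lemma~\ref{lem:history_simulation}), the trail of the closed route $R_{p,s-1}$ of an agent with label $\lab''$ started at the root of the view encoded in $\lab''$, one concludes that $\agent'$'s initial position $\home(\agent')$ is exactly the node of $G$ from which $\hist(\lab'',s-1)$ is read, which is the node reached by $T$ from $\home(\agent)$, namely $u$.

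The step I expect to be the main obstacle is making the ``meeting in the middle'' bookkeeping rigorous: the equation in \ref{it:confirms:trail} is an equation of two concatenated trails, and I must verify that the split point (which prefix of the left side equals $T_{\agent}$, and which suffix, reversed, equals $T_{\agent'}$) is consistent with the tunnel-meeting guarantee of Proposition~\ref{prop:tunnel}, i.e.\ that the overlap lies within the tunnel core and the core's midpoint corresponds to $u=\home(\agent')$. I would handle this by invoking Proposition~\ref{prop:tunnel} applied to the routes $R_{p,s}(\agent)$ and $R_{p,s}(\agent')$, which (as set up in the algorithm and illustrated in Figure~\ref{fig:tunnel}) form a tunnel whose core, read from $\agent$'s side, has trail $(\cT(R_{p,s-1}(\agent)),T)$; the proposition then says the agents meet with traversed routes whose trails are a prefix of this and the corresponding reversed suffix, which is exactly the content of \ref{it:confirms:trail}. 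Since the core ends at $u$, and the reversed core (the core with respect to $\agent'$'s route) ends at $\home(\agent')$ while also ending at $u$ by the tunnel definition, we get $\home(\agent')=u=\routeend(\cR(\home(\agent),T))$, as required. A small separate point to check is that $T\in\cS_n$ guarantees $\cR(\home(\agent),T)$ is well-defined and of bounded length, so all the views and trails referred to above are legitimate objects; this follows directly from the definition of $\cS_n$ and $\cP^n$.
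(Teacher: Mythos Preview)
Your plan contains a genuine gap. The claim that ``combined with \ref{it:confirms:label} and the structure of $\cP^n$, one shows that in its current stage $\agent'$ is the agent processing the same triple with label $\lab''$'' is unjustified and, in fact, need not hold. Definition~\ref{def:confirms} says nothing about the label of $\agent'$: condition~\ref{it:confirms:label} constrains only $\lab_p(\agent)$, and condition~\ref{it:confirms:stage} merely says $\agent'$ is in no later a stage/phase than $\agent$. The agent $\agent'$ could be in an earlier phase, with a label unrelated to $\lab''$, somewhere in the middle of its DFS or of some completely different stage. Consequently your ``symmetric analysis for $\agent'$'', the appeal to Proposition~\ref{prop:tunnel}, and the assertion that $R_{p,s}(\agent)$ and $R_{p,s}(\agent')$ form a tunnel all collapse: you have imported the hypotheses of Lemma~\ref{lem:guaranteed_T_confirmation} (which \emph{produces} confirming meetings via tunnels) into the proof of Lemma~\ref{lem:Tconfirmation} (which must draw a conclusion from \emph{any} meeting satisfying Definition~\ref{def:confirms}).

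The paper's argument avoids all of this with a single observation you never quite isolate: because the two agents physically meet, the route $\cR(\home(\agent),(T_{\agent},\overline{T_{\agent'}}))$ leads from $\home(\agent)$ to $\home(\agent')$, regardless of what $\agent'$ was doing. By condition~\ref{it:confirms:trail} this route equals $R=(R_{p,s-1}(\agent),\cR(\home(\agent),(T',\overline{\hist(\lab'',s-1)})))$. Now invoke Lemma~\ref{lem:closed} twice: $R_{p,s-1}(\agent)$ is closed at $\home(\agent)$, and $\cR(u,\overline{\hist(\lab'',s-1)})$ is closed at $u=\routeend(\cR(\home(\agent),T'))$ (the proof of Lemma~\ref{lem:closed} shows that following $\hist(\lab,s)$ from any node where it is feasible yields a closed route). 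Hence $\home(\agent')=\routeend(R)=u$. No analysis of the split point, no assumption on $\agent'$, and no tunnel proposition are needed.
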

\begin{proof}
Suppose without loss of generality that $\lab'=\lab_p(\agent)$.
Condition \ref{it:confirms:trail} in Definition~\ref{def:confirms} implies that the route $R=(R_{p,s-1}(\agent),\cR(\home(\agent),(T',\overline{\hist(\lab'',s-1)})))$ leads from $\home(\agent)$ to $\home(\agent')$ in $G$.
Let $u=\routeend(\cR(\home(\agent),T'))$.
By Lemma~\ref{lem:closed}, both $R_{p,s-1}(\agent)$ and $\cR(u,\overline{\hist(\lab'',s-1)})$ are closed.
This implies that $\home(\agent')=\routeend(R)=\routeend(\cR(\home(\agent),T'))$ as required.
\end{proof}

The following lemma shows that processing an appropriate triple $(\lab,\lab',T)$ by two agents guarantees their meeting confirming $T$.
\begin{lemma} \label{lem:guaranteed_T_confirmation}
Let $p\in\{1,2,3\}$.
Let $\agent$ and $\agent'$ be two agents such that $\cP_s^n=(\lab_p(\agent),\lab_p(\agent'),T)$ for some $T\in\cS_n$ and $s\in\{1,\ldots,|\cP^n|\}$.
If $\routeend(\cR(\home(\agent),T))=\home(\agent')$, then prior to the first moment when one of the agents completes phase $p$, the agents $\agent$ and $\agent'$ have a meeting as a result of which either $\agent$ or $\agent'$ confirms $T$.
\end{lemma}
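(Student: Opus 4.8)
The plan is to show that in stage $s$ of phase $p$, both agents $\agent$ and $\agent'$ actually follow the routes dictated by lines~$\lstLEFirstCaseRoute$ and~$\lstLESecondCaseRoute$ of Algorithm~$\algorithmLE$, that these routes form a tunnel whose core is exactly $(\cR(\home(\agent),T))$ traversed forward by $\agent$ and backward by $\agent'$, and then to invoke Proposition~\ref{prop:tunnel} to extract a meeting, finally checking that the meeting satisfies conditions \ref{it:confirms:stage}--\ref{it:confirms:trail} of Definition~\ref{def:confirms}. First I would argue that the triple $\cP_s^n=(\lab_p(\agent),\lab_p(\agent'),T)$ is processed by \emph{both} agents in stage $s$: since $\lab_p(\agent)=\lab'$ and $\lab_p(\agent')=\lab''$ where $(\lab',\lab'',T)=\cP_s^n$, agent $\agent$ enters the branch of line~$\lstLEFirstCaseFeasible$ and $\agent'$ enters the branch of line~$\lstLESecondCaseFeasible$, provided the respective feasibility tests pass. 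The feasibility test for $\agent$ asks whether $(T,\overline{\hist(\lab'',s-1)})$ is feasible from $\home(\agent)$; this holds because $\cR(\home(\agent),T)$ reaches $\home(\agent')$ by hypothesis, and from $\home(\agent')$ the trail $\overline{\hist(\lab'',s-1)}$ is the reverse of the trail of the closed route $R_{p,s-1}(\agent')$ (closed by Lemma~\ref{lem:closed}), which is feasible from $\home(\agent')$ since $\lab_p(\agent')=\lab''$ and $\hist(\lab'',s-1)$ is by definition the trail of that very route (computable by Lemma~\ref{lem:history_simulation}). The symmetric argument handles $\agent'$.

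Next I would identify the routes explicitly. Agent $\agent$ follows $R_{p,s}(\agent)=(R_{p,s-1}(\agent),\cR(\home(\agent),(T,\overline{\hist(\lab'',s-1)},\overline{T})))$ and $\agent'$ follows $R_{p,s}(\agent')=(R_{p,s-1}(\agent'),\cR(\home(\agent'),(\overline{T},\overline{\hist(\lab',s-1)},T)))$. Writing $R_{p,s-1}(\agent)$ as $(e_1,\ldots,e_k)$ and $\cR(\home(\agent),T)$ as $(g_1,\ldots,g_c)$, agent $\agent$'s route through the end of the $T$-segment is $(e_1,\ldots,e_k,g_1,\ldots,g_c)$. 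Agent $\agent'$'s route begins with $R_{p,s-1}(\agent')$, then traverses $\overline{T}$, i.e.\ $(g_c,g_{c-1},\ldots,g_1)$, arriving at $\home(\agent)$, then $\overline{\hist(\lab',s-1)}$ which — since $R_{p,s-1}(\agent)$ has trail $\hist(\lab',s-1)$ by definition of $\hist$ — is exactly $(e_k,e_{k-1},\ldots,e_1)$, and then $T$ again. So the prefix of $\agent'$'s route is $(\overline{R_{p,s-1}(\agent')})$-tail followed by $(g_c,\ldots,g_1,e_k,\ldots,e_1)$. Comparing with the tunnel definition $R'=(e_i,e_{i-1},\ldots,e_1,e_1',\ldots,e_{j'}')$: taking $\agent$'s route as $R=(e_1,\ldots,e_k,g_1,\ldots,g_c,\ldots)$ with the index pointing at the end of $g_c$, i.e.\ position $k+c$, the reversal $(g_c,\ldots,g_1,e_k,\ldots,e_1)$ is precisely a prefix of $\agent'$'s continuation after $R_{p,s-1}(\agent')$; one checks that $R_{p,s}(\agent)$ and $R_{p,s}(\agent')$ form a tunnel with core $(e_1,\ldots,e_k,g_1,\ldots,g_c)=R_{p,s-1}(\agent)$ followed by $\cR(\home(\agent),T)$. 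By Proposition~\ref{prop:tunnel}, $\agent$ and $\agent'$ are guaranteed to meet, with the routes traversed to date forming (one route and the reversal of the other) this core.

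Then I would verify the three conditions of Definition~\ref{def:confirms} for this meeting. Condition~\ref{it:confirms:stage}: the meeting occurs while $\agent$ is executing stage $s$ of phase $p$; I need that $\agent'$ is at stage $s'\le s$ of phase $p'\le p$ (with the stated tie-break), which follows because $\agent'$ can only have reached stage $s$ of phase $p$ in processing this same triple — here the asynchrony matters and I would argue that at the moment of the guaranteed meeting the relevant portions of both routes are within stage $s$ of phase $p$, so if $\agent'$ were further along, it would contradict the fact that the meeting point lies on the tunnel core which is contained in the stage-$s$ route; the cleaner statement is that the meeting is guaranteed to happen before the first of the two agents finishes phase $p$, which is what the lemma claims. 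Condition~\ref{it:confirms:label}: immediate since $\cP_s^n=(\lab',\lab'',T)$ and $\lab_p(\agent)=\lab'\in\{\lab',\lab''\}$. Condition~\ref{it:confirms:trail}: Proposition~\ref{prop:tunnel} gives that for some meeting, $(e_1,\ldots,e_k,g_1,\ldots,g_i)$ and the reversal of $\agent'$'s traversed route coincide; translating to trails, $T_{\agent}=(\cT(R_{p,s-1}(\agent)),$ prefix of $T)$ and $\overline{T_{\agent'}}$ matches, giving exactly $(\cT(R_{p,s-1}(\agent)),T',\overline{\hist(\lab,s-1)})=(T_{\agent},\overline{T_{\agent'}})$ with $T'=T$, $\lab=\lab''$ — the required equation — once we note that $\agent'$ has already completed its $R_{p,s-1}(\agent')$ portion and the $\overline{T}$ portion and part of $\overline{\hist(\lab',s-1)}$ by the meeting time, and $\overline{\hist(\lab',s-1)}=\overline{\cT(R_{p,s-1}(\agent))}$ reversed appropriately.

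The main obstacle I anticipate is the asynchrony bookkeeping in condition~\ref{it:confirms:stage}: because the adversary controls speeds, at the guaranteed meeting time the two agents may be in different phases/stages, and I must rule out that $\agent'$ has already rushed past stage $s$ of phase $p$ before the meeting is forced. The resolution is that the tunnel core is entirely contained within the portion of each agent's route belonging to stage $s$ of phase $p$, the meeting guaranteed by Proposition~\ref{prop:tunnel} occurs while both agents are still traversing this core (hence still in stage $s$, phase $p$ or earlier for $\agent'$), and therefore the meeting — and the confirmation — happens before either agent completes phase $p$; a careful statement of "traversed till the meeting" from Proposition~\ref{prop:tunnel} pins down which agent is "ahead" and yields the tie-break $p'=p \Rightarrow s'\le s$.
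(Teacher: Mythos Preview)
Your overall plan is right, but the central step --- setting up the tunnel --- is carried out incorrectly, and this error propagates to the verification of condition~\ref{it:confirms:trail}.

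\textbf{The two full routes do not form a tunnel.} You claim that $R_{p,s}(\agent)$ and $R_{p,s}(\agent')$ form a tunnel with core $(R_{p,s-1}(\agent),\cR(\home(\agent),T))$. But the definition requires the second route $R'$ to \emph{begin} with the reversal $(e_i,\ldots,e_1)$ of a prefix of the first route. You yourself write that ``$\agent'$'s route begins with $R_{p,s-1}(\agent')$, then traverses $\overline{T}$, \ldots'' --- so the reversed segment only appears \emph{after} the prefix $R_{p,s-1}(\agent')$, not at the start. Hence $R_{p,s}(\agent)$ and $R_{p,s}(\agent')$ are not a tunnel in the required sense, and Proposition~\ref{prop:tunnel} does not apply to this pair. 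The paper's fix is to first break the symmetry: assume without loss of generality that $\agent$ completes $R_{p,s-1}(\agent)$ no later than $\agent'$ completes $R_{p,s-1}(\agent')$. Then take $R$ to be only the stage-$s$ route $R_{p,s}'(\agent)=\cR(\home(\agent),(T,\overline{\hist(\lab'',s-1)},\overline{T}))$ and $R'$ to be the \emph{full} route $R_{p,s}(\agent')$. Now $R'$ genuinely starts with $R_{p,s-1}(\agent')$ followed by $\cR(\home(\agent'),\overline{T})$, which is exactly the reversal of the prefix $\cR(\home(\agent),(T,\overline{\hist(\lab'',s-1)}))$ of $R$ (using $\cT(R_{p,s-1}(\agent'))=\hist(\lab'',s-1)$ and Lemma~\ref{lem:closed}). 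So the correct core is $C=\cR(\home(\agent),(T,\overline{\hist(\lab'',s-1)}))$, not your $(R_{p,s-1}(\agent),\cR(\home(\agent),T))$.

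\textbf{Your core is too short for condition~\ref{it:confirms:trail}.} Even granting a meeting, Proposition~\ref{prop:tunnel} gives $(T_{\agent},\overline{T_{\agent'}})$ equal to the trail of the core prefixed (on $\agent$'s side) by $\cT(R_{p,s-1}(\agent))$. With the correct core $C$ one gets $(\cT(R_{p,s-1}(\agent)),T,\overline{\hist(\lab'',s-1)})=(T_{\agent},\overline{T_{\agent'}})$, which is exactly condition~\ref{it:confirms:trail} with $T'=T$ and $\lab=\lab''$. With your core you would only obtain $(\cT(R_{p,s-1}(\agent)),T)=(T_{\agent},\overline{T_{\agent'}})$, missing the $\overline{\hist(\lab'',s-1)}$ segment; your attempt to recover it by saying ``$\agent'$ has already completed \ldots\ part of $\overline{\hist(\lab',s-1)}$'' conflates $\lab'$ with $\lab''$ and is inconsistent with the core you fixed earlier. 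Finally, the WLOG assumption is also what cleanly handles condition~\ref{it:confirms:stage}: the guaranteed meeting occurs while $\agent$ is in stage $s$ of phase $p$ and before $\agent'$ has finished $R_{p,s}(\agent')$, so $\agent'$ is in a phase $p'\le p$ with $s'\le s$ when $p'=p$, and it is $\agent$ that confirms $T$. Your ``main obstacle'' disappears once you make this symmetry-breaking choice explicit.
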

\begin{proof}
Suppose without loss of generality that $\agent$ ends the traversal of $R_{p,s-1}(\agent)$ at the same time or earlier than $\agent'$ ends the traversal of $R_{p,s-1}(\agent')$.
Let $R_{p,s}'(\agent)$ be the route traversed by $\agent$ in stage $s$ of phase $p$.
By Lemma~\ref{lem:closed}, $\routebegin(R_{p,s}'(\agent))=\routeend(R_{p,s-1}(\agent))=\home(\agent)$.
The route $R_{p,s}'(\agent)$ is constructed as a result of the execution of lines $\lstLEFirstCaseFeasible$-$\lstLEFirstCaseRoute$ of Algorithm $\algorithmLE$ by $\agent$.
Since $\lab''=\lab_p(\agent')$ in line $\lstLEFirstCaseRoute$ of Algorithm $\algorithmLE$, we obtain that $R_{p,s}'(\agent)$ and $R_{p,s}(\agent')$ form a tunnel with the tunnel core $C=(\cR(\home(\agent),(T,\overline{\hist(\lab_p(\agent'),s-1)})))$.
By Proposition~\ref{prop:tunnel}, $\agent$ and $\agent'$ will have a meeting while $\agent$ is in stage $s$ of phase $p$ and before $\agent'$ ends the traversal of $R_{p,s}(\agent')$.
This implies that \ref{it:confirms:stage} of Definition~\ref{def:confirms} holds.
Moreover, \ref{it:confirms:label} of Definition~\ref{def:confirms} is satisfied by assumption.
Let $T_{\agent}$ and $T_{\agent'}$ be the trails traversed by $\agent$ and $\agent'$, respectively, till the meeting.
By Proposition~\ref{prop:tunnel}, $(\cT(R_{p,s-1}(\agent)),\cT(C))=(T_{\agent},\overline{T_{\agent'}})$, where $C$ is the tunnel core.
This proves that \ref{it:confirms:trail} of Definition~\ref{def:confirms} holds.
Hence, the agent $\agent$ confirms $T$ as a result of the meeting.
\end{proof}

The role of the next lemma is to show that an agent never marks falsely an initial position of another agent in its view.
\begin{lemma} \label{lem:no_false_noting}
Let $\lab_p(\agent)=(\code(\view^{\depth}(\home(\agent))),f_1^{\agent},\ldots,f_p^{\agent})$ be the label of any agent $\agent$ in phase $p\in\{1,2,3\}$.
If $x$ is any node of $\view^{\depth}(\home(\agent))$ corresponding to a node of $G$ that is not an initial position of an agent, then $f_p^{\agent}(x)=0$.
\end{lemma}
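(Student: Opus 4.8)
The plan is to prove this by induction on the total number of stages processed by all agents (the same measure used in Lemmas~\ref{lem:history_simulation} and~\ref{lem:closed}), tracking how the binary mappings $f_p^{\agent}$ are built. The base case concerns $\lab_1(\agent)$, whose binary mapping $f_1^{\agent}$ (equivalently $f^{\agent}$ set in line~\lstInitSetf{} of $\algorithmInit$) assigns $1$ only to the root of $\view^{\depth}(\home(\agent))$, and the root corresponds to $\home(\agent)$, which \emph{is} an initial position of an agent; so vacuously no ``false'' mark is made. For the inductive step I would look at how $f_{p+1}^{\agent}$ is produced, namely by the call $f^{\agent}\leftarrow\algorithmUL(M)$ in line~\lstLECallUL{} of $\algorithmLE$, and show that every node $x$ that $\algorithmUL$ sets to $1$ in lines~\lstULNoteOnT{} and~\lstULUpdateF{} genuinely corresponds to an initial position.

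The two places in $\algorithmUL$ where $f^{\agent}(x)$ is set to $1$ are line~\lstULNoteOnT{}, where $x$ is the endpoint of the trail $T'$ from the root, and lines~\lstULInternalLoopStarts{}-\lstULUpdateF{}, where nodes $y$ are marked by transporting along the transition $\varphi$ the marks from the recursively computed mapping $f'=\algorithmUL(M_i)$. For line~\lstULNoteOnT{}: this branch is only entered when $\agent$ (or the agent with memory state $M_i$) confirms $T$ as a result of the meeting, so Lemma~\ref{lem:Tconfirmation} applies and tells us that $\routeend(\cR(\home(\agent),T'))=\home(\agent')$ — hence the node $x$ at the end of $T'$ from the root of $\view^{\depth}(\home(\agent))$ corresponds precisely to $\home(\agent')$, which is an initial position. (One should also handle the symmetric case where it is $\agent'$ who confirms $T$; here one uses that confirmation is itself a statement about the meeting, so by the appropriate symmetric form of Lemma~\ref{lem:Tconfirmation} the trail still connects the two initial positions correctly.) For the recursive branch: the mapping $f'$ is $\algorithmUL(M_i)$ applied to the (strictly shorter, hence valid for the induction) memory state $M_i$ of $\agent'$ at the time of the meeting, so by the inductive hypothesis every node of $\view^{\depth}(\home(\agent'))$ that $f'$ marks corresponds to an initial position of some agent; since $\varphi$ maps each node of $\view^{2(n-1)}(\home(\agent))$ to the node of $\view^{\depth}(\home(\agent'))$ representing the \emph{same} node of $G$, transporting the marks back along $\varphi$ can only mark, in $\view^{\depth}(\home(\agent))$, nodes that correspond to those same $G$-nodes, i.e. initial positions. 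Finally, $f_{p+1}^{\agent}=(\lab_p(\agent),f^{\agent})$ just appends this new mapping, and by the inductive hypothesis applied to $\lab_p(\agent)$ the earlier mappings $f_1^{\agent},\ldots,f_p^{\agent}$ already satisfy the claim.

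The main obstacle, I expect, is making the recursive-call argument airtight: one must verify that $M_i$ really is a memory state arising from a genuine (shorter) prefix of some agent's execution so that the inductive hypothesis and the correctness of the recursion (as asserted in the informal description of $\algorithmUL$) actually apply, and that the transition $\varphi$ computed in line~\lstULTransition{} is well-defined and does what its definition says — i.e., that $\view^{2(n-1)}(\home(\agent))$ and $\view^{\depth}(v)$ overlap enough (which is where Corollary~\ref{cor:other_views_computable} and the $\depth=3(n-1)$ bookkeeping matter) for $\varphi$ to exist and be computable. A secondary subtlety is the confirmation check in line~\lstULIfEssential{}: I would need to note that $\agent$ can correctly decide, from its own memory state and $M_i$, whether confirmation occurred, so that line~\lstULNoteOnT{} is executed only in the legitimate case covered by Lemma~\ref{lem:Tconfirmation}. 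Once these points are in place, the node-by-node argument above closes the induction.
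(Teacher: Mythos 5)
Your proposal is correct and takes essentially the same route as the paper's proof: the paper also treats the two marking sites separately, using Lemma~\ref{lem:Tconfirmation} for the direct mark at the end of $T'$ and the definition of a transition together with the recursive call for the transported marks, and it formalizes your induction as a minimal-counterexample argument on the length of the input memory state (the shortest $M$ producing a false mark). The only adjustment needed is to take memory-state length (not the number of stages processed) as the induction measure, which you in effect already do when you invoke the ``strictly shorter'' $M_i$ for the recursive call.
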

\begin{proof}
Suppose for a contradiction that an agent $\agent$ sets $f^{\agent}(x)$ to be $1$, and $x$ corresponds to a node of $G$ that is not an initial position of an agent.

Suppose that the input memory state $M$ of the agent $\agent$ is the shortest that satisfies this property.
This assumption implies that if $f'$ is computed by agent $\agent$ in line~$\lstULFindF$ of Function $\algorithmUL$ for any memory state $M_i$, $i=1,\ldots,j$, then $(\view^{\depth}(v),f')$ is a partially enhanced view from $v$, where $v$ corresponds to the node $x$ at the end of $T'$ from the root in $\view^{\depth}(\home(\agent))$.
Hence, if $v$ is an initial position of an agent, then, due to the definition of transition, each node $y$ from line~$\lstULUpdateF$ of Function $\algorithmUL$ corresponds to an initial position of an agent.

The latter implies that the assignment of $1$ to $f^{\agent}(x)$ occurs in line~$\lstULNoteOnT$ of $\algorithmUL$, as a consequence of a meeting with some agent $\agent'$ whose memory state was $M_i$ at the time of the meeting.
Hence, $x$ is at the end of $T'$ from the root in $\view^{\depth}(\home(\agent))$.
Due to line $\lstULIfEssential$ of $\algorithmUL$, either $\agent$ or $\agent'$ confirms $T$ as a result of their meeting.
By Lemma~\ref{lem:Tconfirmation}, $\routeend(\home(\agent),T')=\home(\agent')$, where $T'\in\{T,\overline{T}\}$ is determined in line~$\lstULDetermineTrail$ of $\algorithmUL$.
Thus, $\home(\agent')$ corresponds to the node at the end of $T'$ from the root in $\view^{\depth}(\home(\agent))$.
The latter implies that $\home(\agent')$ corresponds to $x$, contradicting our assumption.
\end{proof}

The next lemma is a companion result to Lemma~\ref{lem:no_false_noting}.
It says that if an agent confirms a trail $T$ as a result of a meeting with $\agent'$, then both of them correctly mark their respective initial positions in their views.
\begin{lemma} \label{lem:updateF}
Let $p\in\{1,2,3\}$.
Let $\agent$ and $\agent'$ be two agents with labels $\lab_{p+1}(\agent)=(\lab_p(\agent),f_{p+1}^{\agent})$ and $\lab_{p+1}(\agent')=(\lab_p(\agent'),f_{p+1}^{\agent'})$.
If agent $\agent$ confirms $T$ as a result of a meeting with agent $\agent'$ in phase $p'$, $p'\leq p$, then
\begin{enumerate}[label={\normalfont(\roman*)}]
 \item $f_{p+1}^{\agent}(x)=1$, where $x$ is at the end of $T'$ from the root in $\view^{\depth}(\home(\agent))$, such that $T'\in\{T,\overline{T}\}$ and $\routeend(\cR(\home(\agent),T'))=\home(\agent')$.
 \item $f_{p+1}^{\agent'}(x')=1$, where $x'$ is at the end of $\overline{T'}$ from the root in $\view^{\depth}(\home(\agent'))$, and $\routeend(\cR(\home(\agent'),\overline{T'}))=\home(\agent)$.
\end{enumerate}
\end{lemma}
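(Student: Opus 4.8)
The plan is to trace a confirmation of $T$ through Function~$\algorithmUL$, to argue that \emph{both} $\agent$ and $\agent'$ detect it and hence set the relevant entry of their binary mapping to $1$, and then to check that this entry survives until the end of phase~$p$. First observe that whether $\agent$ (or the agent holding a given memory state $M_i$) confirms $T$ as a result of a particular meeting is, by Definition~\ref{def:confirms}, computable from the memory states of the two meeting agents \emph{at the moment of the meeting}: condition~\ref{it:confirms:stage} is obtained by replaying the deterministic algorithm on each memory state; condition~\ref{it:confirms:label} uses only $n$ (hence $\cP^n$) and the executing agent's label, recoverable from its memory state; and condition~\ref{it:confirms:trail} uses the already-traversed trails, which are encoded in the memory states, together with the trails $\hist(\cdot,\cdot)$, computable by Lemma~\ref{lem:history_simulation}. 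Let $q$ be the phase of $\agent$ at the meeting, so $q\le p$; by Definition~\ref{def:confirms}\ref{it:confirms:stage} the phase $q'$ of $\agent'$ at the meeting satisfies $q'\le q\le p$. Since $\agent$'s memory permanently records this meeting together with $\agent'$'s memory state at that time, for every phase $r$ with $q\le r\le p$ the meeting appears in the list of line~\lstULMemoryStates\ when $\agent$ invokes $\algorithmUL$ at the end of phase~$r$, and the test in line~\lstULIfEssential\ evaluates to true there; symmetrically this happens for $\agent'$ for every phase $r$ with $q'\le r\le p$.

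To prove~(i), note that when the test in line~\lstULIfEssential\ fires for this meeting during $\agent$'s invocation of $\algorithmUL$ at the end of a phase $r\in\{q,\ldots,p\}$, line~\lstULNoteOnT\ sets $f^{\agent}(x)\leftarrow 1$, where $x$ is the node at the end of the trail $T'$ from the root of $\view^{\depth}(\home(\agent))$ and $T'$ is computed in line~\lstULDetermineTrail\ exactly as in Definition~\ref{def:confirms} (namely $T'=T$ if $\lab_q(\agent)=\lab'$ and $T'=\overline{T}$ if $\lab_q(\agent)=\lab''$, where $\cP_s^n=(\lab',\lab'',T)$ is the triple $\agent$ was processing at the meeting). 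By Lemma~\ref{lem:Tconfirmation}, $\routeend(\cR(\home(\agent),T'))=\home(\agent')$, so $x$ corresponds to $\home(\agent')$ and $T'\in\{T,\overline{T}\}$ satisfies the stated condition. Since a single invocation of $\algorithmUL$ only changes values of $f^{\agent}$ from $0$ to $1$ (line~\lstULInitF\ initialises it, and lines~\lstULNoteOnT\ and~\lstULUpdateF\ only write $1$'s), the entry at $x$ is still $1$ at the end of the invocation performed at the end of phase~$p$, which outputs $f_{p+1}^{\agent}$; hence $f_{p+1}^{\agent}(x)=1$.

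To prove~(ii), first note that reversing $\cR(\home(\agent),T')$ — which has trail $T'$ and leads from $\home(\agent)$ to $\home(\agent')$ — yields a route with trail $\overline{T'}$ leading from $\home(\agent')$ to $\home(\agent)$, so $\routeend(\cR(\home(\agent'),\overline{T'}))=\home(\agent)$ and the node $x'$ at the end of $\overline{T'}$ from the root of $\view^{\depth}(\home(\agent'))$ corresponds to $\home(\agent)$. When $\agent'$ processes the same meeting inside $\algorithmUL$, the test in line~\lstULIfEssential\ fires again (the predicate is the same, and it holds because $\agent$ confirms $T$), so line~\lstULNoteOnT\ is reached; it remains to check that the trail $\agent'$ computes in line~\lstULDetermineTrail\ equals $\overline{T'}$. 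For this I would invoke the tunnel construction behind Lemma~\ref{lem:guaranteed_T_confirmation}: a meeting that makes $\agent$ confirm $T$ arises from a tunnel formed in stage~$s$ of phase~$q$ with $\agent$ carrying a label in $\{\lab',\lab''\}$ and $\agent'$ carrying the complementary label of $\cP_s^n$, so from $\agent'$'s side line~\lstULDetermineTrail\ selects the complementary trail, namely $\overline{T'}$. Hence line~\lstULNoteOnT\ sets $f^{\agent'}(x')\leftarrow 1$, and the monotonicity argument of part~(i) gives $f_{p+1}^{\agent'}(x')=1$.

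The persistence bookkeeping and the observation that $\algorithmUL$ only writes $1$'s are routine. The genuinely delicate point is the identification used for part~(ii): that, from $\agent'$'s viewpoint, the meeting is recorded as taking place in stage~$s$ of phase~$q$ with $\agent'$ holding the label complementary to $\lab_q(\agent)$ in $\cP_s^n$, so that line~\lstULDetermineTrail\ returns precisely $\overline{T'}$ rather than an unrelated trail. Establishing this requires unwinding condition~\ref{it:confirms:trail} of Definition~\ref{def:confirms} together with the explicit tunnel in the proof of Lemma~\ref{lem:guaranteed_T_confirmation}, while using condition~\ref{it:confirms:stage} of Definition~\ref{def:confirms} to guarantee that $\agent'$ is still within phase~$p$ when it processes the meeting.
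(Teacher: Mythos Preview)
Your approach mirrors the paper's: for part~(i) you trace the meeting through Function $\algorithmUL$ and invoke Lemma~\ref{lem:Tconfirmation} to identify $x$ with $\home(\agent')$, exactly as the paper does (the monotonicity remark is a harmless extra). For part~(ii) you are actually more careful than the paper, which dispatches the whole thing with a single sentence---``performs analogous computations''---and you correctly isolate the real issue: when $\agent'$ executes line~\lstULDetermineTrail\ of $\algorithmUL$ for this meeting, why does it obtain precisely $\overline{T'}$?

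Your resolution, however, does not close the gap. You appeal to the tunnel of Lemma~\ref{lem:guaranteed_T_confirmation} to argue that $\agent'$ must carry the label complementary to $\lab_q(\agent)$ in $\cP_s^n$, so that line~\lstULDetermineTrail\ on $\agent'$'s side selects $\overline{T'}$. But Lemma~\ref{lem:guaranteed_T_confirmation} runs the other way: it says that \emph{if} the two agents already hold the two labels of $\cP_s^n$ and are linked by $T$, \emph{then} a confirmation occurs. Definition~\ref{def:confirms} places no constraint on $\agent'$'s label---condition~\ref{it:confirms:label} mentions only $\lab_p(\agent)$---and condition~\ref{it:confirms:stage} even allows $\agent'$ to be in a strictly earlier phase, where its phase-$p$ label has not yet been computed. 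So the complementary-label claim does not follow from the bare hypothesis that $\agent$ confirms $T$, and line~\lstULDetermineTrail\ (which reads the \emph{executing} agent's current label) is not automatically well-defined for $\agent'$. In fairness the paper's ``analogous computations'' glosses over exactly the same point; your instinct that something must be unwound here is right, but the unwinding would have to go through condition~\ref{it:confirms:trail} directly---showing that $T_{\agent'}$ is forced to be a prefix of $(\hist(\lab,s{-}1),\overline{T'})$ and hence that $\agent'$ can recover $\overline{T'}$ from the two memory states regardless of its own label---rather than through the converse of Lemma~\ref{lem:guaranteed_T_confirmation}.
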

\begin{proof}
Suppose that $\agent$ is in some stage of phase $p'$, $p'\leq p$, when a meeting with $\agent'$ occurs as a result of which $\agent$ confirms $T$.
Hence, one of the memory states in line~$\lstULMemoryStates$ of Function $\algorithmUL$ called at the end of phase $p$ is the memory state $M_i$, $i\in\{1,\ldots,j\}$, of $\agent'$ at the time of the meeting.
We consider the $i$-th iteration of the `for' loop in lines $\lstULMainLoopStarts$-$\lstULMainLoopEnds$ of Function $\algorithmUL$, i.e., informally speaking, the iteration in which $\agent$ `analyzes' the meeting with $\agent'$.
The agent $\agent$ determines in line~$\lstULIfEssential$ of Function $\algorithmUL$ the fact that either $\agent$ or $\agent'$ confirms $T$ as a result of the meeting.
Then, in line~$\lstULNoteOnT$ of Function $\algorithmUL$, $f^{\agent}(x)$ is set to $1$, where $x$ is at the end of $T'$ from the root of $\view^{\depth}(\home(\agent))$.
By Lemma~\ref{lem:Tconfirmation} and by the choice of $T'$ in line~$\lstULDetermineTrail$ of Function $\algorithmUL$, $x$ corresponds to $\home(\agent')$.
Function $\algorithmUL$ returns $f^{\agent}$.
Due to line~$\lstLENewLabel$ of Algorithm $\algorithmLE$, $f_{p+1}^{\agent}(x)=1$.

In order to prove the second part of the lemma notice that agent $\agent'$ has also access to the memory states of $\agent$ and $\agent'$ at the time of the meeting.
Hence, upon completing phase $p$ it performs analogous computations as $\agent$ during its execution of Function $\algorithmUL$ at the end of phase $p$ and sets $f_{p+1}^{\agent'}(x')=1$.
\end{proof}

Let $\agent$ be an agent and let $x$ be a node of the truncated view $\view^{\depth}(\home(\agent))$ at the end of a trail $T$ from the root of $\view^{\depth}(\home(\agent))$.
If the agent $\agent$ sets $f^{\agent}(x)=1$ during the execution of Function $\algorithmUL$ and $x$ corresponds to the initial position of some agent $\agent'$, then we say that $\agent$ \emph{noted $\agent'$ on $T$}.

A label $\lab=(\code(\view^{\depth}(\home(\agent))),f_1^{\agent},\ldots,f_p^{\agent})$, where $p\in\{1,2,3\}$, of an agent $\agent$ is \emph{complete with respect to} $\agent'$ if $f_p^{\agent}(x)=1$ for each node $x$ of $\view^{\depth}(\home(\agent))$ that corresponds to an initial position of $\agent'$.
We say that $\lab$ is \emph{semi-complete with respect to} $\agent'$ if $f_p^{\agent}(x)=1$ for each node $x$ that corresponds to an initial position of $\agent'$ and belongs to a level $i\leq 2(n-1)$  of $\view^{\depth}(\agent)$.

The next lemma explains how agents confirm trails as a result of their meetings.
Agents with different labels can confirm any trail in $\cS_n$ between their initial positions, while agents with equal labels are able to confirm only `palindromes'.
\begin{lemma} \label{lem:label_complete}
Let $p\in\{1,2,3\}$ and let $\agent$ and $\agent'$ be any two agents.
\begin{enumerate}[label={\normalfont(\roman*)}]
 \item\label{it:different_labels} If $\lab_p(\agent)\neq\lab_p(\agent')$, then $\lab_{p+1}(\agent)$ is complete with respect to $\agent'$.
 \item\label{it:identical_labels} If $\lab_p(\agent)=\lab_p(\agent')$, then prior to the end of its phase $p$ the agent $\agent$ has noted $\agent'$ on each trail $T$ such that $T\in\cS_n$,$T=\overline{T}$ and $\home(\agent')=\routeend(\cR(\home(\agent),T))$.
\end{enumerate}
\end{lemma}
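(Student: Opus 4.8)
The two parts are proved together by induction on $p$, but the heart of the argument lies in a single stage analysis. Recall that $\cP^n$ enumerates \emph{all} admissible triples $(\lab,\lab',T)$ with $\lab,\lab'\in\labels_3$, $\len(\lab)=\len(\lab')$, $T\in\cS_n$, subject to the constraint that either $\lab\neq\lab'$, or $\lab=\lab'$ and $T=\overline T$. This is exactly the set of triples the internal `for' loop will process. So the plan is: fix agents $\agent,\agent'$; identify, for each relevant trail $T$, the stage index $s$ with $\cP_s^n=(\lab_p(\agent),\lab_p(\agent'),T)$ (or the symmetric triple), and show that during that stage a meeting occurs in which $T$ is confirmed, after which Lemma~\ref{lem:updateF} does the bookkeeping.

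For part~\ref{it:different_labels}, suppose $\lab_p(\agent)\neq\lab_p(\agent')$, and let $x$ be any node of $\view^{\depth}(\home(\agent))$ corresponding to an initial position of $\agent'$. Then $x$ sits at the end of some trail $T'$ from the root, with $\routeend(\cR(\home(\agent),T'))=\home(\agent')$. Because $\view^{\depth}=\view^{3(n-1)}$ has depth $3(n-1)$ and $\cS_n$ contains all even sequences over $\{0,\ldots,n-2\}$ of length at most $6(n-1)$, the trail $T'$ (of length at most $2\cdot 3(n-1)=6(n-1)$) lies in $\cS_n$; and since the labels differ, the triple $(\lab_p(\agent),\lab_p(\agent'),T')$ — with $T'$ oriented appropriately, or its symmetric counterpart $(\lab_p(\agent'),\lab_p(\agent'),\overline{T'})$ — belongs to $\cP^n$, say as $\cP_s^n$. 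By Lemma~\ref{lem:guaranteed_T_confirmation}, before either agent finishes phase $p$ the agents $\agent$ and $\agent'$ meet and one of them confirms $T'$. Then Lemma~\ref{lem:updateF} (applied with the confirming agent in the role of ``$\agent$'' there) yields $f_{p+1}^{\agent}(x)=1$: whichever of $\agent,\agent'$ confirms, part~(i) or part~(ii) of Lemma~\ref{lem:updateF} marks the node at the end of the trail leading to $\agent'$ in $\agent$'s view. Since $x$ was an arbitrary node of $\view^{\depth}(\home(\agent))$ corresponding to $\home(\agent')$, the label $\lab_{p+1}(\agent)$ is complete with respect to $\agent'$.

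For part~\ref{it:identical_labels}, suppose $\lab_p(\agent)=\lab_p(\agent')$ and let $T\in\cS_n$ with $T=\overline T$ and $\home(\agent')=\routeend(\cR(\home(\agent),T))$. Now the triple $(\lab_p(\agent),\lab_p(\agent'),T)=(\lab_p(\agent),\lab_p(\agent),T)$ is \emph{still} a member of $\cP^n$, precisely because the defining condition of $\cP^n$ allows $\lab=\lab'$ when $T=\overline T$ — this palindrome clause is exactly why the lemma can only promise confirmation of palindromes in the equal-label case. Let $\cP_s^n$ be this triple; by Lemma~\ref{lem:guaranteed_T_confirmation} the two agents meet in stage $s$ of phase $p$ (before either finishes the phase) and one confirms $T$, and then Lemma~\ref{lem:updateF} shows $\agent$ noted $\agent'$ on $T$ (here $T'=T=\overline{T'}$, so the two marked nodes in Lemma~\ref{lem:updateF} are the expected ones). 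This gives the claim.

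The main obstacle — and the place to be careful — is the orientation/symmetry bookkeeping in part~\ref{it:different_labels}: depending on whether $\lab_p(\agent)$ is the first or second component of the processed triple, the roles of $T$ and $\overline T$ and of ``$\agent$'' and ``$\agent'$'' in Lemmas~\ref{lem:guaranteed_T_confirmation}, \ref{lem:Tconfirmation} and \ref{lem:updateF} swap, and one must check that in every case it is a node corresponding to $\home(\agent')$ in \emph{$\agent$'s own} view $\view^{\depth}(\home(\agent))$ that gets marked. One also has to note that $\agent'$ may have \emph{several} occurrences in $\agent$'s truncated view (the view is a tree, not the graph), so the argument must be run once per such occurrence $x$, which is fine since each occurrence $x$ determines its own trail $T'$ and hence its own stage. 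A final small point: we use $\len(\lab_p(\agent))=\len(\lab_p(\agent'))=p$ to know the triple actually appears in $\cP^n$ (the $\len$ equality clause), which holds because every agent is in phase $p$ with a length-$p$ label by construction.
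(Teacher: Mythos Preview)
Your argument is correct and follows essentially the same route as the paper's own proof: for each relevant trail $T$, locate the stage $s$ with $\cP_s^n=(\lab_p(\agent),\lab_p(\agent'),T)$, invoke Lemma~\ref{lem:guaranteed_T_confirmation} to force a meeting with a confirmation of $T$, and then invoke Lemma~\ref{lem:updateF} to conclude that the appropriate node in $\agent$'s view is marked. Two minor remarks: the framing ``by induction on $p$'' is unnecessary since the argument for each $p$ is self-contained, and there is a typo in your symmetric triple (it should read $(\lab_p(\agent'),\lab_p(\agent),\overline{T'})$); in fact, since $\cP^n$ contains \emph{all} ordered triples with distinct labels, the triple $(\lab_p(\agent),\lab_p(\agent'),T')$ is already present and no orientation case-split is needed.
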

\begin{proof}
To prove \ref{it:different_labels} let $T\in\cS_n$ be any trail such that $\cR(\home(\agent),T)$ leads from $\home(\agent)$ to $\home(\agent')$ in $G$.
By construction of $\cP^n$, $\cP_i^n=(\lab_p(\agent),\lab_p(\agent'),T)$ for some index $i\in\{1,\ldots,|\cP^n|\}$, because $\lab_p(\agent)\neq\lab_p(\agent')$.
By Lemma~\ref{lem:guaranteed_T_confirmation}, one of the agents $\agent$ or $\agent'$ confirms $T$ as a result of a meeting that occurs when it is in phase $p$.
This is done by checking the conditions \ref{it:confirms:stage}, \ref{it:confirms:label} and \ref{it:confirms:trail} in Definition~\ref{def:confirms}, which can be accomplished by analyzing the memory states of $\agent$  and $\agent'$ at the time of the meeting.
By Lemma~\ref{lem:updateF}, $f^{\agent}(x)=1$, where $x$ is at the end of $T$ in $\view^{\depth}(\home(\agent))$, regardless of which agent confirmed $T$ as a result of the meeting.

The proof of \ref{it:identical_labels} is analogous.
\end{proof}

\begin{lemma} \label{lem:phase1}
There exist two agents $\agent$ and $\agent'$ such that $\lab_2(\agent)\neq\lab_2(\agent')$.
\end{lemma}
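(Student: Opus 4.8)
The plan is a proof by contradiction. Suppose $\lab_2(\agent)=\lab_2(\agent')$ for every pair of agents. The first component of every label $\lab_2(\agent)$ is $\code(\view^{\depth}(\home(\agent)))$, so all these codes coincide; by Proposition~\ref{prop:codes_distinguish} all agents then have the same truncated view $\view^{\depth}(\home(\agent))$, and since $\depth\ge n-1$, Proposition~\ref{prop:No} yields that all agents have the same view. Hence the disjunct ``there exist agents with different views'' of condition $\conditionEC$ is false, so, as $\conditionEC$ holds, there is a non-uniform palindrome in the initial configuration.

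Next I would normalise this palindrome. If its trail contains a \emph{backtrack} --- two consecutive equal entries, forcing the route to retrace the edge just traversed --- then deleting the corresponding two edges from the trail does not change, for \emph{any} starting node, the endpoint of the associated route; moreover in a palindrome such backtracks occur in mirror-symmetric pairs (with a self-paired central one when the half-length is even), so deleting all of them leaves a palindrome that is still non-uniform and whose trail $T$ is backtrack-free, hence is the simple-path trail of a node of every view. The remaining point --- which I expect to be the main obstacle --- is to bound the length of $T$ by $6(n-1)$, so that $T\in\cS_n$: this is the only place where one must rule out that the asymmetry of the configuration is detectable only through palindromes longer than the window $\cS_n$ used in phase~$1$. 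A careful argument is needed, because any further shortening of $T$ (excising a repeated-vertex loop) must simultaneously preserve the palindrome property, the feasibility of the route witnessing non-uniformity, and the fact that its endpoint is unoccupied; here one uses that $G$ has at most $n$ nodes together with the $(n-1)$-depth periodicity of views from Propositions~\ref{prop:No} and~\ref{prop:No2}.

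Granting $T\in\cS_n$, let $\agent$ and $\agent'$ be agents with $\routeend(\cR(\home(\agent),T))=\home(\agent')$ (the two occupied endpoints of the palindrome), and let $\cR(\home(\mu),T)$ be a route witnessing non-uniformity, so $\home(\mu)$ is an initial position while $d:=\routeend(\cR(\home(\mu),T))$ is not. Since all $\lab_1$ coincide we have $\lab_1(\agent)=\lab_1(\agent')$, so part~\ref{it:identical_labels} of Lemma~\ref{lem:label_complete} (applied with $p=1$) gives that $\agent$ notes $\agent'$ on $T$ before completing phase~$1$; consequently $f_2^{\agent}(x)=1$, where $x$ is the node at the end of $T$ from the root of $\view^{\depth}(\home(\agent))$ --- a node of depth at most $3(n-1)$, hence lying in $\view^{\depth}(\home(\agent))$ --- and $x$ corresponds to the occupied node $\home(\agent')$. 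By the contradiction hypothesis $\lab_2(\mu)=\lab_2(\agent)$, and since the two labels share the same first component, $f_2^{\mu}$ and $f_2^{\agent}$ are equal as binary mappings on this common truncated view, so $f_2^{\mu}(x)=1$. But in $\view^{\depth}(\home(\mu))$ the node $x$ at the end of $T$ from the root corresponds to $\routeend(\cR(\home(\mu),T))=d$, which is not an initial position of any agent; thus $f_2^{\mu}(x)=1$ contradicts Lemma~\ref{lem:no_false_noting} applied to $\mu$ with $p=2$. This contradiction establishes the lemma.
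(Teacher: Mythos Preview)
Your overall strategy coincides with the paper's: reduce to the case where all agents share the same view (hence the same $\lab_1$), invoke the non-uniform palindrome guaranteed by condition~$\conditionEC$, then use Lemma~\ref{lem:label_complete}\ref{it:identical_labels} for the agent whose palindrome endpoint is occupied and Lemma~\ref{lem:no_false_noting} for the agent whose endpoint is not, concluding that their second binary mappings differ. The paper argues directly (exhibiting two agents with distinct $\lab_2$); you phrase it as a contradiction from ``all $\lab_2$ equal'', but the content is identical, and your final paragraph is correct once the hypothesis $T\in\cS_n$ is in place.

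The genuine gap is precisely the one you flag: Lemma~\ref{lem:label_complete}\ref{it:identical_labels} requires $T\in\cS_n$, i.e.\ at most $3(n-1)$ edges, whereas nothing in the definition of a non-uniform palindrome bounds its length. Your backtrack-removal step does not close this: eliminating backtracks turns the trail into a simple path in the view tree, but such paths can be arbitrarily long even when $G$ has at most $n$ nodes (the view tree is infinite). Your proposed next move, ``excise a repeated-vertex loop while preserving the palindrome property, feasibility, and the unoccupied endpoint'', is exactly the hard part, and you do not carry it out; an arbitrary loop excision will in general destroy $T=\overline{T}$. So as written the argument is incomplete. It is worth noting that the paper's own proof passes over this same point without comment: it simply writes ``let $x$ and $x'$ be the nodes of $\view^{\depth}(\home(\agent))$ \ldots\ at the end of $T$ from the roots'' and applies Lemma~\ref{lem:label_complete}\ref{it:identical_labels} as though the palindrome supplied by~$\conditionEC$ were automatically short. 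You have been more scrupulous than the source in identifying the issue, but neither proof actually resolves it.
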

\begin{proof}
If there exist two agents $\agent$ and $\agent'$ with different views, $\view(\home(\agent))\neq\view(\home(\agent'))$, then by Proposition~\ref{prop:No}, $\view^{\depth}(\home(\agent))\neq\view^{\depth}(\home(\agent'))$.
Proposition~\ref{prop:codes_distinguish} implies $\lab_1(\agent)\neq\lab_1(\agent')$, and consequently $\lab_2(\agent)\neq\lab_2(\agent')$.
Thus, assume in the following that the views from the initial positions of all agents are equal.
In view of condition $\conditionEC$ there exists a non-uniform palindrome.
Hence, there exist two agents $\agent,\agent'$ and a trail $T$, such that $T=\overline{T}$, $\cR(\home(\agent),T)$ leads from $\home(\agent)$ to a node $u$ that is an initial position of an agent $\agent''$, and $\cR(\home(\agent'),T)$ leads from $\home(\agent')$ to a node $u'$ that is not an initial position of any agent.
Let $x$ and $x'$ be the nodes of $\view^{\depth}(\home(\agent))$ and $\view^{\depth}(\home(\agent'))$, respectively, at the end of $T$ from the roots.
Note that $x$ corresponds to $u$ and $x'$ corresponds to $u'$.
By Lemma~\ref{lem:label_complete}\ref{it:identical_labels}, $\agent$ noted $\agent''$ on $T$ in phase $1$ and, by Lemma~\ref{lem:no_false_noting}, $\agent'$ does not note any agent on $T$ during the entire execution of its algorithm.
Hence, $f^{\agent}(x)=1$ and $f^{\agent'}(x')=0$ after the execution of Procedure $\algorithmUL$ at the end of phase $1$.
Since the views  $\view^{\depth}(\home(\agent))$ and $\view^{\depth}(\home(\agent'))$ are equal, we have
 $\lab_2(\agent)=(\code(\view^{\depth}(\home(\agent))),f_1^{\agent},f_2^{\agent})$ and
 $\lab_2(\agent')=(\code(\view^{\depth}(\home(\agent'))),f_1^{\agent'},f_2^{\agent'})$, where $f_2^{\agent}\neq f_2^{\agent'}$.
Hence, $\lab_2(\agent)\neq\lab_2(\agent')$.
\end{proof}

The role of the next lemma is to explain indirect learning of initial positions of other agents.
If two agents $\agent$ and $\agent'$ have equal labels, then they mutually situate their initial positions using a third agent $\agent''$ with a different label as an intermediary.
Such an agent exists by Lemma~\ref{lem:phase1}.
Agents $\agent$ and $\agent'$ can correctly fill their initial positions to depth $2(n-1)$ in their views due to the fact that the intermediary $\agent''$ has their initial positions to depth $\depth$.

\begin{lemma} \label{lem:completeTOsemicomplete}
Let $\agent$ and $\agent'$ be two agents such that $\lab_p(\agent)$ is complete with respect to $\agent'$, $p\in\{1,2,3\}$.
If $\agent''$ is any agent such that $\lab_p(\agent'')\neq\lab_p(\agent)$, then $\lab_{p+1}(\agent'')$ is semi-complete with respect to $\agent'$.
\end{lemma}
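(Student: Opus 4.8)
The plan is to realise the hypothesis through the transplanting performed by Function $\algorithmUL$: since $\agent$ knows where $\agent'$ sits and $\agent''$ is forced to meet $\agent$, agent $\agent''$ can copy $\agent$'s knowledge of $\agent'$ into its own view. First I would dispose of the easy case: if $\lab_p(\agent'')\neq\lab_p(\agent')$, then Lemma~\ref{lem:label_complete}\ref{it:different_labels} applied to $\agent''$ and $\agent'$ gives that $\lab_{p+1}(\agent'')$ is \emph{complete} with respect to $\agent'$, hence a fortiori semi-complete. So assume $\lab_p(\agent')=\lab_p(\agent'')$; together with $\lab_p(\agent'')\neq\lab_p(\agent)$ this means that $\agent$ carries a label different from the common label of $\agent'$ and $\agent''$ and thus acts as an intermediary. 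Let $T$ be the trail of a shortest path from $\home(\agent)$ to $\home(\agent'')$ in $G$; since $G$ has at most $n$ nodes, $T\in\cS_n$, $\routeend(\cR(\home(\agent),T))=\home(\agent'')$ and $\cR(\home(\agent),T)$ has length at most $n-1$. As $\lab_p(\agent)\neq\lab_p(\agent'')$, one of $(\lab_p(\agent),\lab_p(\agent''),T)$, $(\lab_p(\agent''),\lab_p(\agent),\overline{T})$ equals $\cP_s^n$ for some stage $s$, so Lemma~\ref{lem:guaranteed_T_confirmation} guarantees that before either agent finishes phase $p$ they have a meeting as a result of which one of them confirms $T$.

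Next I would follow this meeting into the call of $\algorithmUL$ made by $\agent''$ at the end of phase $p$ (line~$\lstLECallUL$ of $\algorithmLE$). Let $M_i$ be $\agent$'s memory state at the meeting. In the iteration of the loop of lines~$\lstULMainLoopStarts$--$\lstULMainLoopEnds$ handling this meeting the test of line~$\lstULIfEssential$ passes (one of $\agent$, $\agent''$ confirms $T$); by Lemma~\ref{lem:Tconfirmation} the node $x$ marked in line~$\lstULNoteOnT$ is the node of $\view^{\depth}(\home(\agent''))$ representing $\home(\agent)$; and because $\cR(\home(\agent''),T)$ has length at most $n-1$, lines~$\lstULIfShortTrail$--$\lstULUpdateF$ run. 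There $\agent''$ recovers $f'=\algorithmUL(M_i)$ and builds a transition $\varphi$ from $\view^{2(n-1)}(\home(\agent''))$ to $\view^{\depth}(\home(\agent))$, which exists and is computable (Lemma~\ref{lem:view_extension}, Corollary~\ref{cor:other_views_computable}) because $\home(\agent)$ lies within distance $n-1$ of $\home(\agent'')$, so every node at depth $\le 2(n-1)$ in $\view(\home(\agent''))$ sits at depth $\le 3(n-1)$ in $\view(\home(\agent))$; moreover $\varphi$ sends each node to the node of $G$ it represents. The crucial claim is that $f'$ marks every node of $\view^{\depth}(\home(\agent))$ representing an initial position of $\agent'$: this is exactly where ``$\lab_p(\agent)$ is complete with respect to $\agent'$'' enters, since $f_p^{\agent}=\algorithmUL(\text{$\agent$'s memory at the end of phase }p-1)$ marks all such nodes and $\algorithmUL$ is monotone in the meetings recorded by its memory-state argument, whence $f'\trianglerighteq f_p^{\agent}$. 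Granting this, for every node $y$ of $\view^{2(n-1)}(\home(\agent''))$ representing an initial position of $\agent'$ the node $\varphi(y)$ represents the same position, so $f'(\varphi(y))=1$ and line~$\lstULUpdateF$ sets $f^{\agent''}(y)=1$; after line~$\lstLENewLabel$ of $\algorithmLE$ this gives $f_{p+1}^{\agent''}(y)=1$. As $y$ ranges over all initial positions of $\agent'$ up to depth $2(n-1)$, this is precisely semi-completeness of $\lab_{p+1}(\agent'')$ with respect to $\agent'$, while Lemma~\ref{lem:no_false_noting} guarantees that $\lab_{p+1}(\agent'')$ marks nothing spurious.

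The hard part is the claim $f'\trianglerighteq f_p^{\agent}$, i.e. that the memory state $M_i$ of $\agent$ at the confirming meeting already records everything $\agent$ knew at the end of phase $p-1$: a priori the adversary may shape the tunnel of Lemma~\ref{lem:guaranteed_T_confirmation} so that it is $\agent''$, not $\agent$, that walks the long branch and so that the meeting happens while $\agent$ has barely moved, in which case $M_i$ says nothing about $\agent'$. To get around this I would not commit to the single trail $T$ above but run the argument over the whole family of triples $(\lab_p(\agent),\lab_p(\agent''),T')$ and $(\lab_p(\agent''),\lab_p(\agent),\overline{T'})$ with $T'\in\cS_n$ and $\routeend(\cR(\home(\agent),T'))=\home(\agent'')$, and show that for at least one of them the agent playing the role of the confirmer in the proof of Lemma~\ref{lem:guaranteed_T_confirmation} is $\agent$ while it is already in phase $p$, so that $M_i$ extends $\agent$'s end-of-phase-$(p-1)$ memory; Lemma~\ref{lem:updateF} makes the bookkeeping of lines~$\lstULNoteOnT$--$\lstULUpdateF$ symmetric in $\agent$ and $\agent''$, so which of the two formally confirms $T$ is immaterial. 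Pinning down this timing argument, and checking the monotonicity of $\algorithmUL$ against its own recursion in line~$\lstULFindF$, is the only delicate point; the rest is bookkeeping on views, trails, and Definition~\ref{def:confirms}.
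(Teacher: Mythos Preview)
Your overall route is exactly the paper's: choose a short trail $T$ between $\home(\agent'')$ and $\home(\agent)$ (length at most $n-1$), invoke Lemma~\ref{lem:guaranteed_T_confirmation} to produce a confirming meeting before $\agent''$ finishes phase $p$, and then trace lines~$\lstULIfEssential$--$\lstULUpdateF$ of $\algorithmUL$ together with the transition $\varphi$ to copy $\agent$'s marks on the positions of $\agent'$ into $\view^{2(n-1)}(\home(\agent''))$. The paper does not separate out the easy case $\lab_p(\agent'')\neq\lab_p(\agent')$; it runs the intermediary argument uniformly, but your case split is harmless.

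Where you part company with the paper is precisely the step you flag as ``the hard part'': the inequality $f'=\algorithmUL(M_i)\trianglerighteq f_p^{\agent}$. The paper does not argue this at all. It simply asserts that the update in lines~$\lstULInternalLoopStarts$--$\lstULUpdateF$ succeeds ``due to the fact that $\lab_p(\agent)$ is complete with respect to $\agent'$,'' i.e.\ it takes for granted that $\agent$'s memory state $M_i$ at the confirming meeting already contains at least everything that went into $f_p^{\agent}$. So the subtlety you isolate is one the paper treats as a non-issue rather than resolves.

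Your proposed patch, however, does not close the gap. Scanning the family of triples $(\lab_p(\agent),\lab_p(\agent''),T')$ and hoping that for one of them $\agent$ is the confirmer in phase~$p$ fails against the obvious adversary: if $\agent$ is held strictly behind $\agent''$ throughout, then for \emph{every} stage $s$ it is $\agent''$ that reaches stage $s$ of phase~$p$ first and hence confirms (by the proof of Lemma~\ref{lem:guaranteed_T_confirmation}), while $\agent$'s memory at each such meeting may still reflect only phase~$1$. Varying $T'$ does not change which of the two agents is the laggard. Nor does Lemma~\ref{lem:updateF} help here, since it only asserts that both agents mark each other's home on $T'$, not that either inherits the other's deeper knowledge of $\agent'$. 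In short, you have put your finger on a step that the paper's proof asserts without justification; your sketch identifies the difficulty correctly but the timing argument you outline does not yet supply the missing piece.
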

\begin{proof}
Since $\lab_p(\agent'')\neq\lab_p(\agent)$ we obtain, by the definition of $\cP^n$, that $(\lab_{p}(\agent''),\lab_{p}(\agent),T)=\cP_i^n$ for some $i\in\{1,\ldots,|\cP^n|\}$, where $T$ is such a trail that the route $R''=\cR(\home(\agent''),T)$ contains at most $n-1$ edges and $\routeend(R'')=\home(\agent)$.
We analyze the execution of Function $\algorithmUL$ by $\agent''$ at the end of phase $p$.
By Lemma~\ref{lem:guaranteed_T_confirmation}, $\agent$ or $\agent''$ confirms $T$ as a result of their meeting prior to the completion of phase $p$ by $\agent''$, i.e., prior to the execution of Function $\algorithmUL$ we consider.
The agent $\agent''$ learns this fact in line~$\lstULIfEssential$ of Function $\algorithmUL$.
Then, the condition in line~$\lstULIfShortTrail$ of Function $\algorithmUL$ is satisfied by assumption, and $\agent''$ verifies this condition by checking whether the length of $T$ does not exceed $2(n-1)$.
Consequently, $\agent''$ finds in line~$\lstULTransition$ of Function $\algorithmUL$ a transition $\varphi$ that maps the node at the end of $T$ in $\view^{\depth}(\home(\agent''))$ to the root of $\view^{\depth}(\home(\agent))$, because $\routeend(R'')=\home(\agent)$.
Since $R''$ is of length at most $n-1$, $\agent''$ sets $f^{\agent''}(y)=1$ (in the `for' loop in lines $\lstULInternalLoopStarts$-$\lstULUpdateF$ of Function $\algorithmUL$) for each node $y$ at depth at most $2(n-1)$ in $\view^{\depth}(\home(\agent''))$ corresponding to $\home(\agent')$.
The latter is due to the fact that $\lab_p(\agent)$ is complete with respect to $\agent'$.
This proves that $\lab_{p+1}(\agent'')$ is semi-complete with respect to $\agent'$.

See Figure~\ref{fig:complete}, where the root and some nodes that correspond to the initial position of $\agent'$ have been marked on $\view^{3(n-1)}(\home(\agent))$, and we show a transition that allows to determine all nodes in the view of $\agent''$ to the depth $2(n-1)$ corresponding to $\home(\agent')$.
\begin{figure}[hbt]
	\begin{center}
	\input{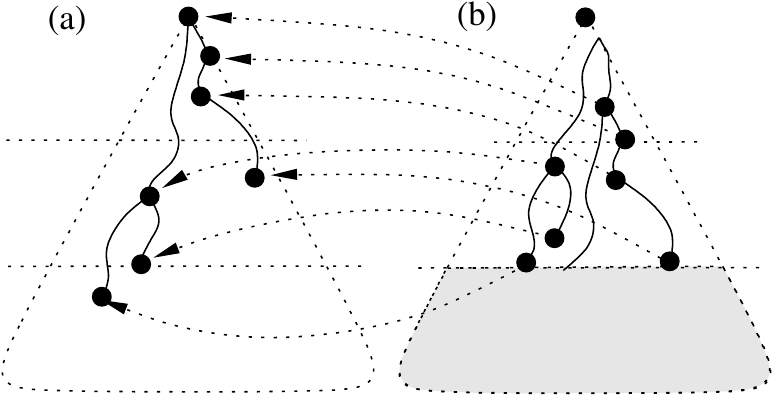_t}
	\caption{(a) $\view^{\depth}(\home(\agent))$; (b) $\view^{2(n-1)}(\home(\agent''))$; $\agent''$ learns all nodes corresponding to $\home(\agent')$ to the depth $2(n-1)$ in its view, on the basis of $\view^{\depth}(\home(\agent))$}
	\label{fig:complete}
	\end{center}
\end{figure}
\end{proof}

In view of the next lemma, upon completion of phase $3$ an agent has correctly situated all nodes corresponding to initial positions of all agents to depth $2(n-1)$ in its view.

\begin{lemma} \label{lem:phase3}
Let $\agent$ be any agent.
Then, $\lab_4(\agent)$ is semi-complete with respect to each agent $\agent'$.
\end{lemma}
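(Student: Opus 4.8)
The plan is to derive the statement from three facts already established: Lemma~\ref{lem:label_complete}\ref{it:different_labels} (an agent whose phase-$p$ label differs from that of $\agent'$ has a phase-$(p+1)$ label complete with respect to $\agent'$), Lemma~\ref{lem:completeTOsemicomplete} (completeness of one agent's label with respect to $\agent'$ propagates, as semi-completeness, to every agent with a different label in the same phase), and Lemma~\ref{lem:phase1} (two agents have distinct labels already after phase~$1$). Fix an agent $\agent$ and an arbitrary agent $\agent'$; we must show that $\lab_4(\agent)$ is semi-complete with respect to $\agent'$. I would split into two cases according to whether $\lab_2(\agent)=\lab_2(\agent')$.

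In the first case, $\lab_2(\agent)\neq\lab_2(\agent')$. Since $\lab_3(\agent)$ and $\lab_3(\agent')$ are obtained from $\lab_2(\agent)$ and $\lab_2(\agent')$ by appending one binary mapping each, they are still distinct, i.e. $\lab_3(\agent)\neq\lab_3(\agent')$. Applying Lemma~\ref{lem:label_complete}\ref{it:different_labels} with $p=3$ gives that $\lab_4(\agent)$ is complete with respect to $\agent'$, which is in particular semi-complete with respect to $\agent'$.

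The second case, $\lab_2(\agent)=\lab_2(\agent')$, is the substantial one and uses an intermediary agent. By Lemma~\ref{lem:phase1} there exist two agents with distinct phase-$2$ labels; they cannot both equal $\lab_2(\agent)$, so one of them, call it $\agent_1$, satisfies $\lab_2(\agent_1)\neq\lab_2(\agent)=\lab_2(\agent')$. Lemma~\ref{lem:label_complete}\ref{it:different_labels} with $p=2$ then yields that $\lab_3(\agent_1)$ is complete with respect to $\agent'$. Moreover $\lab_3(\agent_1)\neq\lab_3(\agent)$, because $\lab_2(\agent_1)\neq\lab_2(\agent)$ and these are prefixes of the corresponding length-$3$ labels. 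Now Lemma~\ref{lem:completeTOsemicomplete} with $p=3$, applied to the pair $(\agent_1,\agent')$ (for which $\lab_3(\agent_1)$ is complete with respect to $\agent'$) and to the agent $\agent$ (which has a phase-$3$ label different from $\agent_1$'s), shows that $\lab_4(\agent)$ is semi-complete with respect to $\agent'$.

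The delicate point is that the second case has to cover every pair $(\agent,\agent')$ uniformly, in particular the degenerate case $\agent=\agent'$ and pairs $\agent,\agent_1$ whose initial positions are far apart in $G$; both situations are already absorbed by Lemma~\ref{lem:completeTOsemicomplete}, whose proof transmits the information from $\agent_1$ to $\agent$ along a shortest trail between $\home(\agent)$ and $\home(\agent_1)$, which has length at most $n-1$ and hence lies in $\cS_n$. I expect no additional work to be needed --- notably no monotonicity of the mappings $f_p^{\agent}$ across phases --- because in the first case we invoke Lemma~\ref{lem:label_complete}\ref{it:different_labels} directly at $p=3$ rather than carrying completeness forward from phase~$2$ to phase~$4$.
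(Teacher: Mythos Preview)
Your proof is correct and follows essentially the same approach as the paper's own proof: use Lemma~\ref{lem:label_complete}\ref{it:different_labels} directly when the labels of $\agent$ and $\agent'$ differ, and otherwise route through an intermediary agent supplied by Lemma~\ref{lem:phase1}, invoking Lemma~\ref{lem:completeTOsemicomplete} at $p=3$. The only cosmetic difference is that the paper splits cases on whether $\lab_3(\agent)=\lab_3(\agent')$ while you split on whether $\lab_2(\agent)=\lab_2(\agent')$; since equality of $\lab_2$ is implied by equality of $\lab_3$, your Case~2 simply absorbs one easy subcase that the paper handles directly, and the argument goes through unchanged.
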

\begin{proof}
If $\lab_3(\agent)\neq\lab_3(\agent')$, then by Lemma~\ref{lem:label_complete}\ref{it:different_labels}, $\lab_4(\agent)$ is complete with respect to $\agent'$, and therefore it is semi-complete with respect to $\agent'$.
Hence, suppose that $\lab_3(\agent)=\lab_3(\agent')$, which implies $\lab_2(\agent)=\lab_2(\agent')$.
Hence, Lemma~\ref{lem:phase1} implies that there exists an agent $\agent''$ such that $\lab_2(\agent'')\neq\lab_2(\agent)$.
Due to Lemma~\ref{lem:label_complete}\ref{it:different_labels}, $\lab_3(\agent'')$ is complete with respect to $\agent'$.
Note that $\lab_2(\agent)\neq\lab_2(\agent'')$ implies $\lab_3(\agent)\neq\lab_3(\agent'')$.
Lemma~\ref{lem:completeTOsemicomplete} completes the proof.
\end{proof}

Our final lemma says that any agent can correctly reconstruct complete identifiers of all other agents.
This is due to the fact that every agent has filled initial positions of all agents to depth $2(n-1)$ in its view.

\begin{lemma} \label{lem:complete_identifiers}
Let $\cA=\{(I_1,T_1),\ldots,(I_a,T_a)\}$ be the set computed in Procedure $\algorithmCL$ executed by an $\agent$.
Then, for every agent $\agent'$, there exists an index $i\in\{1,\ldots,a\}$ such that $I_i$ is the complete identifier of $\agent'$ and $\cR(\home(\agent),T_i)$ leads from $\home(\agent)$ to $\home(\agent')$ in $G$.
\end{lemma}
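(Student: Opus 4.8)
The plan is to fix an arbitrary agent $\agent'$ and exhibit an explicit index $i$ coming from a single iteration of the \textbf{for each} loop of Procedure $\algorithmCL$. Since $G$ is connected and has at most $n$ nodes, there is a route $R$ from $\home(\agent)$ to $\home(\agent')$ with at most $n-1$ edges. Let $x$ be the node at the end of the trail $\cT(R)$ from the root of $\view^{\depth}(\home(\agent))$; then $x$ lies at depth at most $n-1$, so $x$ is a node of $\view^{n-1}(\home(\agent))$, and $x$ corresponds to the node $\home(\agent')$ of $G$ (if $\agent'=\agent$, take $x$ to be the root, reached by the empty trail). By Lemma~\ref{lem:phase3}, $\lab_4(\agent)$ is semi-complete with respect to $\agent'$, and since $x$ is at depth at most $2(n-1)$ and corresponds to an initial position of $\agent'$, we get $f_4^{\agent}(x)=1$. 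Hence $\algorithmCL$ executes the loop iteration for this node $x$.

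In that iteration, writing $v$ for the node of $G$ corresponding to $x$, we have $v=\home(\agent')$, and the trail $T$ computed in line~\lstCLComputeT\ satisfies $\routeend(\cR(\home(\agent),T))=v=\home(\agent')$, so $\cR(\home(\agent),T)$ leads from $\home(\agent)$ to $\home(\agent')$ in $G$. It then remains to identify the pair $(\code(\view^{n-1}(v)),f')$ built in lines~\lstCLRestrictF--\lstCLAddToA, where $f'$ is $f_4^{\agent}$ restricted to the copy of $\view^{n-1}(v)$ that sits inside $\view^{\depth}(\home(\agent))$ as the depth-$(n-1)$ subtree rooted at $x$, with the complete identifier of $\agent'$. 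The code coordinate matches immediately because $v=\home(\agent')$. For the binary-mapping coordinate, observe that every node $y$ of that subtree is a node of $\view^{\depth}(\home(\agent))$ at depth at most $(n-1)+(n-1)=2(n-1)$; Lemma~\ref{lem:no_false_noting} (applied to the output $f_4^{\agent}$ of the last call to $\algorithmUL$) gives that $f_4^{\agent}(y)=1$ implies that $y$ corresponds to an initial position of an agent, while Lemma~\ref{lem:phase3} applied to every agent gives the converse for nodes at depth at most $2(n-1)$. Thus $f'(y)=1$ if and only if $y$ corresponds to an initial position of an agent; and since the node of $G$ represented by $y$ is the same whether $y$ is read inside $\view^{\depth}(\home(\agent))$ or inside $\view^{n-1}(v)=\view^{n-1}(\home(\agent'))$, this is exactly the statement that $f'$ is the binary mapping of the enhanced view from $\home(\agent')$ restricted to $\view^{n-1}(\home(\agent'))$. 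Hence $(\code(\view^{n-1}(v)),f')$ is the complete identifier of $\agent'$.

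Finally, after this iteration the pair $\big((\code(\view^{n-1}(v)),f'),T\big)$ lies in $\cA$: if the test in line~\lstCLShouldAdd\ holds, the pair is inserted in line~\lstCLAddToA; if it fails, by the wording of line~\lstCLShouldAdd\ a pair with these same two coordinates is already present; and $\cA$ is only enlarged afterwards. This pair furnishes the required index $i$, with $I_i$ the complete identifier of $\agent'$ and $\cR(\home(\agent),T_i)$ leading from $\home(\agent)$ to $\home(\agent')$. I expect the main obstacle to be not any single deduction but the depth bookkeeping that makes everything fit together --- keeping $x$ at depth at most $n-1$ so that the depth-$(n-1)$ subtree below it stays inside $\view^{\depth}(\home(\agent))$ and inside the depth-$2(n-1)$ region on which $f_4^{\agent}$ is known to be precisely the indicator of initial positions --- together with the routine but needed verification that the relation ``corresponds to the same node of $G$'' propagates consistently through the nested subtrees of views invoked above.
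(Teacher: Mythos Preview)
Your argument is essentially the same as the paper's, relying on Lemmas~\ref{lem:no_false_noting} and~\ref{lem:phase3} and the depth bookkeeping you outline; the paper just organizes it by first proving the identifier claim for \emph{every} iteration of the loop and then specializing to a node corresponding to $\home(\agent')$, whereas you fix $\agent'$ first.

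There is one small slip in your final paragraph. You assert that if the test in line~\lstCLShouldAdd\ fails then ``a pair with these same two coordinates is already present''. That is not what line~\lstCLShouldAdd\ says: its condition ranges over all pairs $(I,T)$ in $\cA$, so failure only guarantees that some pair $(I,T'')$ with the same \emph{first} coordinate $I=(\code(\view^{n-1}(v)),f')$ is already in $\cA$; the trail $T''$ need not equal the $T$ computed in your iteration, and hence you have not shown that $T''$ leads to $\home(\agent')$. The fix is easy and uses nothing beyond what you have already proved: that earlier pair was inserted during an iteration for some node $x''$ at depth at most $n-1$ with $f_4^{\agent}(x'')=1$, and your own argument applied verbatim to $x''$ (it used nothing special about your chosen $x$) shows that $I$ is the complete identifier of the agent whose initial position is the corresponding $v''$ and that $T''$ leads to $v''$. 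Since under condition~$\conditionEC$ complete identifiers are distinct (Corollary~\ref{cor:No2}), $v''=\home(\agent')$, giving the required index. The paper's ordering---prove the claim for all iterations first, then specialize---makes this step implicit rather than explicit.
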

\begin{proof}
Let $\lab_4(\agent)=(\code(\view^{\depth}(\home(\agent))),f_1^{\agent},\ldots,f_4^{\agent})$ be the label of $\agent$ obtained at the end of phase $3$.
Initially $\cA$ is set to be empty in line~$\lstCLInitA$ of Procedure $\algorithmCL$.
Consider the nodes $v$ and $x$ and the function $f'$ from line~$\lstCLRestrictF$ of Procedure $\algorithmCL$.
By Lemma~\ref{lem:no_false_noting}, $v$ is an initial position of an agent $\agent''$, because $f'(x)=1$.
Thus, in view of Lemma~\ref{lem:phase3}, $(\code(\view^{n-1}(v)),f')$ is the complete identifier of $\agent''$.
Due to lines $\lstCLComputeT$-$\lstCLAddToA$ of Procedure $\algorithmCL$, $((\code(\view^{n-1}(v)),f'),T)\in\cA$ for some trail $T$ such that  $\routeend(\cR(\home(\agent),T))=v=\home(\agent'')$.
By  Lemma~\ref{lem:phase3}, for any agent $\agent'$, there is a node in $\view^{n-1}(\home(\agent))$ corresponding to $\home(\agent')$.
This implies that, for any agent $\agent'$, the set $\cA$ contains (at the end of the `for' loop in lines $\lstCLForStarts$-$\lstCLForEnds$ of Procedure $\algorithmCL$) a pair $(I_i,T_i)$, $i\in\{1,\ldots,a\}$, such that $I_i$ is the complete identifier of $\agent'$ and $\cR(\home(\agent),T_i)$ leads from $\home(\agent)$ to $\home(\agent')$ in $G$.
\end{proof}

\begin{theorem} \label{thm:LE_part2}
If the condition $\conditionEC$ is satisfied for an initial configuration, then Algorithm $\algorithmLE$ correctly elects a leader regardless of the actions of the adversary.
\end{theorem}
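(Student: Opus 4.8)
The plan is to derive the theorem essentially as a corollary of Lemma~\ref{lem:complete_identifiers} together with the first conjunct of condition $\conditionEC$. First I would record that Algorithm~$\algorithmLE$ is well defined and halts on every agent: each agent performs the bounded depth-$\depth$ traversal of Procedure~$\algorithmInit$, then three phases of $|\cP^n|$ stages, and in each stage follows one finite route; the trails $T\in\cS_n$ and $\hist(\cdot,\cdot)$ used in lines~$\lstLEFirstCaseRoute$ and~$\lstLESecondCaseRoute$ of $\algorithmLE$ are of bounded length and, by Corollary~\ref{cor:trail_verification} and Lemma~\ref{lem:history_simulation}, are computable by the executing agent from its own truncated view and label; the recursive calls inside Function~$\algorithmUL$ terminate because each passed memory state $M_i$ is strictly shorter than $M$. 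Hence every agent eventually reaches and executes Procedure~$\algorithmCL$ and makes a decision, and all the guarantees below hold regardless of the adversary, since the lemmas they rest on are themselves adversary-independent (the guaranteed meetings come from the tunnel construction, Proposition~\ref{prop:tunnel} and Lemma~\ref{lem:guaranteed_T_confirmation}, and no agent ever marks a wrong initial position, Lemma~\ref{lem:no_false_noting}).

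Next I would analyze the set $\cA=\{(I_1,T_1),\dots,(I_a,T_a)\}$ built by an arbitrary executing agent $\agent$ in lines~$\lstCLForStarts$--$\lstCLForEnds$ of $\algorithmCL$. On the one hand, every pair added in line~$\lstCLAddToA$ has first coordinate of the form $(\code(\view^{n-1}(v)),f')$ with $f'(x)=1$ for the node $x$ of $\view^{n-1}(\home(\agent))$ under consideration; by Lemma~\ref{lem:no_false_noting} the node $v$ of $G$ corresponding to $x$ is the initial position of some agent, and since $x$ has depth at most $n-1$ the depth-$(n-1)$ subtree rooted at $x$ lies entirely within depth $2(n-1)$ of $\view^{\depth}(\home(\agent))$, so by Lemma~\ref{lem:phase3} (semi-completeness of $\lab_4(\agent)$ up to depth $2(n-1)$) the restriction $f'$ marks exactly the initial positions in that subtree; hence $(\code(\view^{n-1}(v)),f')$ is precisely the complete identifier of the agent occupying $v$. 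On the other hand, by Lemma~\ref{lem:complete_identifiers}, for \emph{every} agent $\agent'$ some pair of $\cA$ has first coordinate equal to the complete identifier of $\agent'$ and second coordinate a trail $T$ with $\routeend(\cR(\home(\agent),T))=\home(\agent')$. Therefore the set of first coordinates appearing in $\cA$ equals, for every executing agent, the same intrinsic set $\mathcal{I}$ of complete identifiers of the agents of the configuration; moreover the test in line~$\lstCLShouldAdd$ guarantees that no first coordinate repeats in $\cA$.

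Now I would invoke $\conditionEC$. Its first conjunct states that the enhanced views of all agents are pairwise different, so by Corollary~\ref{cor:No2} the complete identifiers of all agents are pairwise different; thus $\mathcal{I}$ is in bijection with the set of agents and its lexicographic minimum $I^{*}$ is attained by exactly one agent $\agent^{*}$. Since $\mathcal{I}$ is the same for all agents, every agent computes the same $I^{*}$ in line~$\lstCLFindMin$. For an arbitrary agent $\agent$ the pair of $\cA$ with first coordinate $I^{*}$ is unique (line~$\lstCLShouldAdd$), and by Lemma~\ref{lem:complete_identifiers} applied to $\agent'=\agent^{*}$ its second coordinate $T$ satisfies $\routeend(\cR(\home(\agent),T))=\home(\agent^{*})$; hence in line~$\lstCLElectLeader$ agent $\agent$ elects the agent occupying $\home(\agent^{*})$, namely $\agent^{*}$. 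In particular, applying this with $\agent=\agent^{*}$, the trail $T$ used by $\agent^{*}$ leads from $\home(\agent^{*})$ back to $\home(\agent^{*})$, so $\agent^{*}$ elects itself and sets its Boolean variable to $1$; every other agent outputs the sequence of port numbers $T$, which is a route from its own initial position to $\home(\agent^{*})$. This is exactly the output required of a leader-election algorithm, so the theorem follows.

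Finally, as to where the difficulty lies: almost all of it is already packaged in the preceding lemmas --- most essentially Lemma~\ref{lem:phase1}, which converts the topological/port-labelling asymmetry promised by the second conjunct of $\conditionEC$ (a non-uniform palindrome, in the case where all views coincide) into a genuine difference of phase-$2$ labels, and Lemma~\ref{lem:completeTOsemicomplete}/Lemma~\ref{lem:phase3}, which route completeness through an agent with a distinct label so that even two agents sharing a label learn all initial positions up to depth $2(n-1)$. Inside the present proof the only point that needs care is the bookkeeping of depths: Procedure~$\algorithmCL$ only ever consults depth-$(n-1)$ subtrees rooted at nodes of depth at most $n-1$, and since $(n-1)+(n-1)=2(n-1)$ this stays within the range where $\lab_4$ is provably correct, so semi-completeness (rather than full completeness) is exactly enough both to reconstruct every complete identifier correctly and to make the elected leader unique and globally agreed upon.
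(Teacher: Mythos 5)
Your proof is correct and follows essentially the same route as the paper's: every agent uses Lemma~\ref{lem:complete_identifiers}, the first conjunct of $\conditionEC$ and Corollary~\ref{cor:No2} to identify and elect the unique agent with lexicographically smallest complete identifier. The extra bookkeeping you supply (termination, and the soundness claim that every entry of $\cA$ is a genuine complete identifier) is in the paper handled implicitly inside the proof of Lemma~\ref{lem:complete_identifiers}, so your write-up is just a slightly more explicit version of the same argument.
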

\begin{proof}
Let $\agent^*$ be the agent whose complete identifier $I^*$ is lexicographically smallest among the complete identifiers of all agents.
By condition $\conditionEC$ (more precisely by its part saying that all enhanced views are different) and by Corollary~\ref{cor:No2}, agent $\agent^*$ is unique.

Each agent $\agent$ computes $\lab_4(\agent)$ as a result of the execution of Algorithm $\algorithmLE$.
By Lemma~\ref{lem:complete_identifiers}, $\agent$ computes (in lines $\lstCLInitA$-$\lstCLForEnds$ of Procedure $\algorithmCL$), for each agent $\agent'$, the complete identifier of $\agent'$ and a trail $T$ such that the route $\cR(\home(\agent),T)$ of length at most $n-1$ leads from $\home(\agent)$ to $\home(\agent')$ in $G$.
By Corollary~\ref{cor:No2}, the complete identifiers uniquely distinguish the agents.
Using the lexicographic order $\preceq$ on the set of all complete identifiers, the agent $\agent$ finds in line $\lstCLFindMin$ of Procedure $\algorithmCL$ the complete identifier $I=(\view^{n-1}(u),f)$ such that $(I,T)\in\cA$ and $I\preceq I'$ for each $I'$ such that $(I',T')\in\cA$ for some trail $T'$.
By definition, $I=I^*$.
Then, $\agent$ decides in line~$\lstCLElectLeader$ of Procedure $\algorithmCL$ that the agent with the initial position $\routeend(\cR(\home(\agent),T))$ is the leader.
This is agent $\agent^*$.
\end{proof}

Theorem~\ref{thm:LE_part2}, together with Proposition~\ref{pro:negative1} implies our main result which is Theorem~\ref{main} from Section~\ref{sec:feasibility}.

\section{Conclusion} \label{sec:conclusions}

We characterized all initial configurations of agents for which leader election is possible and we constructed a universal algorithm electing a leader
for all such configurations, assuming that agents know an upper bound on the size of the graph. We observed that the latter assumption cannot be removed.
In this paper we focused on the feasibility of leader election under a very harsh scenario in which the adversary  
controls the speed and the way in which agents move along their chosen routes. This adversarial scenario captures the totally asynchronous nature of mobile agents.

While we gave a complete solution to the problem of feasibility of leader election, we did not try to optimize the efficiency of the algorithm, e.g., in terms of
its cost, i.e., of the total or of the maximum number of edge traversals performed by the mobile agents. 
In fact, any kind of such optimization appears to be quite challenging.
It is clear that in order to elect a leader agents have to meet. Already the much simpler  problem of optimizing the cost of meeting of two agents in our asynchronous model
is open, both when agents have different labels \cite{CLP} and when they are anonymous, as in our present scenario \cite{GP}. In particular, in the latter
paper the authors asked if rendezvous of two agents can be accomplished (whenever it is feasible) at a cost  polynomial in the size of the graph.

\bibliographystyle{plain}


\end{document}